\documentclass[12pt,draftcls,onecolumn]{IEEEtran}
\usepackage{latexsym}
\usepackage{graphicx}
\usepackage{array}
\usepackage{amsmath}
\interdisplaylinepenalty=2500
\usepackage{epsfig}

\newtheorem{thm}{Theorem}
\newtheorem{cor}{Corollary}[thm]

\newtheorem{prop}{Proposition}
\newtheorem{defn}{Definition}
\newtheorem{con}{Conjecture}
\newcommand{\bxi} {\boldsymbol{\xi}}
\newcommand{\btheta} {\boldsymbol{\theta}}
\newcommand{\bx} {\boldsymbol{x}}
\newcommand{\bs} {\boldsymbol{s}}
\newcommand{\br} {\boldsymbol{r}}
\newcommand{\bh} {\boldsymbol{h}}
\newcommand{\bw} {\boldsymbol{w}}
\newcommand{\sH}{\mathcal{H}}
\newcommand{\sC}{\mathcal{C}}

\begin{document}
\title{On Convexity of Error Rates in Digital Communications}

\author{Sergey Loyka, Victoria Kostina, Francois Gagnon

\vspace*{-1\baselineskip}

\thanks{This paper was presented in part at the International Zurich Seminar on Communications (IZS), March 3-5, 2010, Zurich, Switzerland \cite{Loyka-10-Zurich}, and the 2010 IEEE International Symposium on Information Theory, Austin, USA \cite{Loyka-10-Austin}.}
\thanks{S. Loyka is with the School of Electrical Engineering and Computer Science, University of Ottawa, 161 Louis Pasteur, Ottawa, Ontario, Canada, K1N 6N5, E-mail: sergey.loyka@ieee.org.}
\thanks{V. Kostina is with the Department of Electrical Engineering, Princeton University, NJ, 08544, USA (e-mail: vkostina@princeton.edu).}
\thanks{F. Gagnon is with the Department of Electrical Engineering, Ecole de Technologie Supérieure, Montreal QC H3C 1K3, Canada (e-mail: francois. gagnon@etsmtl.ca).}
}

\maketitle

\vspace*{-1\baselineskip}
\begin{abstract}
\vspace*{-0.5\baselineskip}
Convexity properties of error rates of a class of decoders, including the ML/min-distance one as a special case, are studied for arbitrary constellations, bit mapping and coding. Earlier results obtained for the AWGN channel are extended to a wide class of noise densities, including unimodal and  spherically-invariant noise.
Under these broad conditions, symbol and bit error rates are shown to be convex functions of the SNR in the high-SNR regime with an explicitly-determined threshold, which depends only on the constellation dimensionality and minimum distance, thus enabling an application of the powerful tools of convex optimization to such digital communication systems in a rigorous way. It is the decreasing nature of the noise power density around the decision region boundaries that insures the convexity of symbol error rates in the general case. The known high/low SNR bounds of the convexity/concavity regions are tightened and no further improvement is shown to be possible in general.  The high SNR bound fits closely into the channel coding theorem: all codes, including capacity-achieving ones, whose decision regions include the hardened noise spheres (from the noise sphere hardening argument in the channel coding theorem) satisfies this high SNR requirement and thus has convex error rates in both SNR and noise power. We conjecture that \textit{all} capacity-achieving codes have convex error rates. Convexity properties in signal amplitude and noise power are also investigated. Some applications of the results are discussed. In particular, it is shown that \textit{fading is convexity-preserving and is never good in low dimensions under spherically-invariant noise}, which may also include any linear diversity combining.

\end{abstract}

\vspace*{-0.5\baselineskip}
\begin{keywords}
\vspace*{-0.5\baselineskip}
Convexity/concavity, error rate, BER, pairwise probability of error, maximum-likelihood decoding, unimodal noise, spherically-invariant noise.
\end{keywords}

\section{Introduction}
\label{sec:introduction}

Convexity properties play a well-known and important role in optimization problems \cite{Boyd}\cite{Ben-Tal}: it is essentially the class of convex problems that are solvable numerically. Furthermore, significant analytical insights are available for this class, which cannot be said about the general class of nonlinear problems \cite{Ben-Tal}.

In the world of digital communications, various types of error rates often serve  as objective or constraint functions during optimization \cite{Wozencraft}-\cite{Yeh}. Therefore, their convexity properties are of considerable importance. While, in some simple scenarios, the convexity can be established by inspection or differentiation of corresponding closed-form error probability expressions  \cite{Wozencraft}-\cite{Yeh}, this approach is not feasible not only in the general case, but also in most cases of practical importance (e.g. modulation combined with coding etc.), since such expressions are either not known or prohibitively complex \cite{Sason}.

A general approach (i.e. not relying on particular closed-form error probability expressions) to convexity analysis in binary detection problems has been developed in \cite{Azizoglu}. This approach has been later extended to arbitrary multidimensional constellations (which can also include coding) in \cite{Loyka-07}\cite{Loyka-10}. In particular, it has been shown that the symbol error rate (SER) of the maximum-likelihood (ML) decoder operating in the AWGN channel is always convex in SNR in dimensions 1 and 2, and also in higher dimensions at high SNR and concave at low SNR (with explicitly specified boundaries of the high/low SNR regimes), for any modulation and coding. Bit error rate (BER) has been shown to be convex in the high SNR regime as well. These results have been also extended to fading channels. In particular, it was shown that "fading is never good in low dimensions". In a related but independent line of study, a log-concavity property of the SER in SNR [dB] for the multi-dimensional uniform square-grid constellations (M-QAM), including fading and diversity reception, has been established in \cite{Conti} and a number of new local SER bounds have been obtained based on it.

In the present paper, the earlier results in \cite{Loyka-10} are expanded in several directions, including an extension to a class of decoders and a wide class of noise densities (significantly different from Gaussian), as well as tightening the high/low SNR bounds of the convexity/concavity regions reported in \cite{Loyka-10}.

While the utility of the Gaussian noise model is well-known, there are a number of scenarios where it is not adequate, most notably an impulsive noise \cite{Middleton-77}\cite{Middleton-99}\cite{Yao-03}\cite{Conte}-\cite{Haengi} with tails much heavier than Gaussian. To address this, an important and natural generalization of the Gaussian random process has been developed, namely, the spherically-invariant random process (SIRP). It  has found a wide range of applications in communications, information-theoretic and signal processing areas \cite{Vershik}-\cite{Conte-2}. This class of processes, while having some of the important properties of the Gaussian process, significantly extends modeling flexibility and thus can be applied to a wider range of phenomena such as impulsive noise, radar clutter, radio propagation disturbances, bandlimited speech waveforms \cite{Yao-03}\cite{Conte}. While the marginal PDF of a SIRP may be significantly different from Gaussian, this class of processes shares a number of important theoretical properties with the Gaussian process: it is closed under linear transformations, it is the most general class of processes for which the optimal MMSE estimator is linear, and the optimal (ML) decoding is still the minimum distance one (this may also include fading and correlated noise) \cite{Vershik}-\cite{Conte-2}. The present paper will extend this list to include the convexity properties of SER under a SIRP noise, which turn out to be similar to those in the AWGN channel (see section \ref{sec3b} for further details). In addition, a general class of unimodal noise power densities will be considered and conditions on an arbitrary noise density will be formulated under which the SER is convex. In particular, the SER is convex in the SNR provided that the noise power density is decreasing around the decision region boundaries, regardless of its behavior elsewhere. It is convex at high SNR under a unimodal or a SIRP noise, and its is always convex (for any SNR) in low dimensions under SIRP noise. Similar results are obtained for convexity in signal amplitude and noise power (which are important for an equalizer design and a jammer optimization).  All the results formulated for an i.i.d. noise are extended to the case of correlated noise as well.

In general, convexity of the SER does not say anything about convexity of the BER, since the latter depends on pairwise probabilities of error (PEP) and not on the SER \cite{Lassing}\cite{Benedetto}. Since the BER is an important performance indicator and thus appears as an objective in many optimization problems, its convexity properties are also studied here using the generic geometrical framework developed for the SER analysis. The setting is general enough so that the results apply to arbitrary constellations, bit mapping and coding. It turns out that the BER is convex at high SNR for a wide class of noise distributions and a class of decoders, where the high SNR boundary is determined by the constellation minimum distance and dimensionality, all other its details being irrelevant.

While the convexity of the PEP and the BER has been established at high SNR, the question remains: how relevant this high SNR regime is, i.e. does it correspond to realistic(practical) SNR values? This has significant impact on the result's importance and its utility when solving practically-relevant optimization problems. In this paper, we provide a positive answer: the high SNR is almost the same as that required by the channel coding theorem so that any code, including capacity-achieving ones, whose decision regions include the hardened noise spheres (from the sphere packing/hardening arguments in the channel coding theorem \cite{Wozencraft}\cite{Shannon}), are in this range. In other words, the boundary of the high SNR regime is closely matched to that in the channel coding theorem so that arbitrary low probability of error implies its convexity and hence power/time sharing does not help to reduce it further. This complements the well-known result that the capacity cannot be increased by power/time sharing. Any practical code whose decision regions include the hardened noise spheres has also convex SER, PEP and BER. This opens up an opportunity to apply numerous and powerful tools of convex optimization to design of systems with such codes on a rigorous basis.

The main contributions are summarized as follows:

$\bullet$ \ New tighter high/low SNR bounds of the convexity/concavity regions are obtained and it is demonstrated that no further improvement is possible in the general case.

$\bullet$ \ While the earlier results in \cite{Loyka-10} were established for the ML (min-distance) decoders only, the same results are shown to apply to any decoder with center-convex decision regions (see section \ref{sec3a} for details), of which the min-distance one is a special case.

$\bullet$ \ While the earlier results in \cite{Loyka-10} were established for the AWGN channel only, the present paper considers a wide class of noise densities of  which Gaussian is a special case (e.g. generic unimodal, SIRP etc.).  In particular, the SER  and the BER are shown to be convex at high SNR for this wider class as well; the SER turns out to be convex in low dimensions not only for the Gaussian, but also for an arbitrary SIRP noise. The constellation dimensionality and minimum distance appear as the main factors affecting the convexity properties.

$\bullet$ \ The boundary of the high SNR regime (where the SER/BER convexity is ensured) is shown to be closely linked to the channel coding theorem, so that error rates of capacity-achieving codes (with vanishingly-small probability of error) are convex.

$\bullet$ \ Any flat-fading and any linear diversity combining are shown to be convexity-preserving, so that fading is never good in low dimensions under spherically-invariant noise, including linear diversity combining.

Tables 1, 2 and 3 summarize the results for convexity properties in the SNR/signal power, signal amplitude and noise power. Unless otherwise indicated, a non-fading channel, an arbitrary constellation and a decoder with center-convex decision regions are assumed.

\begin{table}[htbp]
\caption{Convexity properties of the SER/PEP/BER in the SNR/signal power.}
\begin{center}
\begin{tabular}{|p{3.5in}|p{2in}|}
\hline
\textbf{Convexity/concavity: scenario} & \textbf{Where} \\
\hline
AWGN: SER is convex at high and concave at low SNR; always convex in low dimensions ($n\le 2)$. & Theorem \ref{thm SER SNR new}, \eqref{eq3a-1}, \eqref{eq3a-2}; Corollary \ref{cor SER equal Omega_i}\\
\hline
Arbitrary noise density: SER is convex if the power density is non-increasing at the boundaries of decision regions & Theorem \ref{thm SER SNR general}, \eqref{eq3a-5}, \eqref{eq3a-6}\\
\hline
Unimodal noise: SER is convex at high SNR; always convex if the noise power density is non-increasing. & Corollary \ref{cor SER SNR unimodal}, \ref{cor SER SNR decreas.}\\
\hline
SIRP noise: SER is convex at high and concave at low SNR; always convex if $n\le 2$.& Theorem \ref{thm SIRP SER convexity}, \eqref{eq3b-3}, \eqref{eq3b-4}\\
\hline
AWGN: BER/PEP are convex at high SNR & See \cite{Loyka-10}, \cite{Loyka-10-Zurich}, \cite{Loyka-10-Austin} and Theorem \ref{thm BER old}\\
\hline
AWGN: BER is convex for capacity-approaching codes & \eqref{eq4-1}-\eqref{eq4-4}; Conjecture 1 \\
\hline
SIRP noise: BER/PEP are convex at high SNR & Theorem \ref{thm BER SIRP}, \eqref{eq4-5}, \eqref{eq4-6}\\
\hline
Fading + SIRP noise (AWGN is a special case): fading is never good in low dimensions, including linear combining  & Propositions \ref{prop fading no good}, \ref{prop SER combining}\\
\hline
Fading channel: any flat-fading and any linear combining are convexity preserving (under any noise) & Proposition \ref{prop.3C.1}, \eqref{eq3c-7}; Propositions \ref{prop SER fading}, \ref{prop SER combining} \\
\hline
\end{tabular}
\label{tab1}
\end{center}
\end{table}

\begin{table}[htbp]
\caption{Convexity properties of the SER in signal amplitude}
\begin{center}
\begin{tabular}{|p{3.5in}|p{2in}|}
\hline
\textbf{Convexity/concavity: scenario} & \textbf{Where} \\
\hline
AWGN: SER in convex at high and concave at low SNR; always convex if $n=1$.& Theorem \ref{thm SER A new}, Corollary \ref{cor SER A new}. \\
\hline
Arbitrary noise density: SER is convex if the noise amplitude density is non-increasing at the boundaries of decision regions & Theorem \ref{thm SER A general}\\
\hline
Unimodal noise: SER is convex at high SNR; always convex if the noise amplitude density is non-increasing. & \eqref{eq3d-2}\\
\hline
SIRP noise: SER is convex at high and concave at low SNR; always convex if $n=1$. & Theorem \ref{thm SIRP conv A}; Corollary \ref{cor T SIRP conv A}. \\
\hline
\end{tabular}
\label{tab2}
\end{center}
\end{table}

\begin{table}[htbp]
\caption{Convexity properties of the SER/PEP/BER in noise power.}
\begin{center}
\begin{tabular}{|p{3.5in}|p{2in}|}
\hline
\textbf{Convexity/concavity: scenario}& \textbf{Where} \\
\hline
AWGN: SER in convex at high(low) and concave at low(high) SNR(noise power).& Theorem \ref{thm SER Pn new}, \eqref{eq5-4}, \eqref{eq5-5}; Corollary \ref{cor SER Pn new}. \\
\hline
SIRP noise: SER is convex at high and concave at low SNR.&
Theorem \ref{thm SER Pn SIRP}.\\
\hline
AWGN: PEP in convex at high and low SNR.& Theorem \ref{thm PEP/BER conv Pn}, \eqref{eq5-10}, \eqref{eq5-11}.\\
\hline
AWGN: BER in convex at high SNR.& Corollary \ref{cor BER conv Pn}, \eqref{eq5-12}. \\
\hline
SIRP noise: PEP/BER are convex at high SNR & Corollary \ref{cor PEP/BER Pn SIRP}. \\
\hline
\end{tabular}
\label{tab3}
\end{center}
\end{table}

\section{System Model}

The standard baseband discrete-time system model in an additive noise channel, which includes matched filtering and sampling, is
\begin{equation}
\label{eq2-1}
{\rm {\bf r}}={\rm {\bf s}}+ \bxi
\end{equation}
where ${\rm {\bf s}}$ and ${\rm {\bf r}}$ are $n$-dimensional vectors representing transmitted and received symbols respectively, ${\rm {\bf s}}\in \left\{ {{\rm {\bf s}}_1 ,{\rm {\bf s}}_2 ,...,{\rm {\bf s}}_M } \right\}$, a set of $M$ constellation points, $\bxi$ is an additive white noise. Several noise models will be considered, including the AWGN one, in which case $\bxi \sim {\cal N}({\rm {\bf 0}},\sigma _0^2 {\rm {\bf I}})$, and the corresponding probability density function (PDF) is
\begin{equation}
\label{eq2-2}
f_\xi ({\bf x})=\left( {2\pi \sigma _0^2 } \right)^{-n/2}e^{-|{\bf x}|^2 / 2\sigma _0^2}
\end{equation}
where $\sigma _0^2 $ is the noise variance per dimension, and $n$ is the constellation dimensionality\footnote{While we consider here a real-valued model, all the results extend to the complex-valued case as well by treating real and imaginary parts as two independent reals, so that $n$-D complex constellation corresponds to $2n$-D real one.}; lower case bold letters denote vectors, bold capitals denote matrices, $x_i$ denotes i-th component of ${\rm {\bf x}}$, $\left| {\rm {\bf x}} \right|$ denotes L$_{2}$ norm of ${\rm {\bf x}}$, $\left| {\rm {\bf x}} \right|=\sqrt {{\rm {\bf x}}^T{\rm {\bf x}}} $, where the superscript $T$ denotes transpose, ${\rm {\bf x}}_i $ denotes i-th vector, $|\bf A|$ denotes the determinant of matrix $\bf A$. The average (over the constellation points) SNR is defined as $\gamma =1/\sigma _0^2 $, which implies the appropriate normalization, $\textstyle{1 \over M}\sum\nolimits_{i=1}^M {\left| {{\rm {\bf s}}_i } \right|^2} =1$, unless indicated otherwise.

More general and distinctly different noise distributions will be considered as well, which include the SIRP and unimodal noise, see section \ref{sec3a} and \ref{sec3b} for further details.

In addition to the maximum likelihood decoder (demodulator/detector), which is equivalent to the minimum distance one in the AWGN and some other channels \cite{Goldman}\cite{Conte},
\[
{\rm {\bf \hat {s}}}=\arg \min _{{\rm {\bf s}}_i } \left| {{\rm {\bf r}}-{\rm {\bf s}}_i } \right|,
\]
a general class of decoders with center-convex decision regions (see Definition 1 and Fig. 1) will be considered, for which the min-distance one is a special case. The probability of symbol error $P_{ei} $ (also known as symbol error rate, SER) given that ${\rm {\bf s}}={\rm {\bf s}}_i $ was transmitted is
\begin{equation}
P_{ei} =\Pr \left[ {\left. {{\rm {\bf \hat {s}}}\ne {\rm {\bf s}}_i } \right|{\rm {\bf s}}={\rm {\bf s}}_i } \right]=1-P_{ci}
\end{equation}
where $P_{ci} $ is the probability of correct decision, and the SER averaged over all constellation points is
\begin{equation}
P_e =\sum\nolimits_{i=1}^M {P_{ei} \Pr \left[ {{\rm {\bf s}}={\rm {\bf s}}_i } \right]} =1-P_c
\end{equation}
Clearly, $P_{ei} $ and $P_{ci} $ possess the opposite convexity properties. $P_{ei} $ can be expressed as
\begin{equation}
\label{eq2-5}
P_{ei} =1-\int_{\Omega _i } {f_\xi ({\rm {\bf x}})} d{\rm {\bf x}}
\end{equation}
where $\Omega _i $ is the decision region (Voronoi region)\footnote{${\bf \hat {s}} = {\bf s}_i$ if ${\bf r} \in \Omega _i$. If ${\bf r} \notin \Omega _i \forall i$, an error is declared.  }, and ${\rm {\bf s}}_i $ corresponds to ${\rm {\bf x}}=0$, i.e. the origin is shifted for convenience to the constellation point ${\rm {\bf s}}_i $. For the min-distance decoder, $\Omega _i $ can be expressed as a convex polyhedron \cite{Boyd},
\begin{equation}
\label{eq2-6}
\Omega _i =\left\{ {\left. {\rm {\bf x}} \right|{\rm {\bf Ax}}\le {\rm {\bf b}}} \right\},\mbox{ }{\rm {\bf a}}_j^T =\frac{({\rm {\bf s}}_j -{\rm {\bf s}}_i )}{\left| {{\rm {\bf s}}_j -{\rm {\bf s}}_i } \right|},\mbox{ }b_j =\frac{1}{2}\left| {{\rm {\bf s}}_j -{\rm {\bf s}}_i } \right|
\end{equation}
where ${\rm {\bf a}}_j^T $ denotes j-th row of ${\rm {\bf A}}$, and the inequality in \eqref{eq2-6} is applied component-wise.

Another important performance indicator is the pairwise error probability (PEP) i.e. a probability $\Pr \left\{ {{\rm {\bf s}}_i \to {\rm {\bf s}}_j } \right\}=\Pr \left[ {\left. {{\rm {\bf \hat {s}}}={\rm {\bf s}}_j } \right|{\rm {\bf s}}={\rm {\bf s}}_i } \right]$ to decide in favor of ${\rm {\bf s}}_j $ given that ${\rm {\bf s}}_i $, $i\ne j$, was transmitted, which can be expressed as
\begin{equation}
\Pr \{ {\bf s}_i \to {\bf s}_j \} = \int_{\Omega_j } {f_\xi ({\rm {\bf x}})} d{\rm {\bf x}}
\end{equation}
where $\Omega _j $ is the decision region for ${\rm {\bf s}}_j $ when the reference frame is centered at ${\rm {\bf s}}_i $. The SER can now be expressed as
\begin{equation}
P_{ei} =\sum\nolimits_{j\ne i} {\Pr \left\{ {{\rm {\bf s}}_i \to {\rm {\bf s}}_j } \right\}}
\end{equation}
and the BER can be expressed as a positive linear combination of PEPs \cite{Lassing}
\begin{equation}
\label{eq2-9}
\mbox{BER}=\sum\limits_{i=1}^M {\sum\limits_{j\ne i} {\frac{h_{ij} }{\log _2 M}\Pr \left\{ {{\rm {\bf s}}={\rm {\bf s}}_i } \right\}\Pr \left\{ {{\rm {\bf s}}_i \to {\rm {\bf s}}_j } \right\}} }
\end{equation}
where $h_{ij} $ is the Hamming distance between binary sequences representing ${\rm {\bf s}}_i $ and ${\rm {\bf s}}_j $.

Note that the setup and error rate expressions we are using are general enough to apply to arbitrary multi-dimensional constellations, including coding (codewords are considered as points of an extended constellation). We now proceed to convexity properties of error rates in this general setting.

\section{Convexity of Symbol Error Rates}
\label{sec:convexity}
Convexity properties of symbol error rates of the ML decoder in SNR and noise power have been established in \cite{Loyka-07}\cite{Loyka-10} for arbitrary constellation/coding under ML decoding and AWGN noise and are summarized in Theorem 1 below for completeness and comparison purposes.

\begin{thm}[Theorems 1 and 2 in \cite{Loyka-10}]
\label{thm SER SNR old}
Consider the ML decoder operating in the AWGN channel. Its SER $P_e(\gamma)$ is a convex function of the SNR $\gamma $ for any constellation/coding if $n\le 2$,
\begin{equation}
d^2 P_e(\gamma) / d\gamma^2 = P_e(\gamma)'' \ge 0
\end{equation}
For $n>2$, the following convexity properties hold:

\begin{itemize}
\item $P_{e} $ is convex in the high SNR regime,
\begin{equation}
\gamma \ge (n+\sqrt{2n}) / d_{\min}^2
\end{equation}
where $d_{\min} = \min_i \{d_{\min ,i}\}$ is the minimum distance from a constellation point to the boundary of its decision region over the whole constellation, and $d_{\min ,i}$ is the minimum distance from ${\rm {\bf s}}_i $ to its decision region boundary,

\item $P_{e} $ is concave in the low SNR regime,
\begin{equation}
\gamma \le (n-\sqrt{2n}) / d_{\max}^2
\end{equation}
where $d_{\max} = \max_i \{d_{\max ,i}\}$, and $d_{\max ,i}$ is the maximum distance from ${\rm {\bf s}}_i $ to its decision region boundary,

\item there are an odd number of inflection points, $P_{e}(\gamma)'' = 0$, in the intermediate SNR regime,
\begin{equation}
( n-\sqrt{2n}) / d_{\max}^2 \le \gamma \le ( n+\sqrt{2n} ) / d_{\min}^2
\end{equation}
\end{itemize}
\end{thm}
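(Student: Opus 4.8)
The plan is to reduce the statement to a sign analysis of $d^2 P_{ei}/d\gamma^2$ for each constellation point and then recombine: since $P_e=\sum_i P_{ei}\Pr\{\bs=\bs_i\}$ is a nonnegative linear combination of the $P_{ei}$, it suffices to bound each $P_{ei}''$ on the claimed $\gamma$-range, the thresholds then combining through $\min_i$ and $\max_i$. Starting from \eqref{eq2-5} with $\sigma_0^2=1/\gamma$, so that $f_\xi(\bx)=(\gamma/2\pi)^{n/2}e^{-\gamma|\bx|^2/2}$, I would differentiate twice under the integral sign (legitimate because $f_\xi$ and its $\gamma$-derivatives admit integrable Gaussian-type bounds uniform on compact $\gamma$-intervals) to obtain
\[
\frac{\partial^2 f_\xi}{\partial\gamma^2}=\frac{1}{4\gamma^2}\left[(n-\gamma|\bx|^2)^2-2n\right]f_\xi(\bx),\qquad
P_{ei}''(\gamma)=-\frac{1}{4\gamma^2}\int_{\Omega_i}\left[(n-\gamma|\bx|^2)^2-2n\right]f_\xi(\bx)\,d\bx.
\]
Because $\gamma|\bxi|^2$ is $\chi^2$ with $n$ degrees of freedom, $\int f_\xi\,d\bx\equiv1$ forces $\int\partial_\gamma^2 f_\xi\,d\bx=0$; that is, the integral of $(n-\gamma|\bx|^2)^2-2n$ against $f_\xi$ over the whole space vanishes (equivalently $\mathrm{Var}(\chi^2_n)=2n$), so that, with $\Omega_i^c$ the error region,
\[
P_{ei}''(\gamma)=\frac{1}{4\gamma^2}\int_{\Omega_i^c}\left[(n-\gamma|\bx|^2)^2-2n\right]f_\xi(\bx)\,d\bx.
\]

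Now $q\mapsto(n-q)^2-2n$ vanishes at $q=n\pm\sqrt{2n}$ and is nonnegative exactly for $q\le n-\sqrt{2n}$ or $q\ge n+\sqrt{2n}$. The decision region of the min-distance decoder, being convex with $\bs_i$ in its interior, contains the open ball of radius $d_{\min,i}$ about $\bs_i$, so every $\bx\in\Omega_i^c$ has $|\bx|\ge d_{\min,i}$; hence if $\gamma\ge(n+\sqrt{2n})/d_{\min,i}^2$ the bracket is $\ge0$ throughout $\Omega_i^c$ and $P_{ei}''(\gamma)\ge0$. Taking $d_{\min}=\min_i d_{\min,i}$ gives convexity of $P_e$ in the stated high-SNR range. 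Symmetrically, when $\Omega_i$ is bounded it is contained in the ball of radius $d_{\max,i}$, so for $\gamma\le(n-\sqrt{2n})/d_{\max,i}^2$ (which forces $n>2$) the bracket is $\ge0$ throughout $\Omega_i$ and the first formula for $P_{ei}''$ gives $P_{ei}''(\gamma)\le0$; with $d_{\max}=\max_i d_{\max,i}$ this is the low-SNR concavity (vacuous when some cell is unbounded, as then $d_{\max}=\infty$). Finally, $P_e''$ is continuous, negative on a low-SNR interval, positive on a high-SNR interval, and of constant sign outside $[(n-\sqrt{2n})/d_{\max}^2,(n+\sqrt{2n})/d_{\min}^2]$, so it has an odd number of sign changes there.

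The case $n\le2$ is where I expect the main work, because then $n-\sqrt{2n}\le0$, the low-SNR interval degenerates, and for small $\gamma$ the set $\Omega_i^c$ straddles both a region where the bracket is negative and a positive tail, so the complement identity no longer pins the sign down. For this case I would exploit that a Voronoi cell is star-shaped about $\bs_i$: in polar coordinates $P_{ei}(\gamma)=E_{\hat{\bw}}\!\left[\Pr(\chi^2_n>r_i(\hat{\bw})^2\gamma)\right]$, where $r_i(\hat{\bw})$ is the distance to the boundary along the unit direction $\hat{\bw}$ and the expectation is over a uniform $\hat{\bw}$, and one computes $\frac{d^2}{d\gamma^2}\Pr(\chi^2_n>r^2\gamma)=c_n\,(r^2\gamma)^{n/2-2}e^{-r^2\gamma/2}\,r^4\left[r^2\gamma-(n-2)\right]$ with $c_n>0$; for $n\le2$ the bracket is positive for every $\gamma>0$, so each term, and therefore $P_{ei}''$ and $P_e''$, is $\ge0$. (The same polar representation, or equivalently the divergence-theorem identity $P_{ci}''(\gamma)=\frac{1}{4\gamma^2}\oint_{\partial\Omega_i}(\bx\cdot\hat{\bw})\left[(n-2)-\gamma|\bx|^2\right]f_\xi\,dS$ with $\bx\cdot\hat{\bw}\ge0$ on the boundary of a center-convex region, also delivers for $n>2$ the sharper threshold $\gamma d_{\min}^2\ge n-2$ — the route to the tightening announced later in the paper.) The remaining loose ends are purely technical: the interchange of differentiation and integration, and the vanishing of the boundary term at infinity for unbounded cells, both immediate from the Gaussian decay of $f_\xi$ and its $\gamma$-derivatives.
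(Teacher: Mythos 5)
Your proposal is correct, and all the computations check out: the identity $\partial_\gamma^2 f_\xi=\tfrac{1}{4\gamma^2}[(n-\gamma|\bx|^2)^2-2n]f_\xi$, the use of $\int\partial_\gamma^2 f_\xi\,d\bx=0$ to switch between $\Omega_i$ and $\Omega_i^c$, and the inclusion of the ball of radius $d_{\min,i}$ in $\Omega_i$ (resp.\ of $\Omega_i$ in the ball of radius $d_{\max,i}$) are exactly what is needed to localize the sign of the quadratic $(n-q)^2-2n$ on the relevant region. Note, however, that the present paper does not reprove this statement at all — it is quoted verbatim from \cite{Loyka-10} "for completeness" — and your main argument is essentially the original Cartesian one from that reference. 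The machinery this paper actually develops (Appendix, proof of Theorem \ref{thm SER SNR new}) is the polar-coordinate argument you relegate to the $n\le2$ case and to a parenthetical: writing $P_{ci}(\gamma)=\int f_{\btheta}(\btheta)\int_0^{\gamma R_i^2(\btheta)}f_p(p)\,dp\,d\btheta$ and differentiating twice reduces everything to the sign of $f_p'(\gamma R_i^2(\btheta))$, i.e.\ of $n-2-\gamma R_i^2(\btheta)$. That route is shorter, handles all $n$ uniformly (no separate $n\le2$ case, no complement identity), yields the strictly tighter thresholds $(n-2)/d_{\min}^2$ and $(n-2)/d_{\max}^2$ which subsume the $n\pm\sqrt{2n}$ bounds, and generalizes beyond the Gaussian to any density monotone in $p$ near the decision boundaries and to center-convex (not necessarily convex) regions. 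What your Cartesian route buys in exchange is that it works directly with the volume integral and does not require the decision regions to be star-shaped about $\bs_i$ for the high/low-SNR bounds (star-shapedness enters only in your $n\le2$ argument), but for the ML decoder that generality is not needed. The one soft spot in both your argument and the paper's is the "odd number of inflection points" claim, which is only a sign-change count and silently assumes $d_{\max}<\infty$ and that $P_e''$ does not vanish on a set of positive measure; you flag the first issue, and the paper itself offers nothing stronger than a "continuity argument" here, so you are at parity.
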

The same results can be extended to $P_{ei}$ via the substitution $d_{\max(\min)} \rightarrow d_{\max,i(\min,i)}$ in the inequalities above.

\subsection{Convexity in SNR/Signal Power}
\label{sec3a}

Since the high/low SNR bounds in Theorem 1 are only sufficient for the corresponding property, a question arises whether they can be further improved. Theorem 2 gives an affirmative answer and demonstrates that no further improvement is possible.

\begin{thm}
\label{thm SER SNR new}
Consider the ML decoder operating in the AWGN channel. Its SER $P_e(\gamma)$ has the following convexity properties: it is convex in the high SNR regime,
\begin{equation}
\label{eq3a-1}
\gamma \ge (n-2) / d_{\min}^2
\end{equation}
it is concave in the low SNR regime,
\begin{equation}
\label{eq3a-2}
\gamma \le (n-2) / d_{\max}^2
\end{equation}
and there are an odd number of inflection points in-between. The high/low SNR bounds cannot be further improved without further assumptions on the constellation geometry.
\end{thm}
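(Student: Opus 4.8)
\emph{Proof outline.} Since $P_{ei}=1-P_{ci}$ and $P_e$ is a positive combination of the $P_{ei}$, it suffices to control $P_{ci}''(\gamma)$ for each $i$. Passing to polar coordinates about $\bs_i$, I would write $P_{ci}(\gamma)=\int_0^\infty \nu_i(r)f_R(r;\gamma)\,dr$, where $f_R(\cdot;\gamma)$ is the density of the noise magnitude $|\bxi|$ and $\nu_i(r)\in[0,1]$ is the fraction of the sphere of radius $r$ (centred at $\bs_i$) lying in $\Omega_i$. Since the Voronoi cell $\Omega_i$ is convex and contains $\bs_i$, it is star-shaped about $\bs_i$, so $\nu_i$ is non-increasing, with $\nu_i\equiv1$ on $[0,d_{\min,i})$ and $\nu_i\equiv0$ on $(d_{\max,i},\infty)$; equivalently the non-positive Stieltjes measure $d\nu_i$ is supported in $[d_{\min,i},d_{\max,i}]$.

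The crucial step is to differentiate twice in $\gamma$ under the integral and integrate by parts in $r$, moving the $r$-derivative onto $\nu_i$. The boundary terms vanish because $\partial_\gamma^2\int_0^\infty f_R\,dr=\partial_\gamma^2 1=0$ and $f_R$ decays faster than any polynomial in $r$, leaving
\[
P_{ci}''(\gamma)=\int_0^\infty W(r;\gamma)\,d\nu_i(r),\qquad W(r;\gamma):=\partial_\gamma^2\!\int_r^\infty f_R(s;\gamma)\,ds .
\]
The advantage of working with the tail rather than the density is that $\int_r^\infty f_R(s;\gamma)\,ds=\Pr\{|\bxi|>r\}=\Pr\{\chi^2_n>\gamma r^2\}=:g(\gamma r^2)$, since $\gamma|\bxi|^2$ is $\chi^2_n$-distributed independently of $\gamma$. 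Hence $W(r;\gamma)=r^4 g''(\gamma r^2)$, and differentiating the $\chi^2_n$ survival function twice gives $g''(v)=c_n v^{n/2-2}e^{-v/2}\bigl(v-(n-2)\bigr)$ with $c_n>0$, so that $\operatorname{sign}W(r;\gamma)=\operatorname{sign}\bigl(\gamma r^2-(n-2)\bigr)$. This \emph{linear} threshold, in place of the quadratic $v^2-2nv+n(n-2)$ (with roots $n\pm\sqrt{2n}$) obtained in \cite{Loyka-10} by bounding $\partial_\gamma^2 f_\xi$ pointwise, is exactly what sharpens the bounds.

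The two regimes then follow at once. If $\gamma\ge(n-2)/d_{\min,i}^2$ then $\gamma r^2\ge n-2$ on the support $r\ge d_{\min,i}$ of $d\nu_i$, so $W\ge0$ there; since $d\nu_i\le0$, $P_{ci}''(\gamma)\le0$, i.e. $P_{ei}''(\gamma)\ge0$, and averaging over $i$ with $d_{\min}=\min_i d_{\min,i}$ gives \eqref{eq3a-1}. Symmetrically, for $\gamma\le(n-2)/d_{\max,i}^2$ one has $\gamma r^2\le n-2$ on $r\le d_{\max,i}$, hence $W\le0$, $P_{ci}''\ge0$, $P_{ei}''\le0$, which is \eqref{eq3a-2} (vacuous when $\Omega_i$ is unbounded). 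In the intervening range (non-trivial when $d_{\max}<\infty$) $P_e''(\gamma)$ is real-analytic in $\gamma$ and changes sign — negative near $\gamma=0$, where $P_e\to1$ concavely, positive as $\gamma\to\infty$, where $P_e$ decays exponentially and is convex — so it has finitely many, and an odd number of, sign changes there, i.e. an odd number of inflection points.

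Finally, optimality. I would show the sign change of $W$ at $\gamma r^2=n-2$ is realised by a genuine ML decision region by taking $\Omega_i$ close to a Euclidean ball of radius $d$ about $\bs_i$: let $P_k=\{\bx:\langle\bx,\boldsymbol\omega_j\rangle\le d,\ 1\le j\le k\}$ with $\{\boldsymbol\omega_j\}$ an $\epsilon_k$-net of the unit sphere, $\epsilon_k\to0$. Each $P_k$ is precisely the Voronoi cell of $\bs_i$ for the constellation $\{\bs_i\}\cup\{\bs_i+2d\boldsymbol\omega_j\}_{j}$, with $d_{\min,i}=d$ and $d_{\max,i}\to d$, and $P_k\downarrow\overline{B(\bs_i,d)}$, so the measures $d\nu_i^{(k)}$ converge weakly to a unit mass at $r=d$. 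Consequently $\bigl(P^{(k)}_{ci}\bigr)''(\gamma)\to-W(d;\gamma)=-d^4 g''(\gamma d^2)$, which is strictly positive for every $\gamma<(n-2)/d^2$; thus for such $\gamma$ and all large $k$, $P^{(k)}_{ei}$ is strictly concave, so no threshold below $(n-2)/d_{\min}^2$ can hold in general, while the same family read just above $(n-2)/d^2$ (where the limit is strictly negative and $P^{(k)}_{ei}$ convex) shows $(n-2)/d_{\max}^2$ cannot be raised. The main obstacle is precisely this last step: converting the clean pointwise sign change of $W$ into a statement about bona fide Voronoi cells requires justifying the weak limit of $d\nu_i^{(k)}$ and verifying the claimed behaviour of $d_{\min,i}$ and $d_{\max,i}$ under the polytope approximation; the convexity half, by contrast, reduces to the integration-by-parts identity and the elementary $\chi^2_n$ computation.
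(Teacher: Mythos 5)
Your main argument is, in substance, the paper's own proof: the paper passes to spherical coordinates and writes $P_{ci}''(\gamma)=\int_{D\btheta}f_{\btheta}(\btheta)\,f_p'(\gamma R_i^2(\btheta))\,R_i^4(\btheta)\,d\btheta$, with $f_p'(p)\propto (n-2-p)p^{n/2-2}e^{-p/2}$, and reads off the sign from whether $\gamma R_i^2(\btheta)$ lies above or below $n-2$. Your Stieltjes measure $d\nu_i$ is exactly the pushforward of the angular density under $\btheta\mapsto R_i(\btheta)$, and your $g''(\gamma r^2)=-f_p'(\gamma r^2)$, so the integration-by-parts identity is the paper's formula after a change of variables; the sharpening over the $n\pm\sqrt{2n}$ bounds comes from the same source in both cases (differentiating the radial/power density rather than the Cartesian one). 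The only genuine divergence is the tightness step. The paper simply takes all decision regions to be identical spheres of radius $d$ (a center-convex but non-Voronoi decoder, covered by its Theorem 3 and Corollary 1), which makes every $P_{ei}''$ strictly negative below the threshold and hence $P_e''<0$. You instead build honest ML/Voronoi cells $P_k$ shrinking to a ball, which is more faithful to the theorem's hypothesis, but it controls only the central point's $P_{ei}^{(k)}$: the auxiliary points $\bs_i+2d\boldsymbol\omega_j$ have unbounded Voronoi cells whose error rates need not be concave at the same $\gamma$, so strict concavity of one term does not by itself force $P_e''<0$ for the constellation average, which is what the theorem's non-improvability claim is about. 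This is fixable (e.g.\ by weighting the prior toward $\bs_i$, or by phrasing tightness for the per-symbol rates $P_{ei}$ as the paper does right after Theorem 1), but as written it is the one step that does not close.
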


\begin{proof}
See Appendix.
\end{proof}

Note that the high/low SNR bounds in Theorem \ref{thm SER SNR new} are tighter than those in Theorem \ref{thm SER SNR old}, since
\[
n-\sqrt{2n} < n -2 < n+\sqrt{2n} \ \mbox{for} \ n>2.
\]
Convexity of the SER for $n \le 2$ is also obvious from this Theorem. In the case of identical spherical decision regions, a more definite statement can be made.

\begin{cor}
\label{cor SER equal Omega_i}
Consider the case of Theorem \ref{thm SER SNR new} when all decision regions are spheres\footnote{If the received signal does not belong to any of the decision regions, an error is declared.} of the same radius $d$. The following holds:
\begin{itemize}
\item The SER is strictly convex in $\gamma$ in the high SNR regime:
 \[
P_{e}(\gamma)'' > 0 \mbox{ if} \ \gamma > (n-2) /d^2
\]

\item It is strictly concave in the low SNR regime:
 \[
P_{i}(\gamma)'' < 0 \mbox{ if} \ \gamma < (n-2) /d^2
\]

\item There is a single inflection point:
\[
P_{e}(\gamma)'' = 0 \mbox{ if} \ \gamma = (n-2) /d^2
\]
\end{itemize}
\QED
\end{cor}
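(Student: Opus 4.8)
The plan is to use the spherical symmetry of the AWGN density to collapse the problem to a single one-dimensional integral and then differentiate that integral twice in $\gamma$, reading off the sign of the second derivative directly.

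First I would observe that, since every decision region $\Omega_i$ is a ball of radius $d$ centered at $\bs_i$ and $f_\xi$ is spherically symmetric, the integral $\int_{\Omega_i} f_\xi(\bx)\,d\bx$ does not depend on $i$; hence $P_{ei}=P_e$ for all $i$ and
\begin{equation}
\label{eq:cor1-Pe}
P_e(\gamma) = 1 - \int_{|\bx|\le d} (2\pi)^{-n/2}\gamma^{n/2}\, e^{-\gamma|\bx|^2/2}\,d\bx .
\end{equation}
Passing to polar coordinates and then substituting $v=\sqrt{\gamma}\,|\bx|$ removes $\gamma$ from the integrand and leaves it only in the upper limit, giving $P_e(\gamma)=1-\kappa_n\int_0^{d\sqrt{\gamma}} v^{n-1}\, e^{-v^2/2}\,dv$ for a positive constant $\kappa_n$; equivalently, $\gamma|\bxi|^2$ is chi-square distributed with $n$ degrees of freedom, so $P_e(\gamma)=1-F_{\chi_n^2}(\gamma d^2)$.

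Next I would differentiate. By the fundamental theorem of calculus and the chain rule one gets $P_e'(\gamma) = -\tfrac{1}{2}\kappa_n d^{n}\,\gamma^{(n-2)/2}\, e^{-\gamma d^2/2}$, and differentiating once more,
\begin{equation}
\label{eq:cor1-Pepp}
P_e''(\gamma) = \tfrac{1}{4}\,\kappa_n d^{n}\,\gamma^{(n-4)/2}\, e^{-\gamma d^2/2}\,\bigl[\,d^2\gamma-(n-2)\,\bigr].
\end{equation}
The prefactor multiplying the bracket in \eqref{eq:cor1-Pepp} is strictly positive for every $\gamma>0$, so $\mathrm{sign}\,P_e''(\gamma)=\mathrm{sign}\,(d^2\gamma-(n-2))$: this gives $P_e''>0$ for $\gamma>(n-2)/d^2$, $P_e''<0$ for $\gamma<(n-2)/d^2$, and $P_e''=0$ exactly at $\gamma=(n-2)/d^2$, which are the three assertions. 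Since in this scenario $d_{\min}=d_{\max}=d$, the non-strict convexity/concavity already follow from Theorem \ref{thm SER SNR new}; the explicit formula \eqref{eq:cor1-Pepp} is what upgrades these to strict statements and identifies the inflection point as unique. (For $n\le 2$ the threshold $(n-2)/d^2$ is non-positive, so $P_e''>0$ throughout $\gamma>0$, in agreement with the low-dimensional part of Theorem \ref{thm SER SNR new}.)

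I do not expect a genuine obstacle here: once \eqref{eq:cor1-Pe} is reduced to the radial variable the result is a direct computation. The one place that demands a little care is the $r^{n-1}$ Jacobian in the polar change of variables — it is exactly this weight, surviving as the factor $\gamma^{(n-2)/2}$ in $P_e'$, that produces the $n-2$ in the threshold, so it must be tracked accurately; the degenerate cases $n=1,2$ noted above should likewise be stated explicitly.
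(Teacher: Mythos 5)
Your proof is correct and is essentially the paper's argument made explicit: the paper disposes of this corollary by specializing the spherical-coordinate identity $P_{ci}(\gamma)''=\int f_{\btheta}(\btheta)\,f_p'(\gamma R_i^2(\btheta))R_i^4(\btheta)\,d\btheta$ from the proof of Theorem \ref{thm SER SNR new} to $R_i(\btheta)\equiv d$, so that $P_e''=-d^4 f_p'(\gamma d^2)$ with $f_p$ the $\chi^2_n$ density — which is exactly your $P_e(\gamma)=1-F_{\chi^2_n}(\gamma d^2)$ computation, yielding the same sign factor $d^2\gamma-(n-2)$. No gaps.
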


Note that this result cannot be obtained from Theorem \ref{thm SER SNR old} directly, as the bounds there are not tight. It also follows from this Corollary that the high/low SNR bounds of Theorem \ref{thm SER SNR new} cannot be further improved in general (without further assumptions on the constellation geometry).

The results above are not limited to the AWGN channel but can also be extended to a wide class of noise densities and a class of decoders, as Theorem \ref{thm SER SNR general} below demonstrates. We will need the following definition generalizing the concept of a convex region.

\begin{defn}
A decision region is \textit{center-convex} if any of its points can be "seen" from the center (i.e. the corresponding line segment connecting the point to the center belongs to the region).
\end{defn}

Note that any convex region (e.g. a convex polyhedron) is automatically center-convex but the converse is not necessarily true, so that ML/min-distance decoders are a special case of a generic decoder with center-convex decision regions. As an example, Fig. 1 illustrates such a decision region, which is clearly not convex.

\begin{figure}[t]
\label{fig_1}
\centerline{\includegraphics[width=1.5in]{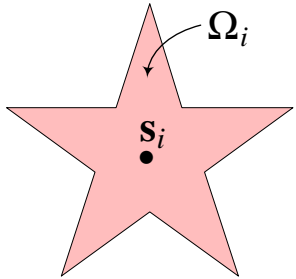}}
\caption{Center-convex decision region $\Omega_i$ centered on $\rm {\bf s}_i$.}
\end{figure}

To generalize the results above to a wide class of noise densities, we  transform the Cartesian noise density $f_{\xi}(\bf{x})$ into the spherical coordinates $(p,\btheta)$,
\begin{eqnarray}
x_1 &=& \sigma_0 \sqrt{p} \sin \theta_1 \\ \notag
x_2 &=& \sigma_0 \sqrt{p} \cos \theta_1 \sin \theta_2 \\ \notag
\vdots \\ \notag
x_{n-1} &=& \sigma_0 \sqrt{p} \cos \theta_1 .. \cos \theta_{n-2} \sin \theta_{n-1} \\ \notag
x_{n} &=& \sigma_0 \sqrt{p} \cos \theta_1 .. \cos \theta_{n-2} \cos \theta_{n-1} \\ \notag
\end{eqnarray}
where $\btheta = \{\theta_1,..,\theta_{n-1}\}$ are the angles, $-\pi/2 \le \theta_i \le \pi/2$ for $i=1...n-2$, $-\pi \le \theta_{n-1} \le \pi$, and $p$ represents the normalized noise instant power $|\bxi|^2/\sigma_0^2$, so that
\begin{equation}
f_{p,\btheta}(p,\btheta) = |\partial {\rm{\bf{x}}} / \partial (p, \btheta)| f_\xi(\bf{x})
\end{equation}
where
\[
J=|\partial {\rm{\bf{x}}} / \partial (p, \btheta)|= \sigma_0^n  p^{n/2-1} \cos^{n-2}\theta_{n-2} \cos^{n-3}\theta_{n-3} .. \cos \theta_1
\]
is the Jacobian of transformation from $\bf{x}$ to $(p, \btheta)$,  and $f_{p,\btheta}(p,\btheta)$ is the noise power density in the spherical coordinates (see \cite{Anderson}\cite{Goldman} for more on spherical coordinates and corresponding transformations). For simplicity of notations, we further drop the subscripts and use $f(p,\btheta)$.

We are now in a position to generalize Theorem \ref{thm SER SNR new} to a wide class of noise densities and the class of center-convex decoders.

\begin{thm}
\label{thm SER SNR general}
Consider a decoder with center-convex decision regions operating in an additive noise channel of arbitrary density $f(p,\btheta)$. The following holds:
\begin{equation}
\label{eq3a-5}
P_e(\gamma)'' \ge 0 \ \mbox{if} \ f_p'(p,\btheta) \le 0 \ \forall \btheta, p \in [\gamma d_{\min}^2, \gamma d_{\max}^2],
\end{equation}
where $f_p'(p,\btheta)=\partial f(p,\btheta)/\partial p$. In particular, $P_e(\gamma)$ is convex in the interval $[\gamma_1, \gamma_2]$ if the noise density $f(p,\btheta)$ is non-increasing in $p$ in the interval $[\gamma_1 d_{\min}^2, \gamma_2 d_{\max}^2]$:
\begin{equation}
\label{eq3a-6}
P_e(\gamma)'' \ge 0 \ \forall \gamma \in [\gamma_1, \gamma_2] \ \mbox{if} \ f_p'(p,\btheta) \le 0 \ \forall \btheta, p \in [\gamma_1 d_{\min}^2, \gamma_2 d_{\max}^2],
\end{equation}
\end{thm}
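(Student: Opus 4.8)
The plan is to write $P_{ei}(\gamma)$ as a radial integral in the spherical coordinates $(p,\btheta)$ introduced just above, exploit center-convexity to make the decision region a single radial bound per direction, and then differentiate twice in $\gamma$, noting that all of the $\gamma$-dependence lives in the upper limit of the radial integral.

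\textbf{Step 1 (geometry).} Since $\Omega_i$ is center-convex and the shifted origin is $\mathbf{s}_i$, the intersection of $\Omega_i$ with the ray from the origin in angular direction $\btheta$ is a segment $[0,d_i(\btheta)]$, where $d_i(\btheta)$ is the distance from $\mathbf{s}_i$ to $\partial\Omega_i$ along $\btheta$ (allowing $d_i(\btheta)=+\infty$ for unbounded regions, the integrals below then being convergent improper integrals). Put $\rho_i(\btheta)=d_i(\btheta)^2$; the point at Euclidean distance $|\mathbf{x}|$ along $\btheta$ has normalized power $p=|\mathbf{x}|^2/\sigma_0^2=\gamma|\mathbf{x}|^2$, so $\mathbf{x}\in\Omega_i$ iff $0\le p\le\gamma\rho_i(\btheta)$. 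Note $\min_\btheta\rho_i(\btheta)=d_{\min,i}^2$ and $\max_\btheta\rho_i(\btheta)=d_{\max,i}^2$.

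\textbf{Step 2 (integral representation).} Changing variables in \eqref{eq2-5} from $\mathbf{x}$ to $(p,\btheta)$ (the Jacobian $J$ is absorbed into $f(p,\btheta)$) and using the normalization that fixes $\gamma=1/\sigma_0^2$ as the only free parameter, so that the normalized-power density $f(p,\btheta)$ itself does not depend on $\gamma$,
\[
P_{ei}(\gamma)=1-\int_{\btheta}\Bigl(\int_0^{\gamma\rho_i(\btheta)}f(p,\btheta)\,dp\Bigr)\,d\btheta ,
\]
the outer integral being over the angular range. Differentiating twice in $\gamma$ (Leibniz rule on the inner integral, whence the hypothesis is stated for $f_p'$),
\[
P_{ei}''(\gamma)=-\int_{\btheta}\rho_i(\btheta)^2\,f_p'\bigl(\gamma\rho_i(\btheta),\btheta\bigr)\,d\btheta .
\]
Since $\gamma\rho_i(\btheta)$ ranges over $[\gamma d_{\min,i}^2,\gamma d_{\max,i}^2]\subseteq[\gamma d_{\min}^2,\gamma d_{\max}^2]$, the hypothesis $f_p'(p,\btheta)\le 0$ for all $\btheta$ and $p\in[\gamma d_{\min}^2,\gamma d_{\max}^2]$ makes the integrand nonpositive, so $P_{ei}''(\gamma)\ge 0$; averaging, $P_e=\sum_i\Pr[\mathbf{s}=\mathbf{s}_i]P_{ei}$ is a nonnegative combination, giving \eqref{eq3a-5}. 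For \eqref{eq3a-6}, if $f_p'\le 0$ on $[\gamma_1 d_{\min}^2,\gamma_2 d_{\max}^2]$, then for every $\gamma\in[\gamma_1,\gamma_2]$ the relevant range $[\gamma d_{\min}^2,\gamma d_{\max}^2]$ is contained in it, so $P_e''(\gamma)\ge 0$ on all of $[\gamma_1,\gamma_2]$ and $P_e$ is convex there.

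\textbf{Main obstacle.} The conceptual crux — and the step I would be most careful about — is Step 2: reducing to a purely radial integral whose only $\gamma$-dependence is the upper limit. This needs (i) center-convexity, so each ray meets $\partial\Omega_i$ exactly once with the region ``filled in'' below, and (ii) the scaling normalization that keeps $f(p,\btheta)$ free of $\gamma$ (otherwise $\gamma$ would also appear inside $f$); this is precisely what happens for AWGN, where $f(p,\btheta)\propto p^{n/2-1}e^{-p/2}\prod_k\cos^k\theta_k$. The remaining points — legitimacy of differentiating under the integral sign, and handling unbounded or non-smooth boundary profiles $d_i(\btheta)$ — are routine under mild regularity of $f$ (local boundedness of $f$ and $f_p'$, dominated convergence) and $\rho_i(\btheta)$ measurable.
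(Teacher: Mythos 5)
Your proposal is correct and follows essentially the same route as the paper's proof: transform to spherical coordinates, use center-convexity so that each decision region is described by a single radial bound $R_i(\btheta)$ with the $\gamma$-dependence confined to the upper integration limit, differentiate twice to get $P_{ci}(\gamma)''=\int_{D\btheta} f_p'(\gamma R_i^2(\btheta),\btheta)R_i^4(\btheta)\,d\btheta$ (your $\rho_i^2=R_i^4$), and sign the integrand using $d_{\min}\le R_i(\btheta)\le d_{\max}$. The only differences are cosmetic (you work with $P_{ei}$ rather than $P_{ci}$ and spell out the averaging over constellation points and the regularity caveats).
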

\begin{proof}
See Appendix.
\end{proof}

Note that it is the (non-increasing) behavior of the noise power density in the annulus $[\gamma_1 d_{\min}^2, \gamma_2 d_{\max}^2]$, i.e. around the boundaries of decision regions, that is responsible for the convexity of $P_e(\gamma)$; the behavior of the noise density elsewhere is irrelevant.

The inequalities in \eqref{eq3a-5} and \eqref{eq3a-6} can be reversed to obtain the corresponding concavity properties. The strict convexity properties can also be established by considering decoders with decision regions of non-zero measure in the corresponding SNR intervals. Convexity of individual SER $P_{ei}$ can be obtained via the substitution $d_{\min(\max)} \rightarrow d_{\min,i(\max,i)}$. It is also straightforward to see that Theorem \ref{thm SER SNR new} is a special case of Theorem \ref{thm SER SNR general}.

Let us now consider more special cases of Theorem \ref{thm SER SNR general}.

\begin{cor}
\label{cor SER SNR unimodal}
Consider a decoder with center-convex decision regions operating in an additive noise channel of a unimodal noise power density\footnote{which is also quasi-concave \cite{Boyd}; many popular probability density functions are unimodal.},
\begin{align}
f_p'(p,\btheta)
\begin{cases}
> 0, \ p < p^* \\
=0, \ p = p^* \\
<0, \ p > p^*
\end{cases}
\end{align}
i.e. it has only one maximum at $p=p^*$; it is an increasing function on one side and decreasing on the other (see e.g Fig. 2, 3). Its SER is convex at high and concave at low SNR:
\begin{align}
\label{eq3a-8}
\begin{cases}
P_e(\gamma)'' > 0, \ \gamma > p^*/d_{\min}^2 \\
P_e(\gamma)'' < 0, \ \gamma < p^*/d_{\max}^2 \\
\end{cases}
\end{align}
\QED
\end{cor}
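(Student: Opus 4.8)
The plan is to obtain this corollary directly from Theorem~\ref{thm SER SNR general} by plugging in the unimodality profile and checking the sign of $f_p'(p,\btheta)$ on the relevant annulus. First I would fix $\gamma$ in the claimed high-SNR regime, $\gamma > p^*/d_{\min}^2$, which is equivalent to $\gamma d_{\min}^2 > p^*$. Then every $p$ in the integration annulus $[\gamma d_{\min}^2,\gamma d_{\max}^2]$ satisfies $p \ge \gamma d_{\min}^2 > p^*$, so the unimodality assumption gives $f_p'(p,\btheta) < 0$ for all $\btheta$. Hypothesis \eqref{eq3a-5} of Theorem~\ref{thm SER SNR general} is therefore met (in fact with strict inequality), yielding $P_e(\gamma)'' \ge 0$; the strict bound $P_e(\gamma)'' > 0$ in \eqref{eq3a-8} follows from the strict-convexity refinement noted right after the proof of Theorem~\ref{thm SER SNR general}, using that the contributing decision regions have non-zero measure for $\gamma$ in that range. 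Letting $\gamma$ range over $(p^*/d_{\min}^2,\infty)$ gives the first line of \eqref{eq3a-8}.

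The low-SNR, concave case is completely symmetric. For $\gamma < p^*/d_{\max}^2$, i.e. $\gamma d_{\max}^2 < p^*$, every $p \in [\gamma d_{\min}^2,\gamma d_{\max}^2]$ satisfies $p \le \gamma d_{\max}^2 < p^*$, hence $f_p'(p,\btheta) > 0$ for all $\btheta$ by unimodality. Applying the reversed-inequality (concavity) version of Theorem~\ref{thm SER SNR general} then gives $P_e(\gamma)'' < 0$, which is the second line of \eqref{eq3a-8}. The per-symbol statement for $P_{ei}$ is recovered, as usual, by the substitution $d_{\min(\max)} \to d_{\min,i(\max,i)}$, and the (empty) case $d_{\max}=\infty$ for unbounded outer decision regions is consistent, since then $p^*/d_{\max}^2 = 0$ and the concavity half-line is vacuous.

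I do not expect a genuine obstacle here: the corollary is a one-step specialization of Theorem~\ref{thm SER SNR general}. The only points needing a little care are (i) verifying that the thresholds $p^*/d_{\min}^2$ and $p^*/d_{\max}^2$ are exactly what forces the annulus $[\gamma d_{\min}^2,\gamma d_{\max}^2]$ to lie entirely on one side of the mode $p^*$ (which also explains why the two thresholds separate once $d_{\min}\neq d_{\max}$, leaving a genuine intermediate SNR band), and (ii) upgrading the non-strict conclusion of Theorem~\ref{thm SER SNR general} to the strict inequalities stated in the corollary, for which one invokes that on the indicated SNR intervals the relevant decision regions have positive Lebesgue measure so the underlying integral inequality is strict.
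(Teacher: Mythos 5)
Your derivation is correct and is exactly the route the paper intends: the corollary is stated with no separate proof precisely because, as you observe, the condition $\gamma > p^*/d_{\min}^2$ (resp. $\gamma < p^*/d_{\max}^2$) places the entire annulus $[\gamma d_{\min}^2,\gamma d_{\max}^2]$ above (resp. below) the mode $p^*$, so Theorem~\ref{thm SER SNR general} and its reversed-inequality version apply immediately. Your two points of care — the role of the thresholds in keeping the annulus on one side of $p^*$, and upgrading to strict inequalities via the positive-measure remark following Theorem~\ref{thm SER SNR general} — are exactly the right ones and are consistent with the paper.
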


\begin{cor}
\label{cor SER SNR decreas.}
Consider the case of monotonically-decreasing (in $p$) noise power density, $f_p'(p,\btheta)<0 \ \forall p, \btheta$. Then, the SER is always convex: $P_e(\gamma)'' > 0 \ \forall \gamma$.
\QED
\end{cor}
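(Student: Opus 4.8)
The plan is to read this off Theorem~\ref{thm SER SNR general}. Fix any $\gamma>0$. The sufficient condition in \eqref{eq3a-5} requires only that $f_p'(p,\btheta)\le 0$ for all $\btheta$ and all $p$ in the annulus $[\gamma d_{\min}^2,\gamma d_{\max}^2]$; here we are handed the stronger fact that $f_p'(p,\btheta)<0$ for \emph{every} $p>0$ and \emph{every} $\btheta$, so that condition holds at once, for every choice of $\gamma$. Hence $P_e(\gamma)''\ge 0$ for all $\gamma>0$, i.e. $P_e$ is convex on the whole SNR axis. The conclusion is \emph{global} in $\gamma$ precisely because the monotonicity is assumed globally: in contrast to Corollary~\ref{cor SER SNR unimodal}, there is no high-SNR threshold, since there is no radius below which $f$ increases and could flip the sign of $P_e''$.

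To upgrade ``$\ge 0$'' to the strict ``$>0$'' stated in the corollary, I would drop down to the integral representation of $P_{ei}''$ established in the proof of Theorem~\ref{thm SER SNR general}. Writing $r_i(\btheta)$ for the radial distance from $\bs_i$ to the boundary of $\Omega_i$ in direction $\btheta$ (well defined and finite because $\Omega_i$ is center-convex), one has $P_{ei}=1-\int_{\btheta}\!\int_0^{\gamma r_i(\btheta)^2} f(p,\btheta)\,dp\,d\btheta$, and differentiating twice in $\gamma$ gives $P_{ei}''=-\int_{\btheta} r_i(\btheta)^4\, f_p'\!\big(\gamma r_i(\btheta)^2,\btheta\big)\,d\btheta$. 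With $-f_p'>0$ everywhere, the integrand is nonnegative and is strictly positive at every $\btheta$ with $r_i(\btheta)>0$; since $\Omega_i$ has non-zero measure, such $\btheta$ form a set of positive measure, so $P_{ei}''>0$. Averaging over $i$ with the positive priors $\Pr[\bs=\bs_i]$ yields $P_e(\gamma)''>0$ for all $\gamma$.

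I do not expect a genuine obstacle: this is a corollary in the honest sense. The only step that needs a word of care is the passage from ``$\ge$'' (the form in which Theorem~\ref{thm SER SNR general} is phrased) to ``$>$'', which rests jointly on the strictness of the monotonicity hypothesis and on the decision regions being non-degenerate in the relevant shells --- exactly as anticipated in the remark following Theorem~\ref{thm SER SNR general}. For decoders with degenerate decision regions one could only claim $P_e(\gamma)''\ge 0$.
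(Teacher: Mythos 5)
Your proposal is correct and matches the paper's intent exactly: the corollary is stated without proof precisely because it is the immediate specialization of Theorem~\ref{thm SER SNR general} (via \eqref{eqAb-2}) to a globally decreasing power density. Your extra care in upgrading $\ge 0$ to $>0$ using the strict hypothesis and non-degenerate decision regions is the same observation the paper makes in the remark following Theorem~\ref{thm SER SNR general}, so nothing further is needed.
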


\begin{figure}[t]
\label{fig_2}
\centerline{\includegraphics[width=3.5in]{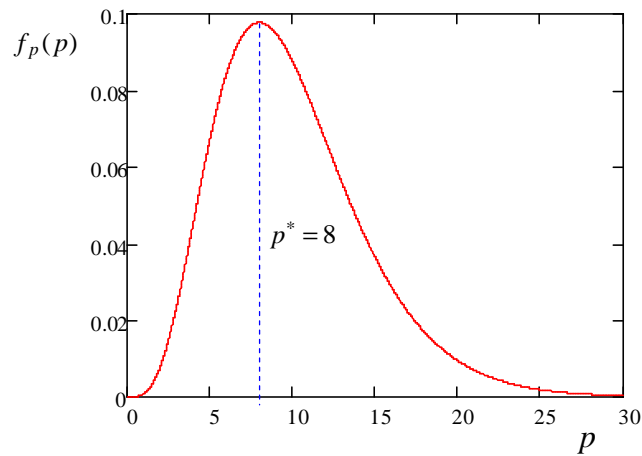}}
\caption{Gaussian noise power density for $n=10$. It is unimodal with $p^* = 8$.}
\end{figure}

\begin{figure}[t]
\label{fig_3}
\centerline{\includegraphics[width=3.5in]{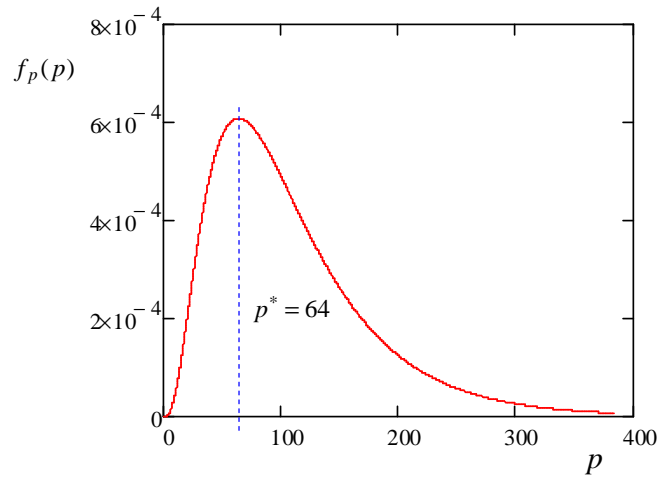}}
\caption{Laplacian noise power density for $n=10$. It is unimodal with $p^* = 64$.}
\end{figure}

\begin{figure}[t]
\label{fig_4}
\centerline{\includegraphics[width=3.5in]{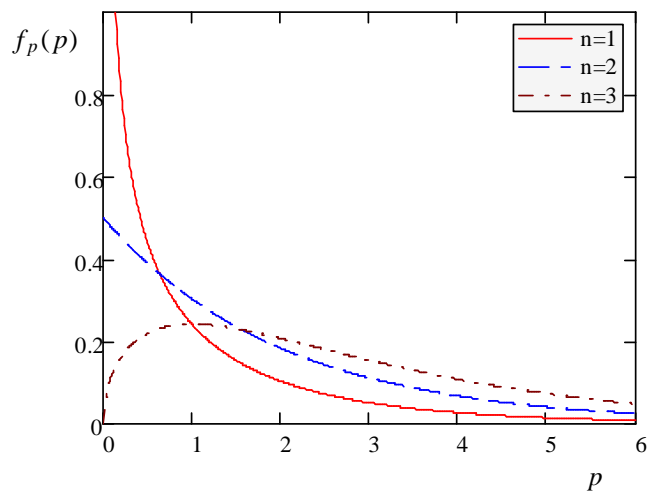}}
\caption{The power density of Gaussian noise: while it is monotonically decreasing for $n=1, 2$ and $\forall p$, it is unimodal for $n \ge 3$.}
\end{figure}

Since the Gaussian noise power density is unimodal with $p^* = n-2$ (see Fig. 2), Corollary \ref{cor SER SNR unimodal} applies to the AWGN channel as well, thereby generalizing Theorem \ref{thm SER SNR new} to decoders with center-convex decision regions. The AWGN for $n=1,2$ is also a special case of Corollary \ref{cor SER SNR decreas.}. These Corollaries allow one to answer the question "Why is the SER in the AWGN channel always convex for $n=1,2$ but not for $n \ge 3$?" -  the reason is the monotonically decreasing (in $p$) nature of the noise power density $f(p,\btheta)$ for any $p$ in the former but not the latter case, see Fig. 4.

Other examples of unimodal densities include Laplacian noise with the Cartesian PDF $f_{\xi}({\bf{x}}) = c \cdot e^{-|\bf{x}|}$, where $c$ is a normalizing constant, so that the spherical one is
\[
f(p,\btheta) = c \cdot p^{n/2-1} e^{-\sqrt{p}} f(\btheta),
\]
where $f(\btheta)$ is the angular density. It is straightforward to see that this power density is unimodal in $p$ with $p^* = (n-2)^2$, see Fig. 3. Importance of this distribution for communication/information theoretic problems is discussed in \cite{Verdu}-\cite{Kotz}. A more general example is a power exponential distribution \cite{Lindsey}-\cite{Gomez}
 \[
f_{\xi}({\bf{x}}) = c \cdot e^{-|{\bf{x}}|^{2\beta}}, \ \beta > 0,
\]
(also known as generalized Gaussian \cite{Conte} or, in a  slightly modified form, as Weibull distribution \cite{Sagias}-\cite{Rinne})  whose spherical density is
\[
f(p,\btheta) = c \cdot p^{n/2-1} e^{-p^\beta} f(\btheta), \beta > 0,
\]
which is also unimodal in $p$ with
\[
p^* = \left( \frac{n-2}{2\beta}\right)^{1/\beta}
\]
This distribution has a heavier (for $\beta < 1$) or lighter (for $\beta >1$) tail than the Gaussian one, so that it offers a significant flexibility in noise modeling. In fact, it was shown in \cite{Gomez} that Weibull distribution can be presented as a mixture of normal distributions, where the variance of normal distribution is treated as a random variable with an $\alpha$-stable distribution. This fits well into a typical model of interference in random wireless networks, where the interference distribution also follows an $\alpha$-stable law \cite{Sousa_Silvester}-\cite{Haengi}: each node transmits a Gaussian (capacity-achieving) signal of a fixed transmit power; at the receiver, the noise power coming from each node is random (due to random distance to transmitting nodes) and follows an $\alpha$-stable law, so that the composite noise instant power follows the power exponential distribution.

Some spherically-invariant random processes or vectors considered in \cite{Yao-73}-\cite{Conte} also belong to the classes considered in these Corollaries or in Theorem \ref{thm SER SNR general}, as discussed next.

\subsection{Convexity of SER under SIRP noise}
\label{sec3b}

In this section, we consider an additive noise channel when the noise distribution follows that of a SIRP. The characterization of the SIRP class is strikingly simple: any SIRP process is conditionally Gaussian, i.e. a Gaussian random process whose variance is a random variable independent of it. In the context of wireless communications, this structure represents such important phenomena as channel fading, random distance between transmitter and receiver etc. Below, we establish the SER convexity properties under a SIRP noise thus generalizing further the results of the previous section.

The following is one of the several equivalent definitions of a SIRP \cite{Vershik}-\cite{Goldman}.

\begin{defn}
A random process $\{X(t), t \in R\}$ is a SIRP if a vector of any of its $n$ samples ${\bf{x}} = \{X(t_1), X(t_2)..X(t_n)\}$ has the PDF of the following form:
\begin{equation}
\label{eq3b-1}
f_{\bf{x}}({\bf{x}}) = c_n h_n({\bf{x}}^T {\bf{C}}_n^{-1} {\bf{x}})
\end{equation}
where ${\bf{C}}_n$ is the covariance matrix, $h_n(r)$ is a non-negative function of the scalar argument $r \ge 0$, and $c_n$ is a normalizing constant. \footnote{An equivalent definition in terms of the characteristic function is also possible. Note also that not any $h_n(r)$ will do the job, but only those satisfying the Kolmogorov consistency condition \cite{Yao-73}-\cite{Goldman}.} \ \QED
\end{defn}

In fact, Definition 2 says that the PDF of SIRP samples depends only on the quadratic form ${\bf{x}}^T {\bf{C}}_n^{-1} {\bf{x}}$ rather than on each entry individually, so that any linear combinations of the entries of ${\bf{x}}$ having the same variance will also have the same PDF \cite{Vershik}. Distributions of the functional form as in \eqref{eq3b-1} are also known as elliptically-contoured distributions \cite{Anderson}. The characterization of SIRP is as follows (the SIRP representation theorem) \cite{Yao-73}-\cite{Goldman}.

\begin{thm}
\label{thm SIRP representation}
A random process is a SIRP iff any set of its samples has a PDF as in \eqref{eq3b-1} with
\begin{equation}
\label{eq3b-2}
h_n(r) = \int_0^{\infty} \sigma^{-n} \exp \left\{-\frac{r}{2\sigma^2}\right\} f(\sigma) d\sigma, \ 0 < r < \infty,
\end{equation}
where $h_n(r)$ is defined by continuity at $r=0$, and $f(\sigma)$ is any univariate PDF. \ \QED
\end{thm}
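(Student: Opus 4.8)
\emph{Plan.} I would prove the two implications separately; the ``if'' part is a direct mixture computation, while the ``only if'' part rests on a complete‑monotonicity argument forced by Kolmogorov consistency across all sample sizes.

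\emph{Sufficiency ($\Leftarrow$).} First I would recognize \eqref{eq3b-2} as a scale mixture of Gaussians: if, conditioned on a positive random variable with density $f(\sigma)$, the samples are zero‑mean Gaussian with covariance $\sigma^2 {\bf C}_n$, then their unconditional density is $\int_0^\infty (2\pi\sigma^2)^{-n/2}|{\bf C}_n|^{-1/2}\exp\{-{\bf x}^T{\bf C}_n^{-1}{\bf x}/2\sigma^2\}\,f(\sigma)\,d\sigma = c_n h_n({\bf x}^T{\bf C}_n^{-1}{\bf x})$ with $c_n=(2\pi)^{-n/2}|{\bf C}_n|^{-1/2}$ and $h_n$ exactly as in \eqref{eq3b-2}. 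Non‑negativity and unit mass follow by Fubini, and Kolmogorov consistency is inherited from the Gaussian family: marginalizing a coordinate commutes with the mixing integral, and a Gaussian marginal is again Gaussian with the corresponding sub‑covariance. Hence these are the finite‑dimensional laws of a genuine random process, each of the elliptical form \eqref{eq3b-1}, so the process is a SIRP (and $f(\sigma)$ may be any univariate PDF, the value $h_n(0)$ being defined by continuity, possibly $+\infty$, with the density still integrable).

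\emph{Necessity ($\Rightarrow$).} Here the work is to show that consistency of the elliptical family across every dimension $n$ forces the generator to be a Gaussian scale mixture. I would proceed in steps. (i) Marginalize $c_n h_n({\bf x}^T{\bf C}_n^{-1}{\bf x})$ over one coordinate; using the Schur‑complement identity relating the leading block of ${\bf C}_n^{-1}$ to ${\bf C}_{n-1}^{-1}$, the quadratic form splits and a substitution reduces the Gaussian‑type integral to a power, yielding the universal recursion $h_{n-1}(u)=\kappa_n\int_0^\infty h_n(u+v)\,v^{-1/2}\,dv$ for positive constants $\kappa_n$ (a half‑order Weyl fractional integral; the covariances enter only through $\kappa_n$). (ii) Iterating twice and collapsing the double integral with $\int_0^1 v^{-1/2}(1-v)^{-1/2}dv=\pi$ gives $h_{n-2}(u)=\pi\kappa_{n-1}\kappa_n\int_u^\infty h_n(t)\,dt$, hence $h_{n+2}\propto -h_n'$ and, inductively, $h_{2+2k}\propto(-1)^k h_2^{(k)}$. (iii) Since each $h_{2k}$ is a density generator, it is non‑negative, so $(-1)^k h_2^{(k)}\ge 0$ for all $k\ge 0$, i.e. $h_2$ is completely monotone on $(0,\infty)$; by the Hausdorff--Bernstein--Widder theorem $h_2(u)=\int_0^\infty e^{-us}\,d\mu(s)$ for a non‑negative measure $\mu$. (iv) Feeding this into the recursion (the half‑step likewise gives $h_1(u)=\Gamma(1/2)\kappa_2\int_0^\infty e^{-us}s^{-1/2}\,d\mu(s)$, so the odd generators are governed by the same $\mu$) produces $h_n(u)=b_n\int_0^\infty e^{-us}\,s^{(n-2)/2}\,d\mu(s)$ for every $n$. (v) Finally the change of variable $s=1/(2\sigma^2)$ turns this into $h_n(u)\propto\int_0^\infty\sigma^{-n}e^{-u/2\sigma^2}\,\tilde f(\sigma)\,d\sigma$; absorbing the $n$‑independent constant into $c_n$ and scaling so that $\tilde f$ integrates to one (possible because $c_n h_n$ is a probability density) gives exactly \eqref{eq3b-2} with $f=\tilde f$ a univariate PDF.

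\emph{Main obstacle.} The crux is step (iii): rigorously arguing that \emph{all} even‑dimensional generators must be non‑negative — which is precisely where the ``random process'' hypothesis (existence of samples in every dimension) is used — and that non‑negativity of this whole chain of derivatives upgrades to complete monotonicity without extra smoothness assumptions; phrasing the recursion as a fractional integral makes this clean, but one still has to control the behaviour of $h_n$ near $u=0$ (a possibly non‑integrable singularity of $f(\sigma)$ at $\sigma=0$) and the finiteness/normalization of $\mu$, the latter being rescued exactly by the requirement that $f$ be a bona fide PDF. An alternative route that sidesteps the fractional calculus is to pass to characteristic functions: the finite‑dimensional characteristic functions of a SIRP are elliptically contoured with a common generator $\Psi$, consistency makes $\Psi(|{\bf u}|^2)$ a valid characteristic function on $R^n$ for every $n$, Schoenberg's theorem then forces $\Psi(t)=\int_0^\infty e^{-t\sigma^2/2}\,dG(\sigma^2)$, i.e. the Gaussian scale mixture, and inverting the Fourier transform recovers \eqref{eq3b-2}; this trades Bernstein's theorem for Schoenberg's but is otherwise the same idea.
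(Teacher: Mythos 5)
The paper does not prove this statement at all: Theorem \ref{thm SIRP representation} is the classical SIRP representation theorem, quoted verbatim from the literature (\cite{Yao-73}--\cite{Goldman}) and used as an imported tool, so there is no in-paper argument to compare against. Your sketch is a legitimate, essentially correct proof of that classical result. The sufficiency direction (Gaussian scale mixture $\Rightarrow$ consistent elliptical family) is routine and you handle it correctly. For necessity, your primary route --- the marginalization recursion $h_{n-1}(u)=\kappa_n\int_0^\infty h_n(u+v)v^{-1/2}\,dv$, its two-fold iteration giving $h_{n+2}\propto -h_n'$, non-negativity of all generators forcing complete monotonicity of $h_2$, and Bernstein's theorem --- is the standard ``density-generator'' proof (Kelker-style), while the characteristic-function/Schoenberg route you mention as an alternative is essentially Yao's original argument in \cite{Yao-73}. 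You correctly identify the genuine technical points that a full write-up must nail down: smoothness of $h_2$ (which in fact comes for free from the recursion, since $h_2$ is an iterated integral of higher generators), exclusion of an atom of $\mu$ at $s=0$ (a constant component is not integrable on $R^2$), control of $\mu$ near $s=\infty$ (i.e.\ $f(\sigma)$ near $\sigma=0$), and the fact that the normalization $\int f(\sigma)\,d\sigma=1$ is forced once $c_n=(2\pi)^{-n/2}|{\bf C}_n|^{-1/2}$ is fixed. None of these is a gap in the idea, only in the level of detail. In the context of this paper the economical course is simply to cite \cite{Yao-73} and \cite{Goldman}, as the authors do; but as a self-contained proof your outline is sound.
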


An equivalent representation is $X(t) = C Y(t)$, where $Y(t)$ is the Gaussian random process of unit variance, and $C$ is an independent random variable of PDF $f(\sigma)$, so that Theorem \ref{thm SIRP representation} basically says that any SIRP can be obtained by modulating the Gaussian random process by an independent random variable \cite{Wise}. A number of PDFs that satisfy Theorem \ref{thm SIRP representation} and corresponding $f(\sigma)$ can be found in \cite{Conte} (which include Laplacian and power exponential densities above).

It was shown in \cite{Conte} that the optimal decoder under the SIRP noise is still the minimum distance one (which follows from the fact that $h_n(r)$ in \eqref{eq3b-2} is monotonically decreasing in $r$). Using this, we are now in a position to establish the SER convexity properties under SIRP noise with $\bf{C} = I$.
\begin{thm}
\label{thm SIRP SER convexity}
Consider an additive SIRP noise channel, where the noise density is as in \eqref{eq3b-1} and \eqref{eq3b-2} with $\bf{C} = I$. Assume that $f(\sigma)$ in \eqref{eq3b-2} has bounded support: $f(\sigma)=0 \ \forall \sigma \notin [\sigma_1, \sigma_2]$. Then, the SER of any decoder with center-convex decision regions (including the min-distance/ML one as a special case) operating in this channel is convex at high and concave at low SNR as follows:
\begin{eqnarray}
\label{eq3b-3}
P_e(p_s)'' &\ge& 0 \ \mbox{if} \ p_s \ge (n-2) \sigma_2^2 / d_{\min}^2 \\
\label{eq3b-4}
P_e(p_s)'' &\le& 0 \ \mbox{if} \ p_s \le (n-2) \sigma_1^2 / d_{\max}^2
\end{eqnarray}
where $p_s$ is the signal power, and $d_{\min(\max)}$ is the minimum (maximum) distance in the normalized constellation (corresponding to $p_s=1$).
\end{thm}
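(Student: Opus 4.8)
The plan is to exploit the defining structure of a SIRP noise made explicit by Theorem~\ref{thm SIRP representation} and the representation $X(t)=CY(t)$ discussed after it: conditioned on the modulating random variable $C=\sigma$ (whose PDF $f(\sigma)$ is supported on $[\sigma_1,\sigma_2]$), the noise is Gaussian, $\bxi\mid\sigma\sim\mathcal{N}(\mathbf{0},\sigma^{2}\mathbf{I})$, so the channel collapses to an ordinary AWGN channel. The SER is then an average over $\sigma$ of AWGN SERs, and its convexity in the signal power $p_s$ can be read off from the already-established AWGN result, Corollary~\ref{cor SER SNR unimodal}.

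Concretely, I would first fix a constellation point $\bs_i$ and write $P_{ei}(p_s)=\int_{\sigma_1}^{\sigma_2}P_{ei}(p_s\mid\sigma)\,f(\sigma)\,d\sigma$, where $P_{ei}(p_s\mid\sigma)$ is the error probability of the given center-convex decoder in the Gaussian channel $\br=\sqrt{p_s}\,\bs_i+\bxi$ with $\bxi\sim\mathcal{N}(\mathbf{0},\sigma^{2}\mathbf{I})$. Rescaling $\br$ by $1/\sqrt{p_s}$ turns this into the normalized channel ($\frac{1}{M}\sum_i|\bs_i|^{2}=1$) with per-dimension noise variance $\sigma^{2}/p_s$, i.e. SNR $\gamma=p_s/\sigma^{2}$; since scaling about the center preserves center-convexity, the decision regions stay center-convex with the same normalized distances $d_{\min,i},d_{\max,i}$. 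Hence $P_{ei}(p_s\mid\sigma)=g_i(p_s/\sigma^{2})$, where $g_i$ is the AWGN SER of this decoder (whose convexity is governed by Corollary~\ref{cor SER SNR unimodal} with $d_{\min},d_{\max}$ replaced by $d_{\min,i},d_{\max,i}$), so that
\[
P_{ei}(p_s)=\int_{\sigma_1}^{\sigma_2}g_i(p_s/\sigma^{2})\,f(\sigma)\,d\sigma,\qquad
P_{ei}(p_s)''=\int_{\sigma_1}^{\sigma_2}\sigma^{-4}\,g_i''(p_s/\sigma^{2})\,f(\sigma)\,d\sigma,
\]
the second identity following by differentiating twice under the integral sign.

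It then remains to invoke Corollary~\ref{cor SER SNR unimodal}: the Gaussian noise power density is unimodal with $p^{*}=n-2$, so $g_i''(\gamma)\ge 0$ for $\gamma\ge(n-2)/d_{\min,i}^{2}$ and $g_i''(\gamma)\le 0$ for $\gamma\le(n-2)/d_{\max,i}^{2}$. For convexity I would require $p_s/\sigma^{2}\ge(n-2)/d_{\min,i}^{2}$ for every $\sigma$ in the support of $f$; since $p_s/\sigma^{2}$ is smallest at $\sigma=\sigma_2$, this is exactly $p_s\ge(n-2)\sigma_2^{2}/d_{\min,i}^{2}$, and then every integrand in the formula for $P_{ei}(p_s)''$ is nonnegative. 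Symmetrically, $p_s/\sigma^{2}\le(n-2)/d_{\max,i}^{2}$ for all $\sigma$ is the binding constraint at $\sigma=\sigma_1$, giving $p_s\le(n-2)\sigma_1^{2}/d_{\max,i}^{2}$ and $P_{ei}(p_s)''\le 0$. Finally, averaging over $i$ with the weights $\Pr[\bs=\bs_i]$ (a positive linear combination preserves convexity and concavity) and tightening the thresholds uniformly via $d_{\min}=\min_i d_{\min,i}$, $d_{\max}=\max_i d_{\max,i}$ yields \eqref{eq3b-3}--\eqref{eq3b-4}.

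The only steps needing real care are bookkeeping ones: checking that the conditional channel is genuinely the normalized AWGN channel at SNR $p_s/\sigma^{2}$ (so that the power normalization and the scaling of the decision regions are done consistently and $d_{\min,i}$ is precisely the normalized distance entering Corollary~\ref{cor SER SNR unimodal}), and justifying the interchange of the second $p_s$-derivative with the $\sigma$-integral. The latter follows from the smoothness of the Gaussian SER in the SNR together with the bounded support of $f(\sigma)$ (and from the exponential decay of $g_i''$ at large SNR, which dominates the integrand even as $\sigma\to 0^{+}$), so no analytic machinery beyond Corollary~\ref{cor SER SNR unimodal} is needed. I would also note that if $f(\sigma)$ charges a set of positive measure strictly inside $[\sigma_1,\sigma_2]$ then the inequalities become strict in the open high/low-SNR regimes, paralleling Corollary~\ref{cor SER equal Omega_i}.
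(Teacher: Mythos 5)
Your proposal is correct and follows essentially the same route as the paper's proof: both condition on the modulating variable $\sigma$ of the Gaussian-mixture (SIRP) representation, exploit the fact that the conditional Gaussian power density is unimodal with mode $(n-2)\sigma^2$, and take the worst case over $\sigma\in[\sigma_1,\sigma_2]$ and over $R_i(\btheta)\in[d_{\min,i},d_{\max,i}]$ to fix the sign of the second derivative. The only difference is packaging: you invoke the already-established AWGN/unimodal corollary together with the convexity-preserving property of non-negative mixtures, whereas the paper writes out the triple integral over $(\btheta,\sigma,p)$ and checks the sign of the integrand directly — the underlying computation is identical.
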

\begin{proof}
See Appendix.
\end{proof}

Note that the high/low SNR bounds are independent of a particular form of $f(\sigma)$, but depend only on the corresponding boundaries of its support set. A particular utility of this Theorem is due to the fact that closed-form expressions of $P_e(p_s)$ are not available in most cases so its convexity cannot be evaluated directly. The following Corollary is immediate.

\begin{cor}
Consider a decoder with center-convex decision regions operating in the SIRP noise channel as in Theorem \ref{thm SIRP SER convexity} without the bounded support assumption. Its SER $P_e(p_s)$ is always convex when $n \le 2$: $P_e(p_s)'' \ge 0 \ \forall p_s$. \ \QED
\end{cor}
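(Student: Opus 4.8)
The plan is to obtain the statement as a degenerate case of Theorem~\ref{thm SIRP SER convexity} (and of the argument behind it), observing that for $n\le 2$ the high-SNR threshold $(n-2)\sigma_2^2/d_{\min}^2$ appearing there is non-positive, so the convexity regime becomes all of $p_s\ge 0$; the one extra point to verify is that the bounded-support hypothesis is then unnecessary. First I would recall the mechanism behind Theorem~\ref{thm SIRP SER convexity}: by the SIRP representation (Theorem~\ref{thm SIRP representation}), conditioning on the modulating variable $C=\sigma$ turns the channel \eqref{eq2-1} with density \eqref{eq3b-1}--\eqref{eq3b-2} and $\bf{C}=I$ into an ordinary AWGN channel of noise variance $\sigma^2$ per dimension; since the decision regions of a given center-convex decoder do not depend on $\sigma$, the law of total probability gives
\[
P_e(p_s)=\int_0^{\infty}P_{e|\sigma}(p_s)\,f(\sigma)\,d\sigma ,
\]
where $P_{e|\sigma}(p_s)$ is that decoder's SER in the conditional AWGN channel at signal power $p_s$.

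Next I would apply the AWGN result to each conditional SER. In the conditional channel the noise power density in spherical coordinates is proportional (up to an angular factor) to $p^{\,n/2-1}e^{-p/2}$, which is non-increasing in $p$ on $(0,\infty)$ whenever $n\le 2$ (the factor $p^{\,n/2-1}$ is non-increasing and $e^{-p/2}$ is strictly decreasing), so by Corollary~\ref{cor SER SNR decreas.} --- equivalently by \eqref{eq3a-6} with $\gamma_1\downarrow 0$, $\gamma_2\uparrow\infty$, or by Corollary~\ref{cor SER SNR unimodal} whose threshold $(n-2)/d_{\min}^2$ is then $\le 0$ --- $P_{e|\sigma}$ is convex in the SNR $p_s/\sigma^2$, hence, being a positive rescaling of $p_s$ for fixed $\sigma$, convex in $p_s$ on all of $[0,\infty)$, and this holds for every $\sigma\ge 0$. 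A non-negative mixture of convex functions being convex, the displayed identity yields $P_e(p_s)''\ge 0$ for all $p_s$, with no restriction on $f(\sigma)$: the bounded support assumed in Theorem~\ref{thm SIRP SER convexity} served only to make the per-$\sigma$ threshold $(n-2)\sigma^2/d_{\min}^2$ uniform when $n>2$, whereas here it is non-positive uniformly in $\sigma$. An equivalent and slightly shorter route avoids conditioning: the SIRP noise power density in spherical coordinates equals, up to an angular factor, $p^{\,n/2-1}h_n(\sigma_0^2 p)$, where $h_n$ is strictly decreasing by \eqref{eq3b-2} (differentiate under the integral sign; each term $\sigma^{-n}e^{-r/2\sigma^2}$ is decreasing in $r$); for $n\le 2$ the factor $p^{\,n/2-1}$ is non-increasing, so $f_p'(p,\btheta)\le 0$ for all $p,\btheta$ and Corollary~\ref{cor SER SNR decreas.} applies.

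I do not expect a genuine obstacle, since this is an easy specialization; the points deserving a line of care are (i) justifying differentiation under the integral sign in \eqref{eq3b-2} (dominated convergence, the integrand being smooth and integrable for $r>0$); (ii) the boundary case $n=2$, where $p^{\,n/2-1}\equiv 1$ so the strict decrease of the power density comes entirely from $h_n$, and the case $n<2$, where $p^{\,n/2-1}\to\infty$ as $p\to 0^+$, an integrable singularity that does not affect monotonicity on $(0,\infty)$; and (iii) the elementary equivalence, for a fixed noise level, between convexity of $P_e$ in the SNR $\gamma$ and in the signal power $p_s$, used to pass from the SNR statements of Section~\ref{sec3a} to the signal-power statement claimed here.
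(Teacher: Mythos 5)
Your proposal is correct and follows essentially the same route as the paper: the paper's proof of Theorem~\ref{thm SIRP SER convexity} writes $P_{ci}(p_s)''$ as a mixture over $\sigma$ of conditional (Gaussian) terms whose sign is controlled by $f_p'(p|\sigma)$, and the corollary is then immediate because for $n\le 2$ the mode $p^*=(n-2)\sigma^2$ is non-positive, so $f_p'(p|\sigma)\le 0$ for every $p\ge 0$ and every $\sigma$, making the bounded-support hypothesis superfluous --- exactly your conditioning-plus-convex-mixture argument. Your alternative route via the monotonicity of $h_n$ and Corollary~\ref{cor SER SNR decreas.} is likewise consistent with the paper's own remarks and adds nothing that conflicts with it.
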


Thus, the SER is convex in low dimensions for all the noise densities in Table I in \cite{Conte} (i.e. contaminated normal, generalized Laplace, Cauchy and Gaussian), which extends the corresponding result in Theorem 1 to a generic SIRP noise.

While Corollary 3.1 characterizes the SER convexity for the identical spherical decision regions in the AWGN channel, such a simple characterization is not possible in a SIRP channel in general (when $\sigma_1 \neq \sigma_2$), as Theorem \ref{thm SIRP SER convexity} shows.

\subsection{Non-negative Mixture is Convexity-Preserving}
\label{sec3c}
The next proposition generalizes further the results above and shows that any non-negative mixture of noise densities is convexity-preserving in terms of error rates of a given decoder for any variable of interest. We will need the following definition.
\begin{defn}
Let $\{f_i\}$ be a set of noise densities, $i=1...m$. Its convex hull \cite{Boyd} is any non-negative liniear combination which is also a density,
\[
\sH\{f_i\} = \{f: f = \sum_i \alpha_i f_i, \ \alpha_i \ge 0, \sum_i \alpha_i = 1 \}
\]
\end{defn}

\begin{prop}
\label{prop.3C.1}
Let $P_e[f]$ be an error rate of a given decoder as a functional of noise density $f$ and let all $P_e[f_i]$ be convex, $P_e[f_i]'' \ge 0 \ \forall i$, where the derivative is over any variable of interest (SNR, power/amplitude of signal/noise). Then,
\begin{align}
\label{eq3c-5}
P_e[f]'' \ge 0 \ \forall f \in \sH\{f_i\}
\end{align}
\end{prop}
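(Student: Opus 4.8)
The plan is to exploit the fact that every error rate in this paper --- the symbol error $P_{ei}$ of \eqref{eq2-5}, the PEP $\Pr\{\bs_i\to\bs_j\}$, the average SER $P_e$, and the BER of \eqref{eq2-9} --- is an \emph{affine} functional of the noise density $f$, once the decoder (hence the decision regions $\Omega_i$) and the value of the variable of interest are fixed. Indeed $P_{ei}=1-\int_{\Omega_i}f_\xi(\bx)\,d\bx$ and $\Pr\{\bs_i\to\bs_j\}=\int_{\Omega_j}f_\xi(\bx)\,d\bx$, where $\Omega_i,\Omega_j$ depend on the decoder but not on $f$. Writing $t$ for whichever scalar variable we differentiate in (SNR, signal power/amplitude, noise power), $t$ enters only through an $f$-independent, density-to-density \emph{linear} operation: a rescaling $f(\bx)\mapsto\lambda^{-n}f(\bx/\lambda)$ for SNR / signal or noise power, or an equivalent rescaling $\Omega_i(A)=A\,\Omega_i(1)$ of the decision-region boundaries for signal amplitude. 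Hence, for every fixed $t$ in the range of interest, the map $f\mapsto P_e[f](t)$ is affine: $P_e[f](t)=c+L_t(f)$ with $L_t$ linear in $f$ and $c\in\{0,1\}$, the constant term being preserved precisely because $\int f_\xi=1$ is preserved by the rescaling; via \eqref{eq2-9} this covers the BER as well, being a fixed nonnegative linear combination of PEPs.

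First I would make the displayed affine representation precise for each of SER, PEP and BER and for each admissible variable $t$. Then, taking $f=\sum_{i=1}^m\alpha_i f_i\in\sH\{f_i\}$ with $\alpha_i\ge0$, $\sum_i\alpha_i=1$, linearity of $L_t$ together with $\sum_i\alpha_i=1$ gives
\[
P_e[f](t)=c+L_t\Big(\sum_i\alpha_i f_i\Big)=\sum_i\alpha_i\big(c+L_t(f_i)\big)=\sum_{i=1}^m\alpha_i\,P_e[f_i](t),
\]
so the error-rate curve of the mixture is literally the same convex combination of the component curves, for all $t$.

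Finally I would differentiate this identity twice in $t$. Since the sum is finite, differentiation passes through it termwise, $\dfrac{d^2}{dt^2}P_e[f](t)=\sum_{i=1}^m\alpha_i\,\dfrac{d^2}{dt^2}P_e[f_i](t)$, and by hypothesis each $P_e[f_i]''\ge0$ while $\alpha_i\ge0$, whence $P_e[f]''\ge0$, which is \eqref{eq3c-5}. The same computation with inequalities reversed shows a mixture of concave error-rate curves is concave, and with equalities shows affineness is preserved; individual $P_{ei}$, PEP and BER are handled identically.

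The argument is essentially bookkeeping, so the only real point to make explicit --- and the one I would flag as the crux --- is that \emph{every} variable of interest acts on the error rate through an operation that is linear in the noise density, so the ``affine in $f$'' structure, and therefore its commutation with the mixture, is never lost; the twice-differentiability and the legitimacy of differentiating under the integral are already granted by the hypothesis $P_e[f_i]''\ge0$. A secondary remark worth including is that nothing uses finiteness of $m$: replacing $\sum_i\alpha_i(\cdot)$ by $\int(\cdot)\,d\mu$ for a probability measure $\mu$ over a family of densities yields the same conclusion (now interchanging $\tfrac{d^2}{dt^2}$ with the integral, justified by the same regularity), which is exactly what is needed to pass from the Gaussian case to a general SIRP noise through the continuous Gaussian mixture \eqref{eq3b-2}.
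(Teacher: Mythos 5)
Your proposal is correct and follows essentially the same route as the paper: the paper's proof likewise rests on the observation that the probability of correct decision (equivalently, the error rate up to an additive constant) is a linear functional of the noise density, so that $P_e[\sum_i\alpha_i f_i]=\sum_i\alpha_i P_e[f_i]$ pointwise in the variable of interest, and convexity is then preserved under the non-negative combination. Your extra remarks on the affine constant, the role of $\sum_i\alpha_i=1$, and the extension to continuous mixtures match what the paper states around \eqref{eq3c-6} and \eqref{eq3c-7}.
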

\begin{proof}
A key observation here is to note, from \eqref{eq2-5}, that that the probability of correct decision $P_c[f] = 1 - P_e[f]$ is a linear functional of the underlying noise density $f$,
\begin{align}
\label{eq3c-6} \notag
P_c\left[\sum_i \alpha_i f_i\right] &= \sum_k \Pr\{\bs=\bs_k\} \int_{\Omega_k} \sum_i \alpha_i f_i(\bx) d\bx \\ \notag
&= \sum_i \alpha_i \sum_k \Pr\{\bs=\bs_k\} \int_{\Omega_k} f_i(\bx) d\bx \\
&= \sum_i \alpha_i P_c\left[f_i\right]
\end{align}
Since $\alpha_i \ge 0$ and each $P_c\left[f_i\right]$ is concave, so is $P_c[f]$, from which the result follows.
\end{proof}

Thus, a convexity of error rates of a given decoder for $\{f_i\}$ is sufficient to insure the convexity for any $f$ in the convex hull of $\{f_i\}$.

The same preservation holds for concavity and also when the corresponding property is strict. It can be further extended to continuous mixtures  as well,
\begin{align}
\label{eq3c-7}
P_e\left[\int \rho(\alpha) f_{\alpha} d\alpha\right]'' \ge 0 \ \mbox{if} \ P_e[f_{\alpha}]'' \ge 0 \ \forall \alpha, \rho(\alpha) \ge 0, \int \rho(\alpha) d\alpha =1
\end{align}
where $f_{\alpha}$ is the noise density parameterized by a continuous mixture parameter $\alpha$ and $\rho(\alpha)$ is a (non-negative) density of $\alpha$.  Such a mixture  can model a fading channel where $\alpha$ represents the channel (random) gain, so that \eqref{eq3c-7} states in fact that (flat) fading is a convexity-preserving process. This will be elaborated in further details in Section \ref{sec 6}.

Observe that the same convexity-preserving property holds in terms of the signal power/amplitude and noise power/amplitude, due to the linearity of $P_c[f]$. Note also that Proposition \ref{prop.3C.1} and \eqref{eq3c-7} extend the confexity/concavity properties to a very broad class of noise densities, which includes, as a special case, the SIRP noise, and do not even assume convex or center-convex decision regions\footnote{To the best of our knowledge, this is the most general known result about the convexity properties of error rates.}.

The following result is a direct consequence of Proposition \ref{prop.3C.1}.

\begin{prop}
\label{prop.3C.2}
Let $\sC$ be a set of all noise densities for which error rates of a given decoder are convex (in any variable of interest). $\sC$ is a convex set.
\end{prop}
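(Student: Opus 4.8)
Proposition \ref{prop.3C.2} asks to show that the set $\mathcal{C}$ of all noise densities for which the error rates of a fixed decoder are convex (in any fixed variable of interest) is itself a convex set.

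The plan is to reduce the statement directly to Proposition \ref{prop.3C.1}, which already contains essentially all the work. First I would fix the variable of interest (say the SNR, or any of signal/noise power/amplitude) and fix the decoder; convexity of $P_e[f]$ is then a property of the single functional argument $f$. Take two densities $f_1, f_2 \in \mathcal{C}$, so that $P_e[f_1]'' \ge 0$ and $P_e[f_2]'' \ge 0$, where the derivative is with respect to the chosen variable. I then need to show that every convex combination $f = \alpha f_1 + (1-\alpha) f_2$, with $0 \le \alpha \le 1$, again lies in $\mathcal{C}$, i.e. $P_e[f]'' \ge 0$.

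The key step is to invoke Proposition \ref{prop.3C.1} with the two-element family $\{f_1, f_2\}$: since $P_e[f_i]'' \ge 0$ for $i=1,2$, and since any convex combination $\alpha f_1 + (1-\alpha) f_2$ is exactly the generic element of the convex hull $\mathcal{H}\{f_1, f_2\}$, the conclusion \eqref{eq3c-5} gives $P_e[\alpha f_1 + (1-\alpha) f_2]'' \ge 0$, i.e. $\alpha f_1 + (1-\alpha) f_2 \in \mathcal{C}$. One should note that $\alpha f_1 + (1-\alpha) f_2$ is indeed a valid noise density (non-negative and integrating to one), so the hypothesis of Proposition \ref{prop.3C.1} is genuinely met; this is the only thing to check and it is immediate from non-negativity of $\alpha, 1-\alpha$ and linearity of integration. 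Since this holds for arbitrary $f_1, f_2 \in \mathcal{C}$ and arbitrary $\alpha \in [0,1]$, the set $\mathcal{C}$ is convex by definition of a convex set.

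I do not anticipate a serious obstacle here — the result is a corollary in the strict sense. The only point requiring a line of care is making explicit that $\mathcal{C}$ is being regarded as a subset of the (convex) set of all probability densities, so that convex combinations are well-defined within the ambient space; this is what lets us speak of $\mathcal{C}$ as ``a convex set'' rather than merely ``closed under convex combinations.'' One could also remark, for completeness, that the same argument with the continuous-mixture extension \eqref{eq3c-7} shows $\mathcal{C}$ is closed under arbitrary (not just finite) mixtures, i.e. it is closed in a stronger ``convexity'' sense, but for the stated proposition the two-point argument suffices.
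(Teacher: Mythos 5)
Your proof is correct and follows exactly the route the paper intends: the paper states Proposition \ref{prop.3C.2} as an immediate consequence of Proposition \ref{prop.3C.1}, and your two-element specialization of the convex-hull argument (together with the observation that a convex combination of densities is again a density) is precisely that reduction. Nothing is missing.
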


\subsection{Convexity in Signal Amplitude}
\label{sec3d}

Convexity of the SER as a function  of signal amplitude $A=\sqrt{\gamma}$, $P_e(A)$, is also important for some optimization problems (e.g. an equalizer design). For the ML decoder operating in the AWGN channel those properties have been established in \cite{Loyka-10}, which are  summarized in Proposition \ref{prop SER A old} for completeness.
\begin{prop}
\label{prop SER A old}
Consider the ML decoder in the AWGN channel. Its SER $P_{ei}(A)$ as a function of signal amplitude $A$ has the following convexity properties:
\begin{itemize}
\item $P_{ei}(A)$ is always convex in $A$ if $n=1$,

\item For $n>1$, it is convex in the large SNR regime $A \ge \sqrt{\alpha_1} /d_{\min,i}$ and concave in the small SNR regime $A \le \sqrt{\alpha_2} /d_{\max,i}$, where
    \[
    \alpha_1 = (2n + 1 +\sqrt{8n+1})/2, \ \alpha_2 = (2n + 1 -\sqrt{8n+1})/2
    \]
    and there are an odd number of inflection points in-between.

\item The same applies to $P_{e}(A)$ via the substitution $d_{\max(\min),i} \rightarrow  d_{\max(\min)}$. \ \QED
\end{itemize}
\end{prop}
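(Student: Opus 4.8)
The plan is to reduce the claim to a one–dimensional sign analysis of $P_{ei}(A)''$, exactly parallel to the SNR argument behind Theorem~\ref{thm SER SNR new}. Normalize $\sigma_0=1$ and regard $A$ as the common scaling applied to a fixed unit–power constellation; for the ML decoder the half–spaces in \eqref{eq2-6} keep the same normals and acquire proportional offsets under this scaling, so, with the origin placed at $\bs_i$, the decision region at amplitude $A$ is $A\Omega_i$, where $\Omega_i$ is the region at $A=1$. Since $\Omega_i$ is convex it is star–shaped about $\bs_i$, hence the radial boundary distance $R(\btheta)$ is well defined and measurable with $\inf_{\btheta}R(\btheta)=d_{\min,i}$, $\sup_{\btheta}R(\btheta)=d_{\max,i}$. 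Writing \eqref{eq2-5} in the spherical coordinates $(t,\btheta)$, $t=|\bx|$, with the angular measure carrying the Jacobian of the transformation introduced above, gives
\begin{equation}
P_{ei}(A)=1-(2\pi)^{-n/2}\int\Bigl[\int_{0}^{A R(\btheta)}t^{\,n-1}e^{-t^{2}/2}\,dt\Bigr]d\btheta .
\end{equation}

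Next I would differentiate twice under the integral sign (legitimate by dominated convergence, the integrand and its $A$–derivatives being bounded uniformly on compact $A$–intervals), obtaining
\begin{equation}
P_{ei}(A)''=-(2\pi)^{-n/2}A^{\,n-2}\int R(\btheta)^{n}e^{-A^{2}R(\btheta)^{2}/2}\bigl[(n-1)-A^{2}R(\btheta)^{2}\bigr]d\btheta .
\end{equation}
The sign of $P_{ei}(A)''$ is therefore governed pointwise in $\btheta$ by the bracket $(n-1)-A^{2}R(\btheta)^{2}$. If $A^{2}d_{\min,i}^{2}\ge n-1$ the bracket is $\le0$ for every $\btheta$, so $P_{ei}(A)''\ge0$ (convexity); if $A^{2}d_{\max,i}^{2}\le n-1$ the bracket is $\ge0$ everywhere, so $P_{ei}(A)''\le0$ (concavity). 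For $n=1$ the bracket equals $-A^{2}R(\btheta)^{2}<0$ for all $A>0$, hence $P_{ei}$ is convex throughout. In between, $P_{ei}(A)''$ is real–analytic in $A>0$, negative for small $A$ (for $n\ge2$ the bracket tends to $n-1>0$) and positive for large $A$, so it changes sign an odd number of times, i.e.\ there are an odd number of inflection points; this is the same sign–change reasoning used for Theorem~\ref{thm SER SNR new}. Finally $P_e(A)=\sum_i P_{ei}(A)\Pr\{\bs=\bs_i\}$ is a nonnegative combination, which gives the averaged statement with $d_{\min(\max),i}$ replaced by $d_{\min(\max)}$.

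This route actually yields the sharper thresholds $A\ge\sqrt{n-1}/d_{\min,i}$ and $A\le\sqrt{n-1}/d_{\max,i}$; since $\alpha_2=(2n+1-\sqrt{8n+1})/2\le n-1\le(2n+1+\sqrt{8n+1})/2=\alpha_1$ for every $n\ge1$, the intervals in Proposition~\ref{prop SER A old} are contained in them and the proposition follows a fortiori (the sharp form being recorded in Theorem~\ref{thm SER A new}). An equivalent derivation applies the divergence theorem to obtain $P_{ci}(\gamma)'=\tfrac{1}{2\gamma}\oint_{\partial\Omega_i}(\bx\cdot\hat{\mathbf n})f_\xi\,dS$ and, differentiating once more, $P_{ci}(\gamma)''=\tfrac{1}{4\gamma^2}\oint_{\partial\Omega_i}(\bx\cdot\hat{\mathbf n})\bigl[(n-2)-\gamma|\bx|^2\bigr]f_\xi\,dS$; combining these through the chain rule $P_{ei}(A)''=-2P_{ci}(\gamma)'-4\gamma P_{ci}(\gamma)''$ at $\gamma=A^2$ produces the bracket $(n-1)-\gamma|\bx|^2$, which is again nonpositive on $\partial\Omega_i$ as soon as $\gamma d_{\min,i}^2\ge n-1$ (using $\bx\cdot\hat{\mathbf n}\ge0$ on a Voronoi boundary).

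The step I expect to be the real obstacle is the inflection–point count: one must verify that the zeros of $P_{ei}(A)''$ on $(0,\infty)$ are isolated and that every genuine change of concavity is counted exactly once. Real–analyticity of the integral in $A$ settles isolation, but care is needed when $\Omega_i$ is unbounded (then $d_{\max,i}=\infty$ and only the high–$A$ convexity assertion is non‑vacuous) and when $\partial\Omega_i$ is not smooth, so that $R(\btheta)$, the angular Jacobian and the surface integrals above must be handled as measurable rather than $C^1$ objects.
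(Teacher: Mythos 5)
Your proposal is correct and follows essentially the same route as the paper: the paper itself only quotes Proposition~\ref{prop SER A old} from [Loyka-10], but your spherical-coordinate computation of $P_{ci}(A)''$ with the sign governed by $(n-1)-A^2R_i^2(\btheta)$ is exactly the paper's Appendix proof of the sharpened Theorem~\ref{thm SER A new}, from which Proposition~\ref{prop SER A old} follows a fortiori via $\alpha_2\le n-1\le\alpha_1$, just as you argue. The only cosmetic difference is that you carry the Jacobian inside the angular measure rather than writing out $f_{\btheta}(\btheta)$ and the normalized amplitude density explicitly; the sign analysis, the $n=1$ case, and the continuity argument for the odd number of inflection points are identical.
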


The next Theorem provides tighter high/low SNR bounds, which cannot be further improved in general, and also extends the result to any decoder with center-convex decision regions (of which the ML/min-distance one is a special case).

\begin{thm}
\label{thm SER A new}
Consider a decoder with center-convex decision regions operating in the AWGN channel. Its SER $P_{ei}(A)$ as a function of signal amplitude $A$ has the following convexity properties for any $n$:
\begin{itemize}
\item The SER is convex in $A$ in the large SNR regime:
 \[
P_{ei}(A)'' \ge 0 \mbox{ if} \ A \ge \sqrt{n-1} /d_{\min,i}
\]

\item It is concave in the small SNR regime
 \[
P_{ei}(A)'' \le 0 \mbox{ if} \ A \le \sqrt{n-1} /d_{\max,i}
\]

\item There are an odd number of inflection points in-between.

\item The bounds cannot be further tightened in general (without further assumptions on the constellation geometry).

\item The same applies to $P_{e}(A)$ via the substitution $d_{\max(\min),i} \rightarrow  d_{\max(\min)}$.
\end{itemize}
\end{thm}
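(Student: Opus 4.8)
The plan is to adapt the spherical-coordinate argument underlying Theorem \ref{thm SER SNR new} to the amplitude variable. Fix the noise at unit variance and let the (normalized) constellation, hence every decision region, be scaled by $A$, so that with the origin placed at $\bs_i$ one has $P_{ci}(A)=(2\pi)^{-n/2}\int_{A\Omega_i}e^{-|\bx|^2/2}\,d\bx$, where $P_{ei}(A)=1-P_{ci}(A)$. Because $\Omega_i$ is center-convex, in spherical coordinates it is the star-shaped set $\{r\,\bw(\btheta):0\le r\le R_i(\btheta)\}$, where $\bw(\btheta)$ is the unit vector in direction $\btheta$ and the radial boundary satisfies $R_i(\btheta)\ge d_{\min,i}$ for all $\btheta$ and $R_i(\btheta)\le d_{\max,i}$ whenever the latter is finite. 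The substitution $t=Ar$ concentrates the $A$-dependence into the upper limit of a one-dimensional integral,
\[
P_{ci}(A)=(2\pi)^{-n/2}\!\int d\btheta\;G\bigl(A\,R_i(\btheta)\bigr),\qquad G(u)=\int_0^u t^{\,n-1}e^{-t^2/2}\,dt,
\]
where the directions with $R_i(\btheta)=\infty$ contribute only the finite constant $G(\infty)$ and so drop out of every derivative in $A$.

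Next I would differentiate twice under the integral sign. Since $G'(u)=u^{\,n-1}e^{-u^2/2}$,
\[
\frac{d^2}{dA^2}G\bigl(A R_i(\btheta)\bigr)=R_i(\btheta)^2\bigl(A R_i(\btheta)\bigr)^{n-2}e^{-(A R_i(\btheta))^2/2}\Bigl[(n-1)-\bigl(A R_i(\btheta)\bigr)^2\Bigr],
\]
so the sign of the $\btheta$-integrand of $P_{ci}(A)''$ is the sign of $(n-1)-(A R_i(\btheta))^2$. If $A\ge\sqrt{n-1}/d_{\min,i}$ then $A R_i(\btheta)\ge A\,d_{\min,i}\ge\sqrt{n-1}$ for all $\btheta$, so $P_{ci}(A)''\le 0$ and therefore $P_{ei}(A)''=-P_{ci}(A)''\ge 0$; symmetrically, $A\le\sqrt{n-1}/d_{\max,i}$ forces $A R_i(\btheta)\le\sqrt{n-1}$ and hence $P_{ei}(A)''\le 0$. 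For $n=1$ the bracket equals $-(A R_i)^2<0$ for every $A>0$, giving convexity on all of $(0,\infty)$. The statement for $P_e(A)=\sum_i P_{ei}(A)\Pr\{\bs=\bs_i\}$ is then immediate, since a nonnegative linear combination preserves convexity and concavity and $d_{\min}=\min_i d_{\min,i}$, $d_{\max}=\max_i d_{\max,i}$. The shift from the threshold $n-2$ in $\gamma$ (Theorem \ref{thm SER SNR new}) to $n-1$ in $A$ simply mirrors $\gamma=A^2$ and the identity $\tfrac{d^2}{dA^2}=4\gamma\tfrac{d^2}{d\gamma^2}+2\tfrac{d}{d\gamma}$, whose extra first-derivative term adds one unit to the relevant bracket.

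For the odd number of inflection points I would use the same parity argument as in Theorem \ref{thm SER SNR new}: $P_{ei}(A)''$ is continuous on $(0,\infty)$, nonpositive for $A\le\sqrt{n-1}/d_{\max,i}$ and nonnegative for $A\ge\sqrt{n-1}/d_{\min,i}$, so it changes sign an odd number of times in the intervening interval. For the tightness claim I would exhibit the extremal configuration of a single spherical decision region of radius $d$ (so $d_{\min,i}=d_{\max,i}=d$), where the angular integral is trivial and
\[
P_{ei}(A)''=-(2\pi)^{-n/2}\omega_{n-1}\,d^{2}(Ad)^{n-2}e^{-(Ad)^2/2}\Bigl[(n-1)-(Ad)^2\Bigr],
\]
with $\omega_{n-1}$ the surface area of the unit sphere in $n$ dimensions; this is strictly negative for $A<\sqrt{n-1}/d$, zero at $A=\sqrt{n-1}/d$, and strictly positive for $A>\sqrt{n-1}/d$, so neither threshold can be moved without further assumptions on the constellation geometry. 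This is the amplitude analogue of Corollary \ref{cor SER equal Omega_i}.

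I expect the one genuinely delicate point to be justifying the differentiation under the integral sign. For a general center-convex region, $R_i(\btheta)$ need be neither continuous nor bounded (it is $+\infty$ in some directions for the min-distance decoder), so one must verify that $R_i$ is measurable, that $|\partial_A G(A R_i)|$ and $|\partial_A^2 G(A R_i)|$ are dominated uniformly for $A$ in compact subsets of $(0,\infty)$ — which holds since each map $v\mapsto v^{k}e^{-v^2/2}$ is bounded — and that the $R_i=\infty$ directions contribute nothing; once this bookkeeping is done exactly as in the proof of Theorem \ref{thm SER SNR new}, the rest is the elementary one-variable computation above.
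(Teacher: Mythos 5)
Your proposal is correct and follows essentially the same route as the paper's proof: pass to spherical coordinates, write $P_{ci}(A)$ as an angular integral of $\int_0^{AR_i(\btheta)} r^{n-1}e^{-r^2/2}\,dr$ (up to normalization), and read off the sign of the second derivative from the factor $(n-1)-(AR_i(\btheta))^2$, with tightness from the single-sphere case (Corollary \ref{cor SER A new}) and the inflection-point count from continuity. The added care about differentiating under the integral sign and the remark linking the $n-2$ and $n-1$ thresholds via $\gamma=A^2$ are fine but not needed beyond what the paper already does.
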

\begin{proof}
See Appendix.
\end{proof}

Note that the convexity of $P_{ei}(A)$ for $n=1$ and any $A$ follows automatically from this Theorem. It is straightforward to see that the bounds of Theorem \ref{thm SER A new} are indeed tighter than those of Proposition 1, since
\[
\alpha_2 \le n-1 < \alpha_1
\]
with strict inequality for $n \ge 2$. The following is a direct consequence of Theorem \ref{thm SER A new}, which cannot be obtained from Proposition 1.

\begin{cor}
\label{cor SER A new}
Consider the case of Theorem \ref{thm SER A new} when all decision regions are the spheres of same radius $d$. The following holds:
\begin{itemize}
\item The SER is strictly convex in $A$ in the large SNR regime:
 \[
P_{e}(A)'' > 0 \mbox{ if} \ A > \sqrt{n-1} /d
\]

\item It is strictly concave in the small SNR regime:
 \[
P_{e}(A)'' < 0 \mbox{ if} \ A < \sqrt{n-1} /d
\]

\item There is a single inflection point:
\[
P_{e}(A)'' = 0 \mbox{ if} \ A = \sqrt{n-1} /d
\]
\end{itemize}
\QED
\end{cor}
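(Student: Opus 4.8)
The plan is to bypass Theorem \ref{thm SER A new} (whose bounds are only sufficient and hence cannot by themselves yield either strictness or uniqueness of the inflection point) and instead exploit the fact that, in this fully symmetric situation, $P_e(A)$ has a simple closed form. First I would note that if every decision region $\Omega_i$ is a sphere of radius $d$ about $\bs_i$, then $d_{\min,i}=d_{\max,i}=d$ for all $i$ and, by symmetry, $P_{ei}(A)=P_e(A)$, so it suffices to study one term. Shifting the origin to $\bs_i$, the probability of correct decision is $P_{ci}(A)=\Pr[\,|\bxi|\le d\,]$ with $\bxi\sim\mathcal N(\mathbf 0,\sigma_0^2\mathbf I)$ and $\sigma_0^2=1/\gamma=1/A^2$; since $|\bxi|^2/\sigma_0^2$ is $\chi^2$-distributed with $n$ degrees of freedom, this is $P_{ci}(A)=F_n(A^2 d^2)$, where $F_n$ denotes the $\chi^2_n$ cumulative distribution function (equivalently, a regularized lower incomplete gamma function). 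Hence $P_e(A)=1-F_n(A^2 d^2)$.

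Next I would differentiate twice in $A$, writing $t=A^2 d^2$ and using the $\chi^2_n$ density $f_n(t)=t^{n/2-1}e^{-t/2}/(2^{n/2}\Gamma(n/2))$ together with the logarithmic-derivative identity $f_n'(t)/f_n(t)=(n-2)/(2t)-1/2$. The two contributions arising from the chain rule — one proportional to $f_n$, one to $f_n'$ — then combine into the single expression
\[
P_e(A)''=2d^2 f_n(A^2 d^2)\,\bigl[A^2 d^2-(n-1)\bigr].
\]
Since $f_n(A^2 d^2)>0$ for every $A>0$, the sign of $P_e(A)''$ is precisely that of $A^2 d^2-(n-1)$: it is strictly positive for $A>\sqrt{n-1}/d$, strictly negative for $A<\sqrt{n-1}/d$, and vanishes only at $A=\sqrt{n-1}/d$. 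This delivers all three bullets at once, including uniqueness of the inflection point.

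I do not expect a genuine obstacle. The only points requiring care are: (i) getting the chain-rule factor right — it is exactly what converts the threshold $n-2$ of Corollary \ref{cor SER equal Omega_i} (convexity in $\gamma$, where the argument enters linearly) into the threshold $n-1$ here (where the argument enters as $A^2$); and (ii) recognizing that the strictness and the single-inflection-point claims genuinely require the closed form, not merely the sufficient conditions of Theorem \ref{thm SER A new}. Alternatively, one could cite Theorem \ref{thm SER A new} with $d_{\min,i}=d_{\max,i}=d$ to obtain non-strict convexity on $[\sqrt{n-1}/d,\infty)$ and non-strict concavity on $(0,\sqrt{n-1}/d]$, and then use the displayed formula only to sharpen these to strict inequalities and to pin down the lone inflection point; but the direct computation is equally short and entirely self-contained.
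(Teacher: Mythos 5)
Your proposal is correct and is essentially the paper's own argument: the paper obtains this Corollary by specializing the proof of Theorem \ref{thm SER A new} to $R_i(\btheta)\equiv d$, so that $P_{ci}(A)''=d^2 f_r'(Ad)$ with the chi (amplitude) density $f(r)=r^{n-1}e^{-r^2/2}/(2^{n/2-1}\Gamma(n/2))$, whose derivative changes sign exactly at $r^2=n-1$. Your closed form $P_e(A)=1-F_n(A^2d^2)$ and the resulting expression $P_e(A)''=2d^2 f_n(A^2d^2)\bigl[A^2d^2-(n-1)\bigr]$ are just the substitution $t=r^2$ applied to that same computation, and they correctly yield strictness and the uniqueness of the inflection point.
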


Theorem \ref{thm SER A new} can also be extended to a wide class of noise densities following the same approach as in Theorem \ref{thm SER SNR general}.

\begin{thm}
\label{thm SER A general}
Consider a decoder with center-convex decision regions operating in an additive noise channel of arbitrary density $f(r,\btheta)$, where $r$ represents the normalized noise amplitude $|\bxi|/\sigma_0$. The SER $P_e(A)$ is convex in $A$ in the interval $[A_1, A_2]$ if the noise density $f(r,\btheta)$ is non-increasing in $r$ in the interval $[A_1 d_{\min}, A_2 d_{\max}]$:
\begin{equation}
\label{eq3d-1}
P_e(A)'' \ge 0 \ \forall A \in [A_1, A_2] \ \mbox{if} \ f_r'(r,\btheta) \le 0 \ \forall \btheta, r \in [A_1 d_{\min}, A_2 d_{\max}],
\end{equation}
\end{thm}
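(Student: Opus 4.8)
The plan is to follow the proof of Theorem~\ref{thm SER SNR general}, working with the normalized noise amplitude $r=|\bxi|/\sigma_0$ and the amplitude-domain spherical density $f(r,\btheta)$ (whose Jacobian $\propto r^{n-1}\cos^{n-2}\theta_{n-2}\cdots\cos\theta_1$ is the amplitude analogue of the one introduced in Section~\ref{sec3a}) in place of the power variable $p$. Fix the transmitted symbol $\bs_i$ and put the origin at $\bs_i$. Center-convexity of $\Omega_i$ means the ray from the center in every direction $\btheta$ meets the region in a single segment, reaching the boundary at (actual) distance $\rho_i(\btheta)$ with $d_{\min,i}\le\rho_i(\btheta)\le d_{\max,i}$ (adopting $\rho_i(\btheta)=\infty$ and $d_{\max,i}=\infty$ for unbounded directions). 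Scaling the constellation by the amplitude $A$ (equivalently, scaling the noise by $1/A$) sends this boundary distance to $A\rho_i(\btheta)$, so a noise realization is decoded correctly iff $r\le A\rho_i(\btheta)$; hence
\[
P_{ci}(A)=\int_{\btheta}\int_0^{A\rho_i(\btheta)} f(r,\btheta)\,dr\,d\btheta .
\]

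Next I would differentiate twice in $A$ under the integral sign, the inner integral depending on $A$ only through its upper limit: by the Leibniz rule $P_{ci}'(A)=\int_{\btheta}\rho_i(\btheta)\,f(A\rho_i(\btheta),\btheta)\,d\btheta$ and $P_{ci}''(A)=\int_{\btheta}\rho_i^2(\btheta)\,f_r'(A\rho_i(\btheta),\btheta)\,d\btheta$. Since $P_{ei}=1-P_{ci}$ and $P_e=\sum_i\Pr\{\bs=\bs_i\}P_{ei}$, this gives $P_e''(A)=-\sum_i\Pr\{\bs=\bs_i\}\int_{\btheta}\rho_i^2(\btheta)\,f_r'(A\rho_i(\btheta),\btheta)\,d\btheta$. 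For $A\in[A_1,A_2]$ the argument of $f_r'$ satisfies $A\rho_i(\btheta)\in[A_1 d_{\min},A_2 d_{\max}]$ for every $i$ and $\btheta$, because $d_{\min}\le d_{\min,i}\le\rho_i(\btheta)\le d_{\max,i}\le d_{\max}$ and $A_1\le A\le A_2$; therefore $f_r'(r,\btheta)\le 0$ on that annulus makes every integrand non-positive, so $P_e''(A)\ge 0$ for all $A\in[A_1,A_2]$, which is \eqref{eq3d-1}. Reversing the inequality yields the companion concavity statement, restricting to decision regions of non-zero measure on the relevant interval gives strict convexity, and the individual-$i$ version follows by the substitution $d_{\min(\max)}\to d_{\min,i(\max,i)}$.

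The substantive work is not the differentiation itself but its justification, and it is dispatched exactly as in the proof of Theorem~\ref{thm SER SNR general}: (i) extracting the star-shaped radial profile $\rho_i(\btheta)$ from the definition of center-convexity and checking that it is measurable, so that the change to spherical coordinates and Fubini's theorem are legitimate; (ii) producing an integrable majorant, uniform for $A$ in a neighbourhood of the point in question, that licenses the interchange of $d/dA$ and $\int d\btheta$ --- this is the step that silently uses mild regularity of $f(r,\btheta)$ in $r$; and (iii) handling unbounded decision regions and the ``no-decision'' event, where the upper limit $A\rho_i(\btheta)$ is $+\infty$ and contributes nothing to the derivatives as long as $f(\cdot,\btheta)$ is integrable. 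The only change relative to Theorem~\ref{thm SER SNR general} is the radial Jacobian exponent $n/2-1\mapsto n-1$, which is consistent with the Gaussian amplitude density being unimodal in $r$ with mode $r^\ast=\sqrt{n-1}$, in agreement with Theorem~\ref{thm SER A new}.
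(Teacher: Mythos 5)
Your proposal is correct and follows essentially the same route as the paper: write $P_{ci}(A)=\int_{D\btheta}\int_0^{A R_i(\btheta)}f(r,\btheta)\,dr\,d\btheta$ using center-convexity to reduce the inner integral to a single radial segment, differentiate twice via Leibniz to get $P_{ci}(A)''=\int_{D\btheta}R_i^2(\btheta)f_r'(AR_i(\btheta),\btheta)\,d\btheta$, and conclude from $d_{\min}\le R_i(\btheta)\le d_{\max}$ that the integrand is sign-definite on the stated annulus. Your added remarks on measurability, dominated differentiation, and unbounded regions supply rigor the paper leaves implicit but do not change the argument.
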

\begin{proof}
See Appendix.
\end{proof}

Similarly to Theorem \ref{thm SER SNR general}, the inequalities can be reversed to obtain the concavity properties and unimodal densities are a special case. In particular, Corollary \ref{cor SER SNR unimodal} holds with the substitution $p \rightarrow r$, $\gamma \rightarrow A$, and \eqref{eq3a-8} reads as
\begin{align}
\label{eq3d-2}
\begin{cases}
P_e(A)'' > 0 \mbox{ if} \ A > r^*/d_{\min} \\
P_e(A)'' < 0 \mbox{ if} \ A < r^*/d_{\max} \\
\end{cases}
\end{align}
The Gaussian, Laplacian and exponential noise amplitude distributions are all unimodal, with $r^* = \sqrt{n-1}$, $r^* = n-1$ and $r^* = ((n-1)/2\beta)^{1/2\beta}$ respectively, so that the SER is always convex if $n=1$.

For the case of a SIRP noise as in Theorem \ref{thm SIRP SER convexity}, one obtains the following.
\begin{thm}
\label{thm SIRP conv A}
Consider an additive SIRP noise channel with the density as in \eqref{eq3b-1}, \eqref{eq3b-2} and $\bf{C} = I$. Assume that $f(\sigma)$ has bounded support: $f(\sigma)=0 \ \forall \sigma \notin [\sigma_1, \sigma_2]$. Then, the SER of any decoder with center-convex decision regions operating in this channel is convex at high and concave at low SNR as a function of signal amplitude $A$:
\begin{eqnarray}
P_e(A)'' &\ge& 0 \ \mbox{if} \ A \ge \sigma_2 \sqrt{n-1} / d_{\min} \\
P_e(A)'' &\le& 0 \ \mbox{if} \ A \le \sigma_1 \sqrt{n-1} / d_{\max}
\end{eqnarray}
where $d_{\min(\max)}$ is the minimum (maximum) distance of the normalized constellation (i.e. the one that corresponds to $A=1$).
\end{thm}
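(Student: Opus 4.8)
The plan is to reproduce the argument behind Theorem~\ref{thm SIRP SER convexity}, but in the amplitude domain, invoking the amplitude-domain AWGN result (Theorem~\ref{thm SER A new}) in place of its SNR-domain analogue; note that one cannot simply compose Theorem~\ref{thm SIRP SER convexity} with $p_s=A^2$, since $A\mapsto A^2$ is not affine and the extra term it generates works against convexity (this is why the amplitude threshold carries $\sqrt{n-1}$ rather than $\sqrt{n-2}$). First I would use the SIRP representation theorem (Theorem~\ref{thm SIRP representation}) with $\mathbf C=\mathbf I$: the noise density $f_\xi(\bx)=c_n h_n(|\bx|^2)$ of \eqref{eq3b-1}--\eqref{eq3b-2} is then a continuous mixture of zero-mean Gaussian densities $\mathcal N(\mathbf 0,\sigma^2\mathbf I)$ with mixing PDF $f(\sigma)$, $\sigma\in[\sigma_1,\sigma_2]$, $\sigma_1>0$. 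Letting the actual constellation be the normalized one scaled by the signal amplitude $A$, and $\{\Omega_i\}$ the (center-convex) decision regions of the fixed decoder, \eqref{eq2-5} gives
\[
P_e(A)=\int_{\sigma_1}^{\sigma_2} P_e^{\sigma}(A)\,f(\sigma)\,d\sigma ,
\]
where $P_e^{\sigma}(A)$ is the SER of the same decoder and constellation when the noise is exactly $\mathcal N(\mathbf 0,\sigma^2\mathbf I)$. Since $\mathrm{supp}\,f\subset[\sigma_1,\sigma_2]$ is compact and bounded away from $0$ and $\infty$, the Gaussian integrals over the fixed regions are smooth in $A$ with first and second $A$-derivatives bounded uniformly in $\sigma\in[\sigma_1,\sigma_2]$, so differentiation under the integral sign is legitimate and $P_e(A)''=\int_{\sigma_1}^{\sigma_2} P_e^{\sigma}(A)''\,f(\sigma)\,d\sigma$; this is the same linearity observation as in Proposition~\ref{prop.3C.1}/\eqref{eq3c-7}.

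Next I would handle each Gaussian component by a rescaling. The map $\bx\mapsto\bx/\sigma$ carries $\mathcal N(\mathbf 0,\sigma^2\mathbf I)$ to $\mathcal N(\mathbf 0,\mathbf I)$, carries each center-convex region $\Omega_i$ to a center-convex region about the correspondingly scaled point, and carries the amplitude variable $A$ to $A/\sigma$; hence $P_e^{\sigma}(A)=\widetilde P_e(A/\sigma)$, where $\widetilde P_e$ is the AWGN SER of the decoder with unit noise and with normalized-constellation distances $d_{\min},d_{\max}$. Applying Theorem~\ref{thm SER A new} to $\widetilde P_e$ and using $P_e^{\sigma}(A)''=\sigma^{-2}\,\widetilde P_e''(A/\sigma)$ yields
\[
P_e^{\sigma}(A)''\ge 0 \ \mbox{if} \ A\ge \sigma\sqrt{n-1}/d_{\min},\qquad P_e^{\sigma}(A)''\le 0 \ \mbox{if} \ A\le \sigma\sqrt{n-1}/d_{\max}.
\]

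Finally I would combine the two steps using the bounded-support hypothesis. If $A\ge \sigma_2\sqrt{n-1}/d_{\min}$, then $A\ge \sigma\sqrt{n-1}/d_{\min}$ for every $\sigma\in[\sigma_1,\sigma_2]$, so the integrand $P_e^{\sigma}(A)''f(\sigma)$ is nonnegative for all $\sigma$ in the support and therefore $P_e(A)''\ge 0$; symmetrically, if $A\le \sigma_1\sqrt{n-1}/d_{\max}$, then $A\le \sigma\sqrt{n-1}/d_{\max}$ there, giving $P_e(A)''\le 0$. These are precisely the two claimed inequalities; the statement for individual $P_{ei}$ follows via $d_{\min(\max)}\to d_{\min,i(\max,i)}$, and dropping the bounded-support assumption still leaves $P_e(A)$ convex for all $A$ when $n\le 2$, since then $\widetilde P_e''\ge 0$ for all amplitudes. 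The only genuinely delicate point is the uniformity in $\sigma$: each Gaussian component is convex only on a $\sigma$-dependent half-line, and it is exactly the compactness of $\mathrm{supp}\,f$ that lets us pass to the worst case ($\sigma_2$ for convexity, $\sigma_1$ for concavity); the interchange of $d^2/dA^2$ with $\int(\cdot)\,f(\sigma)\,d\sigma$ is the other item to pin down, but boundedness of the support away from $0$ and $\infty$ together with smoothness of the Gaussian integrals make it routine.
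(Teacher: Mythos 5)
Your proof is correct and follows essentially the same route as the paper's: both rest on the Gaussian-scale-mixture representation of the SIRP density and on the conditional noise-amplitude density being unimodal in $r$ with mode at $\sigma\sqrt{n-1}$, with the bounded support supplying the worst-case $\sigma_2$ (resp.\ $\sigma_1$) for convexity (resp.\ concavity). The only difference is presentational: the paper signs the integrand of a single triple integral over $(\btheta,\sigma,r)$, whereas you factor the argument through Theorem~\ref{thm SER A new} applied to each Gaussian component together with the mixture linearity of Proposition~\ref{prop.3C.1}.
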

\begin{proof}
See Appendix.
\end{proof}

The following is immediate.

\begin{cor}
\label{cor T SIRP conv A}
Consider the scenario in Theorem \ref{thm SIRP conv A} for $n=1$ without the bounded support assumption. The SER is always convex in $A$: $P_e(A)'' \ge 0 \ \forall A$. \ \QED
\end{cor}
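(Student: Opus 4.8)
The plan is to reduce this to the unimodality-based criterion already established in Theorem \ref{thm SER A general} (specifically to the reversed/unimodal form \eqref{eq3d-2}), by showing that for $n=1$ the SIRP noise amplitude density in spherical coordinates is monotonically decreasing in $r$ regardless of the mixing density $f(\sigma)$, so the bounded-support hypothesis in Theorem \ref{thm SIRP conv A} becomes superfluous. First I would write the SIRP amplitude density explicitly: from \eqref{eq3b-1}--\eqref{eq3b-2} with $\bf{C}=I$, the Cartesian density is $f_\xi(\bx)=c_n h_n(|\bx|^2)$ with $h_n(r)=\int_0^\infty \sigma^{-n}\exp\{-r/2\sigma^2\}f(\sigma)\,d\sigma$; transforming to the normalized amplitude $r=|\bxi|/\sigma_0$ and the angles $\btheta$ via the Jacobian (which for amplitude coordinates carries a factor $r^{n-1}$ up to angular terms), one gets $f(r,\btheta)=c_n r^{n-1} h_n(\sigma_0^2 r^2) f(\btheta)$ for some angular density $f(\btheta)$. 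For $n=1$ the factor $r^{n-1}=r^0=1$ disappears, so $f(r,\btheta)\propto h_1(\sigma_0^2 r^2)f(\btheta)$.

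The key step is then to observe that $h_1(r)$ — indeed $h_n(r)$ for every $n$ — is strictly decreasing in $r$: differentiating under the integral sign in \eqref{eq3b-2} gives $h_n'(r)=-\int_0^\infty (2\sigma^2)^{-1}\sigma^{-n}\exp\{-r/2\sigma^2\}f(\sigma)\,d\sigma<0$, since the integrand is nonnegative and $f(\sigma)$ is a genuine PDF (this is exactly the monotonicity already invoked in the excerpt to justify that the min-distance decoder is optimal under SIRP noise). Hence for $n=1$, $f_r'(r,\btheta)=2\sigma_0^2 r\, h_1'(\sigma_0^2 r^2) f(\btheta)\le 0$ for all $r\ge 0$ and all $\btheta$. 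This places us in the monotonically-decreasing case of Theorem \ref{thm SER A general}: taking $[A_1,A_2]=[0,\infty)$ in \eqref{eq3d-1}, or equivalently reading \eqref{eq3d-2} with $r^*=0$, the noise amplitude density is non-increasing on every interval $[A_1 d_{\min}, A_2 d_{\max}]$, so $P_e(A)''\ge 0$ for all $A$. One should note that the decoder here is any decoder with center-convex decision regions (Definition 1), which is precisely the class to which Theorem \ref{thm SER A general} applies, so nothing extra is needed on the decoder side.

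I do not expect a genuine obstacle here; the statement is essentially a limiting/degenerate case of Theorem \ref{thm SIRP conv A}. The only points requiring mild care are (i) justifying differentiation under the integral in \eqref{eq3b-2} — routine, since $\sigma^{-n-2}\exp\{-r/2\sigma^2\}$ is dominated locally uniformly in $r$ and $f(\sigma)$ integrates to one — and (ii) handling the behaviour at $r=0$ and $\sigma\to 0$ (where $h_n$ is defined by continuity, per the statement of Theorem \ref{thm SIRP representation}), which does not affect the sign of $f_r'$ on $(0,\infty)$ and hence does not affect the convexity conclusion. A one-line remark that the same argument degenerates $r^*$ to $0$ in \eqref{eq3d-2} makes the link to the unimodal case transparent and completes the proof.
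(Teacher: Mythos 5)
Your proposal is correct and matches the paper's (unstated, ``immediate'') argument in substance: for $n=1$ the Jacobian factor $r^{n-1}$ disappears, the amplitude density reduces to the monotonically decreasing $h_1(\sigma_0^2 r^2)$, and the convexity threshold $\sigma_2\sqrt{n-1}/d_{\min}$ in Theorem~\ref{thm SIRP conv A} degenerates to zero, so the bounded-support hypothesis is never used. The only cosmetic difference is that you integrate out $\sigma$ first and invoke Theorem~\ref{thm SER A general} on the marginal density, whereas the paper's proof of Theorem~\ref{thm SIRP conv A} applies the sign argument to each conditional Gaussian density $f(r|\sigma)$ before averaging over $f(\sigma)$ --- the two are equivalent here.
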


\subsection{Extension to Correlated Noise}
\label{s3d}
While Theorems 2, 5, 6, 8 and corresponding Corollaries apply to a channel with i.i.d. noise, a similar result can be established when noise is not i.i.d. (i.e. correlated or/and of non-identical variance per dimension). Let us consider the model in \eqref{eq2-1}, where the noise covariance is ${\bf{R}}_{\xi} = E\{\bxi \bxi^+\}$. Applying the sufficient statistics approach, one can use  ${\rm {\bf r}}' = {\bf{R}}_{\xi}^{-1/2} {\rm {\bf r}}$ instead of ${\rm {\bf r}}$ as decision variables without affecting the performance (i.e. a whitening filter). The equivalent channel
\begin{equation}
{\rm {\bf r}}'= {\bf{R}}_{\xi}^{-1/2}({\rm {\bf s}}+ \bxi)
\end{equation}
has i.i.d. noise ${\bf{R}}_{\xi}^{-1/2}\bxi$ and the equivalent constellation is $\{{\rm {\bf s}}_i'\} = \{{\bf{R}}_{\xi}^{-1/2}{\rm {\bf s}}_i\}$, so that equivalent decision regions $\Omega _i'$ and corresponding minimum/maximum distances can be found to which Theorems 2, 5, 6, 8 apply. In particular, the SER is still convex at high SNR. Note that Theorems 3 and 7 do not require the noise to be i.i.d.

\section{Convexity of BER and Capacity-Achieving Codes}
\label{sec4}

While the previous sections have established the convexity properties of the SER, it does not imply the corresponding convexity properties of the BER as the latter depends on the pairwise probability of error and not just the SER (see e.g. \eqref{eq2-9}). The PEP and the SER have somewhat different convexity properties. The convexity of the PEP has been established in \cite{Loyka-10} and, based on it, the following result was obtained.

\begin{thm}
\label{thm BER old}
Consider the ML decoder operating in the AWGN channel. Its BER, SER and PEP are all convex functions of the SNR, for any constellation, bit mapping and coding, in the high SNR (small noise) regime, when
\begin{equation}
\label{eq4-1}
d_{\min }^2 \ge (n+\sqrt {2n} ) \sigma_0^2,
\end{equation}
\end{thm}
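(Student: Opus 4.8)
The plan is to reduce the BER/SER/PEP convexity in the high-SNR regime to the already-established convexity of the PEP from \cite{Loyka-10}, and to verify that the single threshold in \eqref{eq4-1} dominates all the individual thresholds appearing in those PEP results. First I would recall the representation \eqref{eq2-9}, which writes the BER as a \emph{positive} linear combination of the PEPs $\Pr\{\bs_i \to \bs_j\}$, with coefficients $h_{ij}/(\log_2 M) \cdot \Pr\{\bs=\bs_i\} \ge 0$ that do not depend on $\gamma$. Since a nonnegative linear combination of convex functions is convex, it suffices to show that each PEP $\Pr\{\bs_i\to\bs_j\}$ is convex in $\gamma$ whenever \eqref{eq4-1} holds; the same remark applies to the SER via \eqref{eq2-5} or the decomposition $P_{ei} = \sum_{j\ne i}\Pr\{\bs_i\to\bs_j\}$.

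The key step is therefore the PEP convexity. Here I would invoke the result of \cite{Loyka-10} (as referenced just before the theorem statement): the PEP $\Pr\{\bs_i\to\bs_j\}$, being the integral of the Gaussian density over a half-space (or, more generally, over the relevant decision region $\Omega_j$ centered at $\bs_i$), is convex in $\gamma$ once $\gamma$ exceeds a threshold governed by the geometry of that region. For a pairwise decision region the relevant distance is $d_{ij} = \frac12|\bs_i - \bs_j|$, and the half-space integral of an $n$-dimensional Gaussian is convex in $\gamma$ for $\gamma \ge (n+\sqrt{2n})/d_{ij}^2$ — this is the one-dimensional-projection argument (the half-space integral reduces to a scalar $Q$-function evaluated at $d_{ij}\sqrt{\gamma}$, whose second derivative in $\gamma$ changes sign exactly once). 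The point is that $d_{ij} \ge d_{\min}$ for all $i \ne j$ (the minimum distance to any decision boundary is a lower bound on the distance to any individual competitor), hence $(n+\sqrt{2n})/d_{ij}^2 \le (n+\sqrt{2n})/d_{\min}^2$, so condition \eqref{eq4-1}, namely $\gamma = 1/\sigma_0^2 \ge (n+\sqrt{2n})/d_{\min}^2$, simultaneously forces \emph{every} PEP into its convex regime.

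Assembling the pieces: under \eqref{eq4-1} every PEP term $\Pr\{\bs_i\to\bs_j\}$ is convex in $\gamma$; the BER is a $\gamma$-independent nonnegative combination of these, hence convex; the SER $P_{ei} = \sum_{j\ne i}\Pr\{\bs_i\to\bs_j\}$ and $P_e = \sum_i \Pr\{\bs=\bs_i\}P_{ei}$ are likewise nonnegative combinations, hence convex; this establishes all three claims. I expect the main obstacle to be the PEP convexity threshold itself — making precise that the integral of the Gaussian over the (possibly unbounded, non-spherical) pairwise region $\Omega_j$ inherits the scalar $Q$-function convexity with the stated $n$-dependent bound, and that the relevant controlling distance is indeed $d_{ij} \ge d_{\min}$ rather than some smaller quantity. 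Since \cite{Loyka-10} already carries out exactly this computation, the present proof only needs to cite it and perform the (routine) bookkeeping that $d_{\min}$ is the worst case over all pairs, which is immediate from the definitions of $d_{\min}$ and $d_{\min,i}$ given in Theorem \ref{thm SER SNR old}.
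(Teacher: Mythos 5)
Your outer decomposition is exactly the one the paper (and \cite{Loyka-10}) relies on: the BER in \eqref{eq2-9} is a $\gamma$-independent nonnegative combination of PEPs, $P_{ei}=\sum_{j\ne i}\Pr\{\bs_i\to\bs_j\}$, and the single condition \eqref{eq4-1} dominates every pairwise condition because every point of a competing region $\Omega_j$ lies at distance at least $d_{\min,i}\ge d_{\min}$ from $\bs_i$. The gap is in your justification of the key step, the PEP convexity itself. The region $\Omega_j$ in the paper's PEP is the full Voronoi cell of $\bs_j$ viewed from $\bs_i$ --- a general polyhedron cut by \emph{all} the separating hyperplanes --- not a half-space, so the integral does not reduce to a scalar $Q$-function by one-dimensional projection. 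More tellingly, even where that reduction is valid the argument cannot produce the stated threshold: $Q(d\sqrt{\gamma})$ has second derivative $\tfrac{d}{4\sqrt{2\pi}}\,\gamma^{-3/2}(1+d^{2}\gamma)\,e^{-d^{2}\gamma/2}>0$ for \emph{all} $\gamma>0$, so it is convex everywhere, its second derivative never changes sign, and no $n$-dependent bound such as $(n+\sqrt{2n})/d_{ij}^{2}$ can emerge from it. The dimension dependence in \eqref{eq4-1} is a genuinely $n$-dimensional effect, not a feature of the one-dimensional marginal.

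The mechanism that actually works (and that the paper spells out for the SIRP generalization in the proof of Theorem \ref{thm BER SIRP}, of which the AWGN case is $\sigma_1=\sigma_2=1$) is pointwise: write $f_\xi(\bx)=(2\pi)^{-n/2}\gamma^{n/2}\exp(-\gamma|\bx|^2/2)$ with $\gamma=1/\sigma_0^2$, differentiate twice in $\gamma$ under the integral sign, and observe that the second derivative equals $\tfrac14(2\pi)^{-n/2}\gamma^{n/2}(w-w_1)(w-w_2)e^{-\gamma w/2}$ with $w=|\bx|^2$ and $w_{1,2}=(n\pm\sqrt{2n})/\gamma$; it is therefore nonnegative at every $\bx$ with $|\bx|^2\ge(n+\sqrt{2n})/\gamma=(n+\sqrt{2n})\sigma_0^2$. (The $\gamma^{n/2}$ normalization factor is precisely what creates the threshold.) Since $|\bx|\ge d_{\min,i}\ge d_{\min}$ on $\Omega_j$ for every $j\ne i$, condition \eqref{eq4-1} makes the integrand nonnegative over the whole region, hence $\Pr\{\bs_i\to\bs_j\}$ is convex for every pair, and your assembly of the SER and BER then goes through verbatim. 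A minor bookkeeping point: the paper's $d_{ij}$ denotes $|\bs_i-\bs_j|$, not half of it, and the controlling quantity is $d_{\min,i}$ (distance from $\bs_i$ to the boundary of its own cell) rather than the pairwise half-distance; both exceed $d_{\min}$, so your worst-case reduction is unaffected.
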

\QED

Note that the lower bound in (\ref{eq4-1}) has an interesting interpretation: $n\sigma _0^2 $ is the mean of $|\bxi|^2$ and $\sqrt {2n} \sigma _0^2 $ is its standard deviation, so that (\ref{eq4-1}) requires that $d_{\min }^2 $ be larger than the average noise power by at least its standard deviation, which is intuitively what is required for low probability of error. Thus, the condition in (\ref{eq4-1}) should be satisfied when probability of error is small.

Below, we make this statement more precise and proceed to establish practical relevance of the high-SNR regime in (\ref{eq4-1}) based on the channel coding theorem. Recall that the sphere hardening argument (from the channel coding theorem) states that the noise vector $\bxi$ is contained within the sphere of radius $\sqrt {n\left( {\sigma _0^2 +\varepsilon } \right)} $ with high probability (approaching 1 as $n\to \infty )$ \cite{Shannon}\cite{Wozencraft}, where $\varepsilon >0$ is a fixed, arbitrary small number, so that the decision regions should have minimum distance to the boundary
\begin{equation}
d_{\min}^2 \ge n( {\sigma _0^2 +\varepsilon }),
\end{equation}
i.e. to enclose the hardened noise sphere of radius $\sqrt {n\left( {\sigma _0^2 +\varepsilon } \right)} $, to provide arbitrary low probability of error as $n\to \infty $. For any code satisfying this requirement, it follows that
\begin{equation}
\label{eq4-3}
d_{\min }^2 \ge n\left( {\sigma _0^2 +\varepsilon } \right)>(n+\sqrt {2n} )\sigma _0^2 ,
\end{equation}
for sufficiently large $n$ and $\forall \varepsilon >0$. Thus, for any code whose decision regions enclose the hardened noise spheres, the condition of Theorem \ref{thm BER old} is satisfied and therefore the error rates (SER, PEP, BER) of such codes are all convex.

On the other hand, for any code whose decisions regions are enclosed by the spheres of radius $\sqrt {n+\sqrt {2n} } \sigma _0$,  i.e. $d_{\max}^2 \le (n+\sqrt {2n} ) \sigma_0^2$ , the symbol error rates are lower bounded as
\begin{equation}
\label{eq4-4}
P_{ei} \ge \Pr \left\{ {\frac{|\bxi|-n}{\sqrt {2n} }>1} \right\}\approx Q(1)\approx 0.16>0,
\end{equation}
where $Q(x)=\textstyle{1 \over {\sqrt {2\pi } }}\int_x^\infty {e^{-t^2/2}dt} $ is the Q-function, so that arbitrary-low  probability of error is not achievable. Based on these two arguments, we conjecture the following.

\begin{con}
Consider a capacity-achieving code designed for $\mbox{SNR}=\gamma _0 $. Error rates of \textit{any} such code are convex for $\mbox{SNR}\ge \gamma _0 $, i.e. when it provides an arbitrary low probability of error. \QED
\end{con}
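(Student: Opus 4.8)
The plan is to treat the conjecture as a conditional statement built on Theorem~\ref{thm BER old} together with the sphere-hardening bound in \eqref{eq4-3}, rather than attempting a direct computation of BER for an unknown capacity-achieving family. First I would fix a capacity-achieving code designed for $\mbox{SNR}=\gamma_0$, meaning a sequence of codes indexed by block length $n$ with rate approaching capacity $C(\gamma_0)$ and probability of error vanishing as $n\to\infty$. The first step is to argue that, for all sufficiently large $n$, such a code must have its decision regions enclosing the hardened noise spheres: if not, \eqref{eq4-4} would force $P_{ei}\gtrsim Q(1)\approx 0.16$ for some $i$, contradicting vanishing error probability. This is the standard converse-type observation, and it identifies the geometric property that capacity-achieving codes must eventually possess.

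The second step is to invoke the chain of inequalities \eqref{eq4-3}: for a code whose decision regions enclose the hardened spheres at its design SNR, $d_{\min}^2 \ge n(\sigma_0^2+\varepsilon) > (n+\sqrt{2n})\sigma_0^2$ for large $n$ and any $\varepsilon>0$, where $\sigma_0^2 = 1/\gamma_0$. Since $d_{\min}$ is a fixed geometric quantity of the (normalized) constellation, for any $\mbox{SNR}=\gamma \ge \gamma_0$ we have noise variance $\sigma^2 = 1/\gamma \le \sigma_0^2$, hence $d_{\min}^2 \ge (n+\sqrt{2n})\sigma_0^2 \ge (n+\sqrt{2n})\sigma^2$, so the hypothesis \eqref{eq4-1} of Theorem~\ref{thm BER old} holds at every such $\gamma$. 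The third step is then purely an application: Theorem~\ref{thm BER old} gives that the BER (and SER and PEP) is convex in the SNR for all $\gamma \ge \gamma_0$, which is the claimed conclusion.

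The main obstacle — and the reason this remains a \emph{conjecture} rather than a theorem — lies in the first step: making "capacity-achieving" cleanly imply "decision regions enclose the hardened noise spheres." The sphere-hardening argument guarantees that good codes \emph{exist} with this property, but it does not rule out capacity-achieving codes whose decision regions are irregularly shaped — having small $d_{\min}$ along some directions while still achieving vanishing average error probability because the offending directions carry negligible probability mass. The bound \eqref{eq4-4} controls only the case $d_{\max}^2 \le (n+\sqrt{2n})\sigma_0^2$ (all decision regions uniformly small), not the mixed case where $d_{\min}$ is small but $d_{\max}$ is large. Closing this gap would require either strengthening Theorem~\ref{thm BER old} to depend on an averaged or typical distance rather than the worst-case $d_{\min}$, or proving a structural lemma that capacity-achieving codes cannot have a vanishing fraction of "thin" decision regions — both of which appear to need genuinely new ideas, which is precisely why the statement is offered as Conjecture~1 rather than proved.
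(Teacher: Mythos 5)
Your proposal matches the paper's own treatment essentially verbatim: the statement is offered there as Conjecture~1, supported only by the two observations you reproduce (decision regions enclosing the hardened noise spheres satisfy the high-SNR condition \eqref{eq4-1} of Theorem~\ref{thm BER old}, while codes with $d_{\max}^2 \le (n+\sqrt{2n})\sigma_0^2$ cannot have vanishing error probability), and the paper explicitly concedes the same gap you identify --- that enclosing the hardened spheres is only sufficient for low error probability, so a capacity-achieving code could in principle violate the $d_{\min}$ condition. You have correctly reconstructed both the heuristic argument and the precise reason it falls short of a proof; no further comparison is needed.
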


This conjecture is stronger that our convexity statement above since the latter requires the decision regions to include the hardened noise spheres, which is only a sufficient condition for arbitrarily low probability of error,  so that it is possible that a capacity-achieving code violates the condition in (\ref{eq4-3}).  The conjecture effectively states that, if present, such a violation is minor in nature and does not affect the convexity property.

As an application of this result, we note that power/time sharing cannot reduce error rates of any code for which (\ref{eq4-1}) holds. This complements the well-known result that power/time sharing cannot increase the capacity.

In summary, any code respecting the noise sphere hardening and hence having low probability of error will also have convex error rates (SER, PEP and BER). This is the way convexity intimately enters into the channel coding theorem.

Theorem \ref{thm BER old} can also be extended to a wide class of decoders with center-convex decision regions under a SIRP noise as follows. To separate the effects of noise power ($\sigma_0^2$) and the shape of its PDF, let us consider the normalized noise $\bxi/\sigma_0$  and assume that it has the PDF as in \eqref{eq3b-1}, \eqref{eq3b-2}, where $\sigma$ is now a normalized conditional standard deviation and $f(\sigma)$ is its PDF.

\begin{thm}
\label{thm BER SIRP}
Consider a channel with an additive SIRP noise as in Theorem \ref{thm SIRP representation} when the PDF $f(\sigma)$ of conditional normalized standard deviation has bounded support:
\begin{equation}
\label{eq4-5}
f(\sigma)=0 \ \forall \sigma \notin [\sigma_1, \sigma_2]
\end{equation}
The PEP, SER and BER of any decoder with center-convex decision regions (e.g. min-distance decoder for any constellation, bit mapping and coding) operating in this channel is a convex function of the SNR in the high SNR/low noise regime,
\begin{equation}
\label{eq4-6}
d_{\min }^2 \ge (n+\sqrt {2n} ) (\sigma_0 \sigma_2)^2
\end{equation}
\end{thm}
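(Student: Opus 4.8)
The plan is to reduce the SIRP case to the already-established AWGN case, by exploiting the conditionally-Gaussian structure of a SIRP noise (Theorem \ref{thm SIRP representation}) together with the mixture-preservation property of Proposition \ref{prop.3C.1}. In one sentence: a bounded-support SIRP noise is a continuous mixture of AWGN noises whose variances are confined to a compact interval, so each mixture component obeys the AWGN convexity bound (Theorem \ref{thm BER old}), uniformly, and mixing preserves convexity.

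First I would set up the mixture representation. Write $\bxi = \sigma_0 C \bxi_G$ with $\bxi_G \sim \mathcal{N}(\mathbf{0},\mathbf{I})$ and $C$ an independent scalar of PDF $f(\sigma)$ supported on $[\sigma_1,\sigma_2]$, so that conditioned on $C=\sigma$ the noise is $\mathcal{N}(\mathbf{0},\sigma_0^2\sigma^2\mathbf{I})$; equivalently, the Cartesian SIRP density is the Gaussian mixture $f_\xi(\bx) = \int_{\sigma_1}^{\sigma_2} (2\pi\sigma_0^2\sigma^2)^{-n/2} e^{-|\bx|^2/(2\sigma_0^2\sigma^2)} f(\sigma)\,d\sigma$, a special case of \eqref{eq3b-2}. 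Conditioning on $C$ in the PEP (and hence, through the SER decomposition $P_{ei}=\sum_{j\ne i}\Pr\{\bs_i\to\bs_j\}$ and the BER formula \eqref{eq2-9}, in the SER and BER) turns each error rate into a continuous mixture, e.g. $\Pr\{\bs_i\to\bs_j\}(\gamma) = \int_{\sigma_1}^{\sigma_2} P_{ij}^G(\gamma/\sigma^2)\,f(\sigma)\,d\sigma$, where $P_{ij}^G(\cdot)$ is the corresponding AWGN PEP as a function of its SNR and the center-convex decision regions are held fixed throughout (only the noise variance changes with $\sigma$).

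Next I would apply the AWGN bound componentwise and then close the mixture. Invoking the AWGN result --- convexity of the PEP/SER/BER in SNR for decoders with center-convex decision regions, which for the ML decoder is Theorem \ref{thm BER old} and which extends to center-convex regions by the geometric argument behind Theorem \ref{thm SER SNR general} --- each AWGN component $P_{ij}^G(s)$ is convex in its own SNR $s$ for $s\ge(n+\sqrt{2n})/d_{\min}^2$. Since $\gamma\mapsto\gamma/\sigma^2$ is linear, the $\sigma$-th integrand is convex in $\gamma$ whenever $\gamma/\sigma^2\ge(n+\sqrt{2n})/d_{\min}^2$, and because $\sigma\le\sigma_2$ on the support of $f$ it suffices that $\gamma/\sigma_2^2\ge(n+\sqrt{2n})/d_{\min}^2$, i.e. $d_{\min}^2\ge(n+\sqrt{2n})(\sigma_0\sigma_2)^2$ --- precisely \eqref{eq4-6}. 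Under this condition every integrand is convex in $\gamma$, and $f(\sigma)\ge0$, $\int f=1$, so by the continuous-mixture form \eqref{eq3c-7} of Proposition \ref{prop.3C.1} (equivalently, by differentiating twice under the integral sign, justified because on the compact interval $[\sigma_1,\sigma_2]$ the AWGN PEP and its first two $\gamma$-derivatives are uniformly bounded) the SIRP PEP is convex in $\gamma$ on this range; the SER then follows as a finite nonnegative sum of PEPs and the BER as the nonnegative linear combination \eqref{eq2-9}.

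The main obstacle --- and the reason the bounded-support hypothesis \eqref{eq4-5} is imposed --- is making the convexity threshold uniform over the mixing variable: each conditional channel carries its own effective SNR $\gamma/\sigma^2$, and compact support of $f(\sigma)$ is exactly what reduces control to the single worst case $\sigma=\sigma_2$, yielding a threshold that, as noted after Theorem \ref{thm SIRP SER convexity}, depends on $f(\sigma)$ only through the endpoint $\sigma_2$. The only other point needing care is the interchange of $d^2/d\gamma^2$ with the $\sigma$-integral, which is routine given the uniform bounds on the compact support; everything else is a direct transcription of the AWGN statement through the affine change of SNR scale.
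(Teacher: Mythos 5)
Your proposal is correct and reaches the paper's condition \eqref{eq4-6}, but it is organized differently from the paper's own proof, so a comparison is worthwhile. The paper's appendix argument works entirely at the level of the noise density: it writes the SIRP Cartesian PDF as a Gaussian mixture over $\sigma\in[\sigma_1,\sigma_2]$, differentiates the Gaussian kernel twice in $\gamma$ under the mixture integral, factors the result as $\tfrac14\gamma^{n/2}(w-w_1)(w-w_2)e^{-\gamma w/2}$ with $w=|\bx|^2\sigma^{-2}$ and $w_{1,2}=(n\pm\sqrt{2n})/\gamma$, checks the sign pointwise on $\Omega_j$ using $|\bx|\ge d_{\min,i}$ and $\sigma\le\sigma_2$, and only then integrates over $\Omega_j$ to get PEP convexity, with the BER following from \eqref{eq2-9}. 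You integrate first and differentiate second: you condition on $\sigma$, identify each conditional error rate as the AWGN error rate of the same fixed decision regions at effective SNR $\gamma/\sigma^2$, apply Theorem \ref{thm BER old} componentwise, use affine pre-composition and the worst case $\sigma=\sigma_2$ to get a uniform threshold, and close with the mixture-preservation property \eqref{eq3c-7}. The two are mathematically equivalent --- the paper's explicit factorization of the second derivative of the Gaussian kernel is precisely the content of the AWGN theorem you invoke --- but your modular packaging makes the role of each hypothesis transparent (compact support gives a uniform threshold; mixing preserves convexity) and realizes the remark the paper itself makes after Proposition \ref{prop.3C.1} that mixtures cover the SIRP case. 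The one step you should state explicitly rather than gesture at is the extension of Theorem \ref{thm BER old} from the ML decoder to center-convex decision regions: the only geometric fact needed is that every $\bx\in\Omega_j$ with $j\ne i$ satisfies $|\bx|\ge d_{\min,i}$, which holds because a center-convex $\Omega_i$ contains the ball of radius $d_{\min,i}$ centered at $\bs_i$; with that observation in place, and the routine justification of differentiating under the $\sigma$-integral that you already supply, the argument is complete.
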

\begin{proof}
See Appendix.
\end{proof}

Theorem \ref{thm BER SIRP} essentially says that Theorem \ref{thm BER old} also applies to a SIRP noise channel provided the maximum conditional noise variance is used in \eqref{eq4-1}. We remark that it is only the constellation dimensionality and the minimum distance that determine its BER convexity and only via the bound in \eqref{eq4-6}, all its other details being irrelevant. As far as the noise is concerned, it is only the maximum conditional variance $(\sigma_0 \sigma_2)^2$ that matters and only via the same bound. A particular functional form of $f(\sigma)$ is irrelevant, i.e. many different unconditional noise distributions will induce the same convexity properties in the high-SNR regime.

\section{Convexity of Error Rates in Noise Power}
\label{sec5}
In a jammer optimization problem, it is convexity properties in noise power that are important \cite{Azizoglu}. Motivated by this fact, we study below convexity of the SER, the PEP and the BER in the noise power.

The following result has been established in \cite{Loyka-10}.

\begin{thm}[Theorem 4 in \cite{Loyka-10}]
\label{thm SER Pn old}
Consider the ML decoder operating in the AWGN channel. Its symbol error rates have the following convexity properties in the noise power $\sigma _0^2 $, for any constellation/coding,
\begin{itemize}
\item $P_{e} $ is concave in the large noise regime,
\begin{equation}
a_n \sigma _0^2 \ge d_{\max}^2
\end{equation}
where $a_n = n+2-\sqrt{2(n+2)}$,

\item $P_{e} $ is convex in the small noise regime,
\begin{equation}
b_n \sigma _0^2 \le d_{\min}^2
\end{equation}
where $b_n = n+2+\sqrt{2(n+2)}$,

\item there are an odd number of inflection points for intermediate noise power,
\begin{equation}
d_{\min}^2 b_n^{-1} \le \sigma _0^2 \le d_{\max}^2 a_n^{-1}
\end{equation}
\end{itemize}
\QED
\end{thm}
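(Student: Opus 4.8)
The plan is to reduce the claim to the sign of a single quadratic that emerges when the Gaussian density is differentiated twice in the noise power, and to combine that with elementary radial bounds on the (convex) min-distance decision regions. Since $P_e=\sum_i\Pr\{\bs=\bs_i\}P_{ei}$ with $\Pr\{\bs=\bs_i\}\ge0$, it suffices to control each $P_{ci}$ and $P_{ei}$ separately. Writing $\nu=\sigma_0^2$ and shifting the origin to $\bs_i$, I would start from $P_{ci}(\nu)=\int_{\Omega_i}(2\pi\nu)^{-n/2}e^{-|\bx|^2/2\nu}\,d\bx$ and $P_{ei}(\nu)=\int_{\Omega_i^c}(2\pi\nu)^{-n/2}e^{-|\bx|^2/2\nu}\,d\bx$, where $\Omega_i^c=\bigcup_{j\ne i}\Omega_j$ is the error region (the Voronoi cells partition the space). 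Each cell $\Omega_i$ is convex and contains the open ball $\{|\bx|<d_{\min,i}\}$, and, when bounded, is contained in $\{|\bx|\le d_{\max,i}\}$; equivalently $\Omega_i^c\subseteq\{|\bx|\ge d_{\min,i}\}$. The density and its first two $\nu$-derivatives are dominated, uniformly on compact subintervals of $(0,\infty)$, by an integrable function, so differentiation under the integral sign is legitimate.

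The computational heart is the identity
\[
\frac{\partial^2}{\partial\nu^2}\Big[(2\pi\nu)^{-n/2}e^{-u/2\nu}\Big]
=\frac{(2\pi\nu)^{-n/2}e^{-u/2\nu}}{4\nu^4}\,\big(u^2-2(n+2)\nu u+n(n+2)\nu^2\big),\qquad u=|\bx|^2 ,
\]
together with the observation that the bracketed quadratic in $u$ has roots exactly $a_n\nu$ and $b_n\nu$, so it equals $(u-a_n\nu)(u-b_n\nu)$. Hence $P_{ci}''(\nu)=(4\nu^4)^{-1}\int_{\Omega_i}f_\xi(\bx)(|\bx|^2-a_n\nu)(|\bx|^2-b_n\nu)\,d\bx$ and $P_{ei}''(\nu)$ is the same integral taken over $\Omega_i^c$; the two are negatives of each other, consistent with $P_{ci}+P_{ei}=1$ and with $E\{(|\bxi|^2-a_n\nu)(|\bxi|^2-b_n\nu)\}=E\{|\bxi|^4\}-2(n+2)\nu E\{|\bxi|^2\}+n(n+2)\nu^2=0$ (using $E\{|\bxi|^2\}=n\nu$, $E\{|\bxi|^4\}=n(n+2)\nu^2$).

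The conclusions then drop out by inspecting the sign of a product of two linear factors. If $b_n\sigma_0^2\le d_{\min}^2\le d_{\min,i}^2$, then $|\bx|^2\ge b_n\nu>a_n\nu$ throughout $\Omega_i^c$, both factors are nonnegative, so $P_{ei}''\ge0$ for every $i$ and hence $P_e''\ge0$ --- convexity in the small-noise regime. Dually, if $a_n\sigma_0^2\ge d_{\max}^2\ge d_{\max,i}^2$ (which in particular forces all $\Omega_i$ bounded), then $|\bx|^2\le a_n\nu<b_n\nu$ throughout $\Omega_i$, both factors are nonpositive, so $P_{ci}''\ge0$ for every $i$, whence $P_c''\ge0$ and $P_e''=-P_c''\le0$ --- concavity in the large-noise regime. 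For the intermediate regime I would note that these inequalities become strict whenever the noise power lies strictly inside the respective ranges (since $\Omega_i^c$ has positive measure and each $\Omega_i$ is bounded), so $P_e''(\nu)>0$ for $\nu<d_{\min}^2 b_n^{-1}$ and $P_e''(\nu)<0$ for $\nu>d_{\max}^2 a_n^{-1}$; these thresholds are ordered, $d_{\min}^2 b_n^{-1}\le d_{\max}^2 a_n^{-1}$, because $b_n>a_n>0$ and $d_{\max}\ge d_{\min}$; and since $P_e''$ is continuous on $(0,\infty)$ it changes overall sign from $+$ to $-$ across the intermediate interval and therefore vanishes an odd number of times there.

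I expect the only real obstacle to be the second-derivative algebra and the verification that the quadratic $u^2-2(n+2)\nu u+n(n+2)\nu^2$ has roots precisely $a_n\nu$ and $b_n\nu$; everything after that is the sign of a product of two linear factors combined with the radial bounds on a convex Voronoi cell and its complement. The single subtlety worth flagging explicitly is that the concavity and inflection-point statements tacitly assume all decision regions are bounded --- which is exactly the regime in which $a_n\sigma_0^2\ge d_{\max}^2$ can hold at all --- and that dominated convergence is what makes both the differentiation under the integral and the continuity of $P_e''$ rigorous.
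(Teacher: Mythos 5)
Your proposal is correct: the second-derivative identity, the factorization of the quadratic into $(u-a_n\nu)(u-b_n\nu)$, and the radial sandwiching $\{|\bx|<d_{\min,i}\}\subseteq\Omega_i\subseteq\{|\bx|\le d_{\max,i}\}$ all check out, and the sign argument plus continuity delivers the three bullets. Note, however, that the paper does not actually prove this statement internally --- it is quoted from \cite{Loyka-10} --- and the two appendix proofs it does contain bracket your approach in an instructive way. Your Cartesian volume-integral method (differentiate the density twice in $\sigma_0^2$, control the sign of the resulting quadratic in $|\bx|^2$ over the whole region) is exactly the technique the paper uses for the PEP in the proof of Theorem~\ref{thm PEP/BER conv Pn}; applied to the SER it necessarily leaves the gap between the two roots $a_n\sigma_0^2$ and $b_n\sigma_0^2$, because the quadratic must have a fixed sign over the entire cell or its complement. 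By contrast, the paper's proof of the tightened Theorem~\ref{thm SER Pn new} passes to spherical coordinates and integrates out the radial variable first, so that $P_{ci}''$ becomes a boundary integral, \eqref{eqAh-1}, involving the radial density only at $p=\sigma_0^{-2}R_i^2(\btheta)$; the sign then hinges on the single expression $f_p'(p)p+2\sigma_0^2 f_p(p)$, whose unique zero at $p=(n+2)$ collapses the pair $(a_n,b_n)$ to the single threshold $n+2$ and, via the spherical decision regions of Corollary~\ref{cor SER Pn new}, shows that threshold to be unimprovable. So your argument is a correct and self-contained proof of the stated (older, looser) bounds, but the boundary-integral route is what buys the sharper result; the one point worth making explicit either way is the boundedness of the decision regions in the concavity/intermediate statements, which you correctly flag.
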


The following Theorem tightens the high/low SNR bounds above and also shows that the new bounds cannot be further improved in general.

\begin{thm}
\label{thm SER Pn new}
Consider a decoder with center-convex decision regions operating in the AWGN channel. Its symbol error rates have the following convexity properties in the noise power:
\begin{itemize}
\item $P_{e} $ is concave in the large noise regime,
\begin{equation}
\label{eq5-4}
(n+2) \sigma _0^2 \ge d_{\max}^2
\end{equation}

\item $P_{e} $ is convex in the small noise regime,
\begin{equation}
\label{eq5-5}
(n+2) \sigma _0^2 \le d_{\min}^2
\end{equation}

\item there are an odd number of inflection points for intermediate noise power,
\begin{equation}
d_{\min}^2 \le (n+2)\sigma _0^2 \le d_{\max}^2
\end{equation}

\item These bounds cannot be improved in the general case.
\end{itemize}
\end{thm}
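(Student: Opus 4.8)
The plan is to mirror the structure of the $\gamma$-convexity proofs (Theorems \ref{thm SER SNR new}, \ref{thm SER SNR general}), which presumably reduce the sign analysis of $d^2 P_e/d\sigma_0^2$ to the behavior of the Gaussian noise \emph{amplitude/power density in the spherical radial variable}, and then simply carry out the change of variable to the noise power $\sigma_0^2$ rather than $\gamma$. Concretely, fix a decision region $\Omega_i$ centered at the origin, write $P_{ci}=\int_{\Omega_i}f_\xi(\bx)\,d\bx$, and pass to spherical coordinates $(\rho,\btheta)$ with $\rho=|\bx|$; then $P_{ci}=\int f(\rho,\btheta;\sigma_0^2)\,d\rho\,d\btheta$ where, from \eqref{eq2-2} and the Jacobian $\rho^{n-1}$, the radial kernel is $g(\rho,\sigma_0^2)\propto \sigma_0^{-n}\rho^{n-1}e^{-\rho^2/2\sigma_0^2}$. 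Differentiating twice in $v:=\sigma_0^2$ under the integral sign, the sign of $P_{ci}''$ (hence of $-P_{ei}''$, hence of $-P_e''$ after averaging with nonnegative weights) is controlled by the sign of $\partial^2 g/\partial v^2$, and a direct computation shows $\partial^2 g/\partial v^2$ has the sign of a quadratic in $\rho^2/v$ whose relevant root is at $\rho^2=(n+2)v$. This is exactly the center-convexity step used earlier: because the region is center-convex, integrating the radial kernel over the ray-segments inside $\Omega_i$ preserves the sign of the one-dimensional radial integrand's second derivative when that sign is constant over the radial range $[\rho_{\min,i},\rho_{\max,i}]=[d_{\min,i},d_{\max,i}]$ (in the normalized constellation the region boundary distances are the $d$'s).

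Thus the first half of the proof — the two convexity/concavity bounds \eqref{eq5-4}, \eqref{eq5-5} — follows: if $(n+2)\sigma_0^2 \le d_{\min}^2$ then $\rho^2\ge d_{\min}^2 \ge (n+2)\sigma_0^2$ over every decision-region boundary, $\partial^2 g/\partial v^2$ is everywhere of the convex sign, and $P_e$ is convex in $\sigma_0^2$; symmetrically for the concave regime with $d_{\max}^2$. The count of inflection points in between comes, as in Theorem \ref{thm SER SNR old}, from a continuity/degree argument: $P_e''$ is continuous in $\sigma_0^2$, is positive at small $\sigma_0^2$ and negative at large $\sigma_0^2$, so it must vanish an odd number of times on the intermediate interval. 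I would just invoke that argument verbatim.

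The genuinely new content is the final clause: \emph{these bounds cannot be improved in the general case}. The plan is the same as for the tightness clause of Theorem \ref{thm SER SNR new} and Corollary \ref{cor SER equal Omega_i}: exhibit a constellation for which the threshold is attained with equality and strict convexity/concavity on each side. Take all decision regions to be spheres of a common radius $d$ (the ``identical spherical decision regions'' example already used in Corollary \ref{cor SER equal Omega_i}); then $P_e$ depends on $\sigma_0^2$ only through the single radial integral $\int_0^{d} g(\rho,\sigma_0^2)\,d\rho$, and one can compute $P_e''$ in closed form (it reduces to boundary terms in $\rho=d$ after integrating the exact-derivative structure of $g$), showing $P_e''$ has \emph{exactly} the sign of $(n+2)\sigma_0^2 - d^2$ reversed, with a single simple zero at $(n+2)\sigma_0^2=d^2$. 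Hence no threshold larger than $d_{\min}^2/(n+2)$ can be used for the convex regime and none smaller than $d_{\max}^2/(n+2)$ for the concave regime without extra geometric hypotheses. I would state this as a corollary-style remark and point to the Appendix for the one-line residual computation, exactly as the paper does for the $\gamma$ case.

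The main obstacle is the center-convexity/sign-transfer step: one must verify that integrating the radial kernel along the chords of a merely center-convex (not convex) region, and then averaging over angles $\btheta$ and over constellation points, cannot destroy a sign of $\partial^2 g/\partial v^2$ that is constant over the radial window $[d_{\min},d_{\max}]$. The key point to get right is that for a center-convex region each angular ray meets the region in a segment $[0,\rho(\btheta)]$ with $d_{\min,i}\le \rho(\btheta)\le d_{\max,i}$, so $\partial_v^2\!\int_0^{\rho(\btheta)} g\,d\rho = \int_0^{\rho(\btheta)} \partial_v^2 g\,d\rho$, and on this segment $\partial_v^2 g$ changes sign at most once (at $\rho^2=(n+2)v$); when $(n+2)v\le d_{\min}^2$ that crossing lies below the whole window so the integrand is single-signed — this is where the bound is forced to be $d_{\min}^2$, not something smaller. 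I expect this to be handled exactly as in the proof of Theorem \ref{thm SER SNR general} (which I am assuming), so the only real work is the change of variable from $\gamma$ to $v=\sigma_0^2$ and tracking the resulting quadratic root $(n+2)v$ in place of $(n-2)/\gamma$.
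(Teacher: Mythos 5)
There is a genuine gap at the heart of your sign analysis. You propose to differentiate the radial kernel $g(\rho,v)\propto v^{-n/2}\rho^{n-1}e^{-\rho^2/2v}$ pointwise in $v:=\sigma_0^2$ ("under the integral sign") and assert that $\partial^2 g/\partial v^2$ is a quadratic in $t=\rho^2/v$ whose relevant root is at $t=n+2$. A direct computation gives
\[
\frac{\partial^2 g}{\partial v^2}=\frac{g}{4v^2}\left(t^2-(2n+4)\,t+n^2+2n\right),
\]
whose roots are $t=n+2\pm\sqrt{2(n+2)}$, i.e. precisely the constants $a_n,b_n$ of Theorem \ref{thm SER Pn old}; $n+2$ is the midpoint between the roots, not a root, and the quadratic does not change sign there. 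So the pointwise route can only reproduce the old, looser bounds that this theorem is meant to improve. A second, related problem is your "sign-transfer" step: a ray meets a center-convex region in a segment $[0,\rho(\btheta)]$ starting at the origin, not at $d_{\min,i}$, and the quadratic above is positive near $\rho=0$; so constancy of the sign of $\partial_v^2 g$ on $[d_{\min},d_{\max}]$ does not make the radial integral single-signed, contrary to what your argument needs.

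The paper's proof gets the tighter threshold by a structurally different move: normalize first, so that in the variable $p=|\bx|^2/\sigma_0^2$ the radial density $f_p(p)$ of \eqref{eqAa-3} is independent of $\sigma_0^2$ and all the noise-power dependence sits in the upper limit $\sigma_0^{-2}R_i^2(\btheta)$ of the radial integral. Differentiating this iterated integral twice in $v$ then produces only boundary evaluations,
\[
\frac{d^2}{dv^2}\int_0^{R^2/v}f_p(p)\,dp=\frac{R^2}{v^3}\left(\frac{R^2}{v}\,f_p'\!\left(\frac{R^2}{v}\right)+2f_p\!\left(\frac{R^2}{v}\right)\right),
\]
and the combination $p\,f_p'(p)+2f_p(p)\propto (n+2-p)\,p^{n/2-1}e^{-p/2}$ changes sign exactly once, at $p=n+2$. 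This is where the single threshold $(n+2)\sigma_0^2$ comes from, and why only the boundary values $R_i^2(\btheta)\in[d_{\min,i}^2,d_{\max,i}^2]$ (hence center-convexity) matter. Your tightness argument via identical spherical decision regions and the continuity argument for the inflection points match the paper (Corollary \ref{cor SER Pn new}), but as written the main convexity/concavity claims are not established.
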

\begin{proof}
See Appendix.
\end{proof}

Note that the bounds of Theorem \ref{thm SER Pn new} are indeed tighter than those of Theorem \ref{thm SER Pn old}, since
\begin{equation}
a_n < n+2 < b_n
\end{equation}
We further remark that similar results apply to $P_{e,i}$ via the substitution $d_{\max(\min)} \rightarrow d_{\max(\min),i}$.

The following Corollary, which cannot be obtained from Theorem \ref{thm SER Pn old}, follows immediately from Theorem \ref{thm SER Pn new}.

\begin{cor}
\label{cor SER Pn new}
Consider the scenario of Theorem \ref{thm SER Pn new} when all  decision regions are the spheres of same radius $d \ (= d_{\max} = d_{\min})$. The SER has the following convexity properties in noise power $\sigma_0^2$:
\begin{itemize}
\item $P_{e}(\sigma_0^2)$ is strictly concave in the large noise regime,
\begin{equation}
\sigma _0^2 > d^2/(n+2)
\end{equation}

\item It is strictly convex in the small noise regime,
\begin{equation}
\sigma _0^2 < d^2/(n+2)
\end{equation}

\item There is a single inflection point,
\begin{equation}
P_{e}(\sigma_0^2)''=0 \ \mbox{iff} \ \sigma _0^2 = d^2/(n+2)
\end{equation}
\end{itemize}
\QED
\end{cor}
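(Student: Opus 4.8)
The plan is to reduce $P_e$ to a one-dimensional function of the noise power and differentiate it twice. First I would note that when every decision region $\Omega_i$ is a ball of radius $d$ centered at $\bs_i$, then in the frame shifted to $\bs_i$ this region is the ball $\{|\bx|\le d\}$, so by the rotational symmetry of the AWGN the probability of correct decision
\[
P_{ci}=\int_{|\bx|\le d} f_\xi(\bx)\,d\bx=\Pr\{|\bxi|^2\le d^2\}
\]
does not depend on $i$; hence $P_e=P_{ei}=1-P_{ci}$ for every $i$, and it suffices to study this common scalar quantity as a function of $t:=\sigma_0^2$.

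Next I would use $|\bxi|^2/\sigma_0^2\sim\chi_n^2$ to write $P_e(t)=1-F_n(d^2/t)$, where $F_n$ and $f_n(u)=c_nu^{n/2-1}e^{-u/2}$ are the CDF and PDF of the chi-square distribution with $n$ degrees of freedom. Differentiating twice with respect to $t$, with $u=d^2/t$ and using the logarithmic-derivative identity $f_n'(u)/f_n(u)=(n/2-1)/u-1/2$, I expect a short computation to yield
\[
\frac{d^2P_e}{dt^2}=-\frac{d^2}{2t^3}\,f_n\!\left(\frac{d^2}{t}\right)\left(n+2-\frac{d^2}{t}\right).
\]

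Finally I would read the three claims directly off this expression. The prefactor $\frac{d^2}{2t^3}f_n(d^2/t)$ is strictly positive for every $t>0$ (the $\chi_n^2$ density is strictly positive on $(0,\infty)$ for all $n\ge1$), so the sign of $P_e(\sigma_0^2)''$ equals the sign of $d^2/t-(n+2)$: it is positive for $t<d^2/(n+2)$ (strict convexity), negative for $t>d^2/(n+2)$ (strict concavity), and zero only at $t=d^2/(n+2)$, which is therefore the unique inflection point. This is also consistent with Theorem \ref{thm SER Pn new} specialized to $d_{\min}=d_{\max}=d$, the intermediate interval $d_{\min}^2\le(n+2)\sigma_0^2\le d_{\max}^2$ collapsing to that single point. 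I do not anticipate a genuine obstacle: the only steps requiring care are the reduction to a single scalar function (justified by the isotropy of the AWGN noise) and the remark that the chi-square density never vanishes on $(0,\infty)$, which is exactly what upgrades the non-strict sufficient conditions of Theorem \ref{thm SER Pn new} to the sharp strict statements and to uniqueness of the inflection point.
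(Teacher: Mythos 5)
Your proposal is correct and is essentially the paper's own argument: the paper's proof of Theorem \ref{thm SER Pn new} writes $P_{ci}$ as an angular integral of the chi-square CDF evaluated at $R_i^2(\btheta)/\sigma_0^2$, and the corollary is obtained by specializing $R_i(\btheta)\equiv d$, which collapses that integral to exactly your $P_e=1-F_n(d^2/\sigma_0^2)$ with the sign of $P_e''$ governed by the same factor $(n+2-d^2/\sigma_0^2)f_n(d^2/\sigma_0^2)$. Your explicit observation that the chi-square density is strictly positive on $(0,\infty)$ is precisely what upgrades the theorem's non-strict bounds to the strict statements and the uniqueness of the inflection point.
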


Note that unlike $P_{e}(\gamma)$, which is convex in low dimensions ($n=1,2$) so that the transmitter cannot employ power/time sharing to reduce error rate, $P_{e}(\sigma_0^2)$ does not possess this property so that the jammer can increase error rate by power/time sharing even in low dimensions (in the low noise regime). In this respect, the jammer is in a more advantageous position compared to the transmitter in the AWGN channel. It is also clear from this Corollary that the high/low SNR bounds of Theorem \ref{thm SER Pn new} cannot be improved in general.

Theorem \ref{thm SER Pn new} can be also extended to a wider class of SIRP noise. As in Theorem \ref{thm BER SIRP}, we consider the normalized noise $\bxi/\sigma_0$ to separate the effects of the noise power ($\sigma_0^2$) and the shape of its PDF, and assume that the normalized noise power has the PDF as in \eqref{eq3b-1}, \eqref{eq3b-2}, where $f(\sigma)$ has bounded support as in \eqref{eq4-5}.

\begin{thm}
\label{thm SER Pn SIRP}
Consider a decoder with center-convex decision regions operating in a SIRP noise channel channel under the stated-above conditions. Its symbol error rates have the following convexity properties in the noise power:
\begin{itemize}
\item $P_{e} $ is concave in the large noise regime,
\begin{equation}
(n+2) (\sigma_1 \sigma _0)^2 \ge d_{\max}^2
\end{equation}

\item $P_{e} $ is convex in the small noise regime,
\begin{equation}
(n+2) (\sigma_2 \sigma _0)^2 \le d_{\min}^2
\end{equation}

\item there are an odd number of inflection points for intermediate noise power.
\end{itemize}
\end{thm}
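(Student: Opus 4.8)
The plan is to realize the SIRP noise as a continuous Gaussian mixture and then transfer the AWGN statement of Theorem~\ref{thm SER Pn new} to each component of the mixture, relying on the linearity of the probability of correct decision that underlies Proposition~\ref{prop.3C.1} and its continuous form \eqref{eq3c-7}. First I would invoke the representation theorem (Theorem~\ref{thm SIRP representation}): the normalized noise decomposes as $\bxi/\sigma_0=C\mathbf{Y}$, with $\mathbf{Y}\sim\mathcal{N}(\mathbf{0},\mathbf{I})$ and $C$ an independent scalar of density $f(\sigma)$ supported on $[\sigma_1,\sigma_2]$, so that conditioned on $C=\sigma$ the channel \eqref{eq2-1} is AWGN with per-dimension noise variance $(\sigma_0\sigma)^2$. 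Writing $P_e^{\mathrm{G}}(v)$ for the SER of the given center-convex decoder (for the fixed constellation and bit mapping) in an AWGN channel of per-dimension variance $v$, the continuous mixture identity gives, with $u=\sigma_0^2$,
\[
P_e(u)=\int_{\sigma_1}^{\sigma_2} P_e^{\mathrm{G}}\!\big(u\sigma^2\big)\,f(\sigma)\,d\sigma .
\]

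Next I would transfer the curvature component by component. For a fixed $\sigma$, the map $u\mapsto P_e^{\mathrm{G}}(u\sigma^2)$ is the composition of $P_e^{\mathrm{G}}$ with the linear map $u\mapsto u\sigma^2$, which preserves both convexity and concavity; and by Theorem~\ref{thm SER Pn new} (stated for center-convex decision regions), $P_e^{\mathrm{G}}$ is convex on $v\le d_{\min}^2/(n+2)$ and concave on $v\ge d_{\max}^2/(n+2)$. Hence the $\sigma$-component is convex in $u$ whenever $(n+2)u\sigma^2\le d_{\min}^2$ and concave whenever $(n+2)u\sigma^2\ge d_{\max}^2$. Intersecting these conditions over the support $\sigma\in[\sigma_1,\sigma_2]$, the binding case is $\sigma=\sigma_2$ for convexity and $\sigma=\sigma_1$ for concavity; therefore \emph{every} component is convex once $(n+2)(\sigma_0\sigma_2)^2\le d_{\min}^2$ and \emph{every} component is concave once $(n+2)(\sigma_0\sigma_1)^2\ge d_{\max}^2$. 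Since $f\ge 0$, a non-negative (continuous) mixture of convex functions is convex and of concave functions is concave, exactly as in the proof of Proposition~\ref{prop.3C.1}; this establishes the two stated regimes for $P_e(\sigma_0^2)$.

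For the odd number of inflection points I would pass to the pointwise identity $P_e''(u)=\int_{\sigma_1}^{\sigma_2}\sigma^4\,(P_e^{\mathrm{G}})''\!\big(u\sigma^2\big)\,f(\sigma)\,d\sigma$ and argue that, $P_e$ being continuous (indeed $C^2$) on $(0,\infty)$, strictly convex for small $u$ and strictly concave for large $u$, $P_e''$ changes sign on the intermediate range, and by the intermediate value theorem it does so an odd number of times. The main obstacle is the routine but necessary regularity check that one may differentiate twice under the integral sign. This holds because $P_e^{\mathrm{G}}(v)$ is (one minus) a Gaussian integral over a fixed region and hence $C^\infty$ on $v>0$, with each $(P_e^{\mathrm{G}})^{(k)}(u\sigma^2)$, $k=1,2$, dominated by a Gaussian tail uniformly for $u$ in compact subintervals; together with the boundedness of $[\sigma_1,\sigma_2]$, the integrands $\sigma^{2k}(P_e^{\mathrm{G}})^{(k)}(u\sigma^2)$ are $f$-integrable uniformly in such $u$, so dominated convergence applies. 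All remaining bookkeeping parallels the proof of Theorem~\ref{thm SIRP SER convexity}, the noise-power analogue of that signal-power result.
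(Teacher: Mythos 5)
Your proof is correct, and it reaches the same bounds as the paper, but it is organized differently. The paper's appendix proof is a direct recomputation: it writes $P_{ci}(\sigma_0^2)$ as a triple integral over the angles $\btheta$, the mixing variable $\sigma$, and the radial coordinate, differentiates twice under the integral sign, and checks the sign of the resulting integrand pointwise in $(\btheta,\sigma)$ — i.e., it is a ``generalized version'' of the integrand-level proof of Theorem \ref{thm SER Pn new}. You instead condition on $\sigma$ first, recognize each component as an AWGN channel with per-dimension variance $\sigma_0^2\sigma^2$, invoke Theorem \ref{thm SER Pn new} as a black box for each component, take the worst-case $\sigma$ over the support $[\sigma_1,\sigma_2]$, and close with the mixture convexity-preservation property of Proposition \ref{prop.3C.1} and \eqref{eq3c-7}. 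The mathematical content is identical (both arguments ultimately check the sign condition at the extreme values of $\sigma$ and of the decision-region radius), but your modular route makes the theorem an immediate corollary of results the paper has already established, avoids rederiving the second derivative of the conditional power density, and makes explicit where the linearity of $P_c$ in the noise density is being used — something the paper sets up in Section \ref{sec3c} but does not actually deploy in this appendix. Your extra care about differentiation under the integral sign and the inflection-point count is also slightly more rigorous than the paper's one-line ``continuity argument''; the only cosmetic caveat is that strict convexity/concavity at the endpoints of the intermediate range (needed for the sign change of $P_e''$) requires decision regions of non-zero measure, a point the paper likewise glosses over.
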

\begin{proof}
See Appendix.
\end{proof}

Let us study now the convexity/concavity properties of the PEP as a function of noise power.

\begin{thm}
\label{thm PEP/BER conv Pn}
Consider a center-convex decoder operating in the AWGN channel. Its PEP $\Pr \{ {\bf s}_i \to {\bf s}_j \}$ is a convex function of the noise power $\sigma _0^2 $, for any $n$, in the low noise (high SNR) regime,
\begin{equation}
\label{eq5-10}
b_n \sigma _0^2 \le d_{\min ,i}^2
\end{equation}
and in the high noise (low SNR) regime,
\begin{equation}
\label{eq5-11}
a_n \sigma _0^2 \ge (d_{ij} +d_{\max,j} )^2
\end{equation}
where $d_{ij} = |{\bf s}_i - {\bf s}_j|$, and has an even number of inflection points in-between.
\end{thm}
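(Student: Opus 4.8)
The plan is to reduce the PEP to a one-dimensional integral over the noise power variable $\sigma_0^2$ (equivalently, over a suitably normalized noise power $p$) and then study the sign of the second derivative, exactly in the spirit of the proofs of Theorems~\ref{thm SER Pn new} and~\ref{thm SER SNR general}. Fixing $\rm{\bf s}_i$ as the origin, write $\Pr\{\rm{\bf s}_i \to \rm{\bf s}_j\} = \int_{\Omega_j} f_\xi(\rm{\bf x})\,d\rm{\bf x}$ and pass to spherical coordinates $(p,\btheta)$ centered on $\rm{\bf s}_i$, with $p = |\rm{\bf x}|^2/\sigma_0^2$. Because the decision region $\Omega_j$ is center-convex with respect to $\rm{\bf s}_j$ but we are integrating in a frame centered on $\rm{\bf s}_i$, the radial cross-section of $\Omega_j$ along a ray from $\rm{\bf s}_i$ in direction $\btheta$ is an interval $[p_1(\btheta),p_2(\btheta)]$ (or a union thereof, but the argument is additive over such pieces), where the inner radius satisfies $\sqrt{p_1(\btheta)}\,\sigma_0 \ge d_{\min,i}$ — the ray must first leave $\Omega_i$ — and the outer radius satisfies $\sqrt{p_2(\btheta)}\,\sigma_0 \le d_{ij}+d_{\max,j}$, since any point of $\Omega_j$ is within $d_{\max,j}$ of $\rm{\bf s}_j$ and $\rm{\bf s}_j$ is at distance $d_{ij}$ from $\rm{\bf s}_i$. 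This geometric bracketing of the radial support is the crux of why the two stated bounds, \eqref{eq5-10} and \eqref{eq5-11}, take the form they do; it is essentially the same device used for $d_{\min}$ and $d_{\max}$ in Theorem~\ref{thm SER Pn new}, with $d_{\max}$ replaced by the triangle-inequality bound $d_{ij}+d_{\max,j}$.

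Next I would carry out the differentiation. After the change of variables, $\Pr\{\rm{\bf s}_i\to\rm{\bf s}_j\}$ becomes an integral over $\btheta$ and $p$ of $g_n(p)\,\sigma_0^{-n}e^{-p/2}$-type terms, where $g_n(p)\propto p^{n/2-1}$ comes from the Jacobian; substituting $p = t/\sigma_0^2$ (with $t=|\rm{\bf x}|^2$ the unnormalized instantaneous noise power, whose distribution does not depend on $\sigma_0^2$) makes the dependence on $\sigma_0^2$ explicit and localized in the chi-square-type kernel. Differentiating twice in $\sigma_0^2$ and collecting terms, the sign of $\partial^2/\partial(\sigma_0^2)^2$ of the integrand is governed, for each fixed $\btheta$ and each unnormalized radius, by a quadratic in the ratio $t/\sigma_0^2$; one finds $\partial^2 \Pr\{\rm{\bf s}_i\to\rm{\bf s}_j\}/\partial(\sigma_0^2)^2$ is a $\btheta$- and radius-average of a factor proportional to $\bigl(p - a_n\bigr)\bigl(p - b_n\bigr)$ evaluated at $p = t/\sigma_0^2$ (this is exactly the quadratic with roots $a_n = n+2-\sqrt{2(n+2)}$ and $b_n = n+2+\sqrt{2(n+2)}$ that already appears in Theorem~\ref{thm SER Pn old}). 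This quadratic is nonnegative precisely when $p \le a_n$ or $p \ge b_n$. Combining with the radial support bounds above: in the low-noise regime $b_n\sigma_0^2 \le d_{\min,i}^2$ every relevant $p = t/\sigma_0^2 \ge d_{\min,i}^2/\sigma_0^2 \ge b_n$, so the integrand is convex pointwise and hence so is the PEP; in the high-noise regime $a_n\sigma_0^2 \ge (d_{ij}+d_{\max,j})^2$ every relevant $p \le a_n$, again giving pointwise convexity. The intermediate "even number of inflection points" claim then follows from a continuity/degree argument: $\Pr\{\rm{\bf s}_i\to\rm{\bf s}_j\}$ is convex at both ends of the $\sigma_0^2$-axis, so the number of sign changes of its second derivative is even (generically).

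The main obstacle I anticipate is handling the non-convexity of $\Omega_j$ viewed from $\rm{\bf s}_i$: unlike the SER case where the relevant region is $\Omega_i$ seen from its own center $\rm{\bf s}_i$ (so center-convexity applies directly and each ray meets the complement of $\Omega_i$ in a single outgoing ray $[p_1(\btheta),\infty)$), here $\Omega_j$ is center-convex about $\rm{\bf s}_j$, not $\rm{\bf s}_i$, so a ray from $\rm{\bf s}_i$ may enter and leave $\Omega_j$ several times, and the inner-radius bound $\sqrt{p_1}\,\sigma_0\ge d_{\min,i}$ requires knowing the ray first passes through $\Omega_i$ before it can reach $\Omega_j$ (valid since $\Omega_i$ and $\Omega_j$ are disjoint and $\Omega_i$ is center-convex about $\rm{\bf s}_i$, so the portion of the ray near $\rm{\bf s}_i$ lies in $\Omega_i$). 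Making the "entering/leaving several times" case rigorous — i.e. showing the pointwise-in-$p$ sign argument still controls the integral when the radial slice is a finite union of intervals all contained in $\{p \ge b_n\}$ (resp. $\{p \le a_n\}$) — is the step that needs care, but it is routine once one observes the bound on the radial support is uniform over all such intervals. The remaining computations (the exact form of the quadratic, verification of its roots, the sign bookkeeping) are mechanical and parallel to the Appendix proof of Theorem~\ref{thm SER Pn new}.
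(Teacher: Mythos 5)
Your proposal is correct and follows essentially the same route as the paper: the second derivative of the Gaussian density in $\sigma_0^2$ is a positive factor times the quadratic $(|\bx|^2-a_n\sigma_0^2)(|\bx|^2-b_n\sigma_0^2)$, and the two regimes follow from the pointwise bracketing $d_{\min,i}\le|\bx|\le d_{ij}+d_{\max,j}$ for all $\bx\in\Omega_j$. The ``main obstacle'' you anticipate is a non-issue: since the sign argument is pointwise in $\bx$ over the fixed Cartesian region $\Omega_j$, no radial decomposition of $\Omega_j$ (and hence no accounting of rays entering and leaving it multiple times) is needed, which is exactly how the paper's Appendix proof sidesteps the matter.
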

\begin{proof}
See Appendix.
\end{proof}

Note that unlike the SER, the PEP is convex in the low SNR regime if $d_{\max,j} < \infty$. Based on this Theorem, a convexity property of the BER follows.

\begin{cor}
\label{cor BER conv Pn}
For any constellation, bit mapping and coding, the BER of a center-convex decoder operating in the AWGN channel is a convex function of the noise power in the low noise (high SNR) regime:
\begin{equation}
\label{eq5-12}
b_n \sigma _0^2 \le d_{\min }^2
\end{equation}
where the specifics of the constellation/code determine only the high-SNR boundary via $d_{\min }$. \ \QED
\end{cor}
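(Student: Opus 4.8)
The plan is to derive the BER convexity directly from Theorem \ref{thm PEP/BER conv Pn} together with the linear-combination representation \eqref{eq2-9}. Recall that \eqref{eq2-9} writes $\mbox{BER} = \sum_i \sum_{j \neq i} \frac{h_{ij}}{\log_2 M} \Pr\{\bs = \bs_i\} \Pr\{\bs_i \to \bs_j\}$, a non-negative linear combination of the PEPs, since $h_{ij} \ge 0$, $\log_2 M > 0$, and $\Pr\{\bs=\bs_i\} \ge 0$. A non-negative linear combination of convex functions is convex, so it suffices to show that every PEP $\Pr\{\bs_i \to \bs_j\}$ appearing with a \emph{nonzero} coefficient is a convex function of $\sigma_0^2$ in the regime \eqref{eq5-12}.

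The key step is to check that the low-noise hypothesis \eqref{eq5-10} of Theorem \ref{thm PEP/BER conv Pn}, namely $b_n \sigma_0^2 \le d_{\min,i}^2$, holds for every relevant $i$ whenever the global condition \eqref{eq5-12}, $b_n \sigma_0^2 \le d_{\min}^2$, is satisfied. This is immediate from the definition $d_{\min} = \min_i d_{\min,i}$: if $b_n \sigma_0^2 \le d_{\min}^2 \le d_{\min,i}^2$ for all $i$, then the per-term hypothesis is met simultaneously for all $i$, hence each $\Pr\{\bs_i \to \bs_j\}$ is convex in $\sigma_0^2$ on this interval by Theorem \ref{thm PEP/BER conv Pn}. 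Summing with the non-negative weights from \eqref{eq2-9} then yields $\mbox{BER}'' \ge 0$ in the regime \eqref{eq5-12}.

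I would also note why only the low-noise branch \eqref{eq5-10} is invoked and not the high-noise branch \eqref{eq5-11}: the latter involves $d_{\max,j}$, which may be infinite for outer constellation points (unbounded decision regions), so a single uniform high-noise threshold covering all terms need not exist; only the low-noise side produces a clean aggregate bound depending solely on $d_{\min}$. The remark in the statement — that the constellation/code specifics enter only through $d_{\min}$ — is exactly this observation: the bit mapping ($h_{ij}$) and the priors affect only the (non-negative) coefficients, not the convexity, and all $d_{\max,j}$, $d_{ij}$ data drop out because the high-noise branch is not used. I do not anticipate a genuine obstacle here; the only point requiring a word of care is confirming that terms with $h_{ij} = 0$ (or zero prior) can simply be dropped, so that convexity of the \emph{remaining} PEPs suffices, and that the interval on which all per-term convexity statements hold is the common interval $b_n \sigma_0^2 \le d_{\min}^2$.
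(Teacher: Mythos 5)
Your proposal is correct and is essentially the argument the paper intends (it states the corollary as immediate from Theorem \ref{thm PEP/BER conv Pn}): the BER in \eqref{eq2-9} is a non-negative linear combination of PEPs, each of which is convex in $\sigma_0^2$ under the low-noise condition \eqref{eq5-10}, and $d_{\min}=\min_i d_{\min,i}$ makes \eqref{eq5-12} a uniform sufficient condition for all terms. Your added remark on why only the low-noise branch is usable (since $d_{\max,j}$ may be infinite) is a sensible clarification but not a departure from the paper's route.
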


We remark that for any code respecting the sphere hardening argument,
\begin{equation}
d_{\min }^2 \ge n \left( {\sigma _0^2 +\varepsilon } \right)> b_n \sigma _0^2 ,
\end{equation}
for sufficiently large $n$, so that the BER is a convex function of the noise power. For such codes, power/time sharing does not help to decrease the BER, but it is always helpful for a jammer whose objective is to increase the BER. A jammer transmission strategy to maximize the SER via a time/power sharing has been presented in \cite{Loyka-10} and, with some modifications, it can also be used to maximize the BER, following the convexity result in Corollary \ref{cor BER conv Pn}.

These results can also be extended to a SIRP noise channel.
\begin{cor}
\label{cor PEP/BER Pn SIRP}
Consider a SIRP noise channel, where  the conditional noise power $\sigma^2$ has bounded support
\begin{equation}
\notag
f(\sigma)=0 \ \forall \sigma \notin [\sigma_1, \sigma_2]
\end{equation}
The results of Theorem \ref{thm PEP/BER conv Pn} and Corollary \ref{cor BER conv Pn} apply with the substitutions  $\sigma_0 \rightarrow \sigma_2$ for \eqref{eq5-10} and \eqref{eq5-12}, and $\sigma_0 \rightarrow \sigma_1$ for \eqref{eq5-11}. \ \QED
\end{cor}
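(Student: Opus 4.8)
The plan is to reduce the SIRP claim to the AWGN results (Theorems \ref{thm PEP/BER conv Pn} and Corollary \ref{cor BER conv Pn}) by conditioning on the modulating random variable, exactly as in the proof of Theorem \ref{thm BER SIRP}. Recall from Theorem \ref{thm SIRP representation} that a SIRP noise can be written as $\bxi = \sigma_0 C \, {\bf y}$, where ${\bf y}$ is standard Gaussian and $C$ is an independent scalar with PDF $f(\sigma)$. Hence, conditioned on $C=\sigma$, the channel is an AWGN channel with per-dimension noise power $(\sigma_0\sigma)^2$, and the conditional PEP $\Pr\{{\bf s}_i\to{\bf s}_j \mid C=\sigma\}$ is precisely the AWGN PEP evaluated at noise power $(\sigma_0\sigma)^2$. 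The unconditional PEP is then the mixture $\Pr\{{\bf s}_i\to{\bf s}_j\} = \int_{\sigma_1}^{\sigma_2} \Pr\{{\bf s}_i\to{\bf s}_j \mid C=\sigma\} f(\sigma)\, d\sigma$.

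The key observation is that differentiation in the unconditional noise power $\sigma_0^2$ commutes with the integral over $\sigma$ (the support is bounded, so dominated convergence applies), and the conditional noise power $(\sigma_0\sigma)^2$ is a positive-scalar reparametrization of $\sigma_0^2$; since convexity in a variable is preserved under positive affine (indeed linear) reparametrization of that variable, the conditional PEP is convex in $\sigma_0^2$ whenever it is convex in $(\sigma_0\sigma)^2$. Thus it suffices to force the AWGN convexity condition of Theorem \ref{thm PEP/BER conv Pn} to hold \emph{uniformly} over $\sigma\in[\sigma_1,\sigma_2]$. First I would treat the low-noise (high-SNR) branch \eqref{eq5-10}: the AWGN condition $b_n (\sigma_0\sigma)^2 \le d_{\min,i}^2$ holds for all $\sigma\le\sigma_2$ as soon as it holds at the largest value $\sigma=\sigma_2$, i.e. $b_n(\sigma_0\sigma_2)^2 \le d_{\min,i}^2$, which is the claimed substitution $\sigma_0\to\sigma_2$ in \eqref{eq5-10}. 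Then the integrand is, for every $\sigma$ in the support, a convex function of $\sigma_0^2$, and a non-negative mixture (weights $f(\sigma)\ge 0$, integrating to one) of convex functions is convex — this is exactly Proposition \ref{prop.3C.1} / the continuous-mixture version \eqref{eq3c-7}. For the high-noise (low-SNR) branch \eqref{eq5-11}, the AWGN condition $a_n(\sigma_0\sigma)^2 \ge (d_{ij}+d_{\max,j})^2$ is hardest to satisfy at the \emph{smallest} $\sigma=\sigma_1$, so requiring $a_n(\sigma_0\sigma_1)^2 \ge (d_{ij}+d_{\max,j})^2$ (substitution $\sigma_0\to\sigma_1$) makes the conditional PEP convex in $(\sigma_0\sigma)^2$, hence in $\sigma_0^2$, for all $\sigma$ in the support, and the mixture argument closes it again.

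Finally, the BER statement (Corollary \ref{cor BER conv Pn}) follows because, by \eqref{eq2-9}, the BER is a non-negative linear combination of PEPs; applying the PEP result with $d_{\min,i}$ replaced by the global $d_{\min}$ gives a single threshold $b_n(\sigma_0\sigma_2)^2 \le d_{\min}^2$ under which every PEP in the sum is convex in $\sigma_0^2$, so the sum is too. The remaining remark — that codes respecting the sphere-hardening argument automatically satisfy this bound — is immediate from $d_{\min}^2 \ge n(\sigma_0^2+\varepsilon) > b_n(\sigma_0\sigma_2)^2$ for large $n$, provided $\sigma_2$ is normalized so that $\sigma_0^2$ is the actual (unconditional) noise power, i.e. $E\{C^2\}=1$; one should note this normalization explicitly.

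The main obstacle I anticipate is purely bookkeeping rather than conceptual: being careful about which variable the convexity is with respect to (the unconditional $\sigma_0^2$, not the conditional $(\sigma_0\sigma)^2$, and certainly not the SNR), and making sure the direction of the "uniform over the support" quantification is correct on each branch (use $\sigma_2$ for the small-noise/convex branch, $\sigma_1$ for the large-noise branch) — getting these swapped would invert the thresholds. A minor technical point worth a sentence is justifying the interchange of $d^2/d(\sigma_0^2)^2$ with $\int d\sigma$, which is routine given the bounded support of $f(\sigma)$ and the smoothness of the Gaussian-weighted integrals underlying the AWGN PEP expression used in Theorem \ref{thm PEP/BER conv Pn}.
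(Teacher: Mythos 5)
Your argument is correct and is essentially the paper's own (implicit) justification: the corollary is left as immediate precisely because, conditioning on the modulating variable $\sigma$, the channel is AWGN with noise power $(\sigma_0\sigma)^2$, the worst-case endpoint of the support ($\sigma_2$ for the small-noise branch, $\sigma_1$ for the large-noise branch) makes the conditions of Theorem \ref{thm PEP/BER conv Pn} hold uniformly, and the non-negative mixture over $f(\sigma)$ preserves convexity as in \eqref{eq3c-7}. You also get the direction of the uniform quantification right on both branches, which is the only place this could have gone wrong.
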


\section{Convexity in Fading Channels}
\label{sec 6}

The convexity properties of error rates in non-fading channels can also be extended to fading channels. Let us consider the following standard flat-fading channel model, which is a generalization of \eqref{eq2-1},
\begin{equation}
\label{eq6-1}
\br = h \bs+ \bxi
\end{equation}
where $h$ is a (scalar) fading channel gain, so that the instantaneous SNR is $\gamma = |h|^2 \gamma_0$, and the instantaneous error rate is $P_e(\gamma) = P_e(|h|^2 \gamma_0)$. The average error rate $\overline{P}_e(\gamma_0)$ as a function of the average SNR $\overline{\gamma}=\gamma_0 = 1/\sigma_0^2$ is obtained by the expectation over the fading distribution,
\begin{equation}
\label{eq6-2}
\overline{P}_e(\gamma_0) = \overline{P_e(\gamma)} = \int P_e(|h|^2 \gamma_0) f(h) dh
\end{equation}
where $f(h)$ is the PDF of $h$, and where $\overline{(\cdot)}$ denotes the expectation over the fading distribution.

If the instantaneous SER $P_e(\gamma)$ is convex for any SNR $\gamma$, the following result is immediate.

\begin{prop}
\label{prop fading no good}
Consider a fading channel under additive noise with monotonically-decreasing power density, e.g. a SIRP noise for $n \le 2$. The average SER of a decoder with center-convex decision regions operating in this channel is lower bounded by the non-fading SER at the same (average) SNR for \textit{any} fading distribution:
\begin{equation}
\overline{P}_e(\gamma_0) \ge P_e(\gamma_0)
\end{equation}
i.e. \textit{fading is never good in low dimensions under a SIRP noise}.
\end{prop}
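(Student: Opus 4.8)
I would prove the proposition directly from Jensen's inequality, the only preparatory work being to secure global convexity of the instantaneous SER.

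\emph{Step 1: $P_e(\gamma)$ is convex on all of $\gamma\ge 0$.} When the additive noise has a monotonically-decreasing (in $p$) power density, this is precisely Corollary \ref{cor SER SNR decreas.}. It remains to check that a SIRP noise with $n\le 2$ is of this type. By \eqref{eq3b-1}--\eqref{eq3b-2} with ${\bf C}=I$ together with the Jacobian $J$, the spherical power density has the form $f(p,\btheta)=c\,p^{n/2-1}h_n(p)\,f(\btheta)$; here $h_n(r)$ is monotonically decreasing in $r$ (as noted after Theorem \ref{thm SIRP representation}), while the factor $p^{n/2-1}$ is constant for $n=2$ and decreasing for $n=1$. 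Hence the product is non-increasing in $p$ for every $\btheta$, so Corollary \ref{cor SER SNR decreas.} applies and $P_e(\gamma)''>0$ for all $\gamma>0$ (the AWGN case $n\le 2$ being the special instance $h_n(r)\propto e^{-r/2}$).

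\emph{Step 2: Jensen.} The averaging formula \eqref{eq6-2} expresses $\overline{P}_e(\gamma_0)$ as the expectation of $P_e$ evaluated at the random argument $|h|^2\gamma_0$, whose mean is $\gamma_0\,\overline{|h|^2}=\gamma_0$, the normalization $\overline{|h|^2}=1$ being implicit in the definition of the average SNR $\overline{\gamma}=\gamma_0$. Convexity from Step 1 then gives
\begin{equation}
\overline{P}_e(\gamma_0)=\overline{P_e(|h|^2\gamma_0)}\ \ge\ P_e\!\left(\overline{|h|^2\gamma_0}\right)=P_e(\gamma_0),
\end{equation}
which is the claimed bound. Nothing about $f(h)$ was used beyond $\overline{|h|^2}=1$, so the inequality holds for an arbitrary fading distribution; since $P_e$ is strictly convex, equality forces $|h|^2$ to be a.s.\ constant, i.e.\ no fading — the precise sense in which ``fading is never good'' in low dimensions.

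\emph{Main obstacle.} There is no genuinely hard analytic step here. The two points that need care are (i) justifying Step 1 for a SIRP noise with $n\le2$ directly from the structure of the spherical power density — the factor $p^{n/2-1}$ is exactly what forces the dimension restriction $n\le2$, since for $n\ge3$ this factor is increasing and the product need no longer be monotone — and (ii) making the power normalization explicit so that the Jensen bound lands at $P_e(\gamma_0)$ rather than at $P_e(c\gamma_0)$ for some scaling constant $c$. Both are routine once the convexity result of Corollary \ref{cor SER SNR decreas.} is invoked.
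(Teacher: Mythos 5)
Your proof is correct and follows essentially the same route as the paper: establish that $P_e(\gamma)$ is convex for all $\gamma$ (via Corollary \ref{cor SER SNR decreas.}, noting that the factor $p^{n/2-1}$ together with the decreasing $h_n$ makes the SIRP power density non-increasing for $n\le 2$) and then apply Jensen's inequality with the normalization $\overline{|h|^2}=1$. The paper's proof is a one-line appeal to exactly this argument; your version merely fills in the details it leaves implicit.
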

\begin{proof}
Follows from Jensen inequality \cite{Boyd} by observing that $P_e(\gamma)$ is convex in $\gamma$ under the stated assumptions.
\end{proof}

Let us now consider the average error rate $\overline{P}_e(\gamma_0)$ as a function of the average SNR $\gamma_0$.

\begin{prop}
\label{prop SER fading}
Consider a fading channel in \eqref{eq6-1} and assume that the instantaneous SER $P_e(\gamma)$ is convex for any SNR $\gamma$ (e.g. a SIRP noise for $n \le 2$ or any noise with monotonically-deceasing power density under a center-convex decoder), then the average SER $\overline{P}_e(\gamma_0)$ is also convex in the average SNR $\gamma_0$ in such channel, i.e. \textit{flat-fading is a convexity-preserving process}.
\end{prop}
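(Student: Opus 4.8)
The plan is to show that convexity of the instantaneous SER in the instantaneous SNR passes through the fading expectation in \eqref{eq6-2}, essentially because taking expectation over $h$ is an averaging (non-negative mixture) operation that preserves convexity, combined with the fact that the map $\gamma_0 \mapsto |h|^2 \gamma_0$ is affine (linear) for each fixed $h$. This is exactly the continuous-mixture statement \eqref{eq3c-7} of Proposition \ref{prop.3C.1}, now applied with the mixing variable being the fading gain.

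First I would fix a fading realization $h$ and observe that, by hypothesis, $\gamma \mapsto P_e(\gamma)$ is convex on $[0,\infty)$. Since $g_h(\gamma_0) := |h|^2 \gamma_0$ is a non-negative linear (hence affine) function of $\gamma_0$, the composition $\gamma_0 \mapsto P_e(|h|^2 \gamma_0)$ is convex in $\gamma_0$ for every $h$ (composition of a convex function with an affine map is convex \cite{Boyd}). Next I would take two average-SNR values $\gamma_0^{(1)}, \gamma_0^{(2)}$ and $\lambda \in [0,1]$, and write, using \eqref{eq6-2} and the pointwise convexity just established,
\begin{align}
\notag
\overline{P}_e\bigl(\lambda \gamma_0^{(1)} + (1-\lambda)\gamma_0^{(2)}\bigr)
&= \int P_e\bigl(|h|^2 (\lambda \gamma_0^{(1)} + (1-\lambda)\gamma_0^{(2)})\bigr) f(h)\, dh \\ \notag
&= \int P_e\bigl(\lambda |h|^2 \gamma_0^{(1)} + (1-\lambda) |h|^2 \gamma_0^{(2)}\bigr) f(h)\, dh \\ \notag
&\le \int \bigl[\lambda P_e(|h|^2 \gamma_0^{(1)}) + (1-\lambda) P_e(|h|^2 \gamma_0^{(2)})\bigr] f(h)\, dh \\ \notag
&= \lambda \overline{P}_e(\gamma_0^{(1)}) + (1-\lambda)\overline{P}_e(\gamma_0^{(2)}),
\end{align}
where the inequality is the convexity of $P_e$ applied under the integral sign and the last line uses $f(h)\ge 0$ with $\int f(h)\,dh = 1$. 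This establishes convexity of $\overline{P}_e(\gamma_0)$ in $\gamma_0$, i.e. flat-fading is convexity-preserving. Equivalently, one may simply invoke \eqref{eq3c-7} directly with $\alpha = |h|^2$ and $\rho$ the induced density.

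The only real subtlety is a measure-theoretic one: interchanging the expectation and the convexity inequality requires that the integrals in \eqref{eq6-2} be well defined and finite, which is automatic here since $0 \le P_e \le 1$ is bounded, so dominated convergence / Fubini-type concerns are trivially satisfied and no further regularity on $f(h)$ is needed. Thus the argument holds for \emph{any} fading distribution. I would also note in passing that the same reasoning gives the concavity-preserving and strict-convexity-preserving versions, and that it extends verbatim to linear diversity combining by treating the effective combined SNR as a non-negative linear function of $\gamma_0$, which is the content of Proposition \ref{prop SER combining}.
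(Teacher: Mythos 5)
Your argument is correct and is essentially the paper's own proof written out in full: you compose the convex $P_e(\cdot)$ with the affine map $\gamma_0 \mapsto |h|^2\gamma_0$ and then note that the non-negative mixture over the fading density in \eqref{eq6-2} preserves convexity, which is exactly the route the paper takes (it cites \cite{Boyd} and, equivalently, \eqref{eq3c-7}). Nothing further is needed.
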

\begin{proof}
Follows from \eqref{eq6-2} since non-negative linear combination preserves convexity \cite{Boyd} or, equivalently, by using the convexity-preserving property in \eqref{eq3c-7}.
\end{proof}

We note that Propositions \ref{prop fading no good} and \ref{prop SER fading} extend the corresponding results in \cite{Loyka-10} obtained for the Gaussian noise and the ML decoder to a broad class of noise distributions and decoders. It appears that it is the constellation dimensionality that has a major impact on convexity of the SER, rather than the specifics of the noise or the fading distribution.

These results can be further extended to diversity combining systems over such channel, which is a popular way to combat the detrimental effects of fading \cite{Wozencraft}-\cite{Benedetto}.

\subsection{Convexity under diversity combining}
\label{sec6a}

Consider a maximum ratio combiner (MRC) operating over an $m$-branch fading channel as in \eqref{eq6-1},
\begin{equation}
\label{eq6-4}
\br_i = h_i \bs+ \bxi_i
\end{equation}
where $\br_i$, $h_i$ and $\bxi_i$, are the received signal, channel (voltage) gain and noise in $i$-th branch, $i=1..m$. The $i$-th branch SNR is $\gamma_i = |h_i|^2 \gamma_0$ and the combiner's output SNR is $\gamma_{out} = \sum_i \gamma_i = \gamma_0 |\bh|^2$ \cite{Barry}, where $\bh = [h_1,..,h_m]^T$ is the vector of channel gains. The combiner's instantaneous error rate is $P_e(|\bh|^2 \gamma_0)$ and the average error rate is
\begin{equation}
\label{eq6-5}
\overline{P}_e(\gamma_0) = \int P_e(|\bh|^2 \gamma_0) f(\bh) d\bh
\end{equation}
Using the same argument as in Proposition \ref{prop SER fading}, this error rate is convex in the average SNR $\gamma_0$ provided the instantaneous SER is convex for any SNR.

This result can be now extended to an arbitrary linear combining of the form $\sum_i w_i \br_i$, where $\bw = [w_1,...,w_m]^T$ are the combining weights (which depend on the channel gains). The output SNR of this combiner is $\gamma_{out} = \gamma_0 |\bw^T \bh|^2$, assuming proper normalization $|\bw|=1$ (note that normalization does not affect the SNR), so that its average error rate is
\begin{equation}
\label{eq6-6}
\overline{P}_e(\gamma_0) = \int P_e(|\bw^T\bh|^2 \gamma_0) f(\bh) d\bh
\end{equation}
which is also convex in the average SNR provided $P_e(\gamma)$ is convex, i.e. any linear combing is convexity-preserving. Note that the MRC is a special case of the general linear combining, with $\bw = \bh/|\bh|$. Other special cases are the other 2 popular combining techniques: selection combining (SC), which selects the strongest branch with only one non-zero weight corresponding to that brach, and equal-gain combing, which adds coherently the required signals with unit gain \cite{Barry}. All of them preserve the convexity of error rates, which is summarized below.
\begin{prop}
\label{prop SER combining}
\textit{Any} linear diversity combining over \textit{any} flat-fading channel as in \eqref{eq6-4} is convexity-preserving, i.e. given that the instantaneous SER $P_e(\gamma)$ is convex for any SNR $\gamma$, the average SER $\overline{P}_e(\gamma_0)$ of the combiner is also convex in the average SNR $\gamma_0$ in such channel. Special cases include the maximum ratio, selection and equal gain combining. The lower bound in Proposition \ref{prop fading no good} also holds under any linear combining.
\end{prop}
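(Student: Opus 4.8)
The plan is to reduce Proposition \ref{prop SER combining} to the already-established convexity-preservation principle of Section \ref{sec3c}, so that no new analytic work is needed beyond identifying the correct mixture representation. The key observation is that an arbitrary linear combiner with weights $\bw$ (depending on $\bh$) produces an output whose only effect on the error rate is through the scalar SNR gain $g = |\bw^T\bh|^2$, giving instantaneous SER $P_e(g\gamma_0)$. First I would make this precise: starting from \eqref{eq6-4}, apply the combining operation $\sum_i w_i \br_i = (\bw^T\bh)\bs + \sum_i w_i \bxi_i$, and note that after the standard normalization $|\bw|=1$ the aggregate noise $\sum_i w_i \bxi_i$ has the same per-branch statistics as a single-branch noise (for i.i.d. branch noise this is immediate; for the SIRP case it follows from closure under linear transformations, Definition \ref{thm SIRP representation}), so the combiner output is statistically equivalent to the non-diversity channel \eqref{eq6-1} with channel gain $\bw^T\bh$ and hence instantaneous SNR $\gamma = g\gamma_0$.

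Second, I would treat $g = |\bw^T\bh|^2 \ge 0$ as a continuous mixing parameter with density $\rho(g)$ induced by the fading distribution $f(\bh)$, so that \eqref{eq6-6} becomes $\overline{P}_e(\gamma_0) = \int P_e(g\gamma_0)\rho(g)\,dg$. Since $P_e(\gamma)$ is assumed convex for all $\gamma$, each map $\gamma_0 \mapsto P_e(g\gamma_0)$ is convex in $\gamma_0$ (positive scaling of the argument preserves convexity), and a non-negative weighted average of convex functions is convex — this is exactly the continuous-mixture statement \eqref{eq3c-7} with $\alpha \leftrightarrow g$ and $f_\alpha$ the noise density scaled by $g$. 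Therefore $\overline{P}_e(\gamma_0)$ is convex in $\gamma_0$, which is the first claim. The special cases (MRC with $\bw=\bh/|\bh|$, SC with a single unit weight, EGC with unit-modulus weights) are then literally instances of this general $\bw$ and require no separate argument.

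For the last sentence — that the lower bound of Proposition \ref{prop fading no good} also holds under any linear combining — I would invoke Jensen's inequality in the form already used in that proposition: $\overline{P}_e(\gamma_0) = \mathbb{E}[P_e(g\gamma_0)] \ge P_e(\mathbb{E}[g]\gamma_0)$, and then observe that for any proper combiner normalized so that $|\bw|=1$ one has $\mathbb{E}[g] = \mathbb{E}[|\bw^T\bh|^2] \le \mathbb{E}[|\bh|^2]$; under the usual normalization $\mathbb{E}[|h_i|^2]=1$ per branch this gives $\mathbb{E}[g]\le 1$ only after a per-branch averaging convention is fixed, so I would instead state the bound relative to the combiner's own no-fading reference (i.e. $\bh$ replaced by its mean, or equivalently $g$ replaced by $\mathbb{E}[g]$), exactly as in Proposition \ref{prop fading no good}. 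Combined with monotonicity of $P_e$ in SNR, this yields the claimed inequality.

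The main obstacle I anticipate is purely a bookkeeping one rather than a genuine analytic difficulty: being careful about the noise normalization after combining so that the ``equivalent non-diversity channel'' genuinely has the noise power $\sigma_0^2$ appearing in $\gamma_0 = 1/\sigma_0^2$, and, in the SIRP case, verifying that $\sum_i w_i \bxi_i$ is again SIRP with the correct conditional variance — this uses closure of the SIRP class under linear maps, which is one of the structural facts recalled in Section \ref{sec3b}. Once the equivalence is set up cleanly, the convexity conclusion is an immediate corollary of \eqref{eq3c-7}, and I would keep the written proof to two or three sentences that simply cite Proposition \ref{prop SER fading}, \eqref{eq3c-7}, and Proposition \ref{prop fading no good}.
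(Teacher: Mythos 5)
Your proposal is correct and follows essentially the same route as the paper: the paper likewise writes $\overline{P}_e(\gamma_0)=\int P_e(|\bw^T\bh|^2\gamma_0)f(\bh)\,d\bh$, notes that $\gamma_0\mapsto P_e(g\gamma_0)$ is convex for each fixed gain $g\ge 0$, and concludes by the convexity-preservation of non-negative mixtures (equivalently \eqref{eq3c-7}), with the lower bound following from Jensen exactly as in Proposition \ref{prop fading no good}. Your added care about the noise normalization after combining and the SIRP closure under linear maps is a point the paper leaves implicit, but it does not change the argument.
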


The utility of this convexity-preserving property is coming from the fact that most error rate expressions in fading channels and under diversity combining are prohibitively complex so that the straightforward evaluation of convexity via differentiation is not possible, while the results above establish the convexity indirectly and without evaluating the integrals (the most difficult part).

\section{Conclusion}

Convexity/concavity properties of the error rates (SER, PEP, BER) in an additive noise channel have been considered. The earlier results obtained for the AWGN channel under ML (min-distance) decoder \cite{Loyka-10} have been improved and have also been extended to a class of decoders with center-convex decision regions and to a wide class of noise densities (unimodal and SIRP noise processes). In particular, the SER is shown to be a convex function of the SNR for any noise with monotonically-decreasing power density (e.g. SIRP, Laplacian, Weibull, power-exponential or AWGN noise in low dimensions). In higher dimensions, this property holds in the high SNR regime, for which the boundary has been explicitly given. The latter is such that any code that respects the sphere hardening condition of the channel coding Theorem also meets the high SNR condition so that all such codes have convex error rates (SER, PEP, BER). Fading is shown to be a convexity-preserving process, including any linear combining, and is never good in low dimensions under a SIRP noise.

All the applications discussed earlier in \cite{Loyka-10} (e.g. optimization of a spatial multiplexing system, optimum power/time sharing for a jammer and transmitter, optimal unitary precoding for an OFDM system) for the AWGN channel also hold under the general SIRP or unimodal noise and a convex-center decoder, based on the convexity properties established here.

\section{Appendix}
\label{sec Appendix}

\subsection{Proof of Theorem \ref{thm SER SNR new}}

First, we transform the Cartesian noise density $f_{\xi}(\bf{x})$ into the spherical coordinates $(p,\btheta)$, where $p$ represents the normalized noise instant power $|\bf{x}|^2/\sigma_0^2$, and $\btheta = \{\theta_1,..,\theta_{n-1}\}$ are the angles (see \cite{Anderson}\cite{Goldman} for more on spherical coordinates):
\begin{align}
\label{eqAa-1}
f_{p,\btheta}(p,\btheta) &= f_{\btheta}(\btheta) f_p(p), \\ \notag
-\pi/2 \le \theta_i &\le \pi/2, i=1...n-2, -\pi \le \theta_{n-1} \le \pi
\end{align}
where $f_{\btheta}(\btheta)$ and $f_p(p)$ are the angular and normalized noise power densities,
\begin{eqnarray}
\label{eqAa-2}
f_{\btheta}(\btheta) &=& \Gamma(n/2) \pi^{-n/2} \cos^{n-2}\theta_{n-2} \cos^{n-3}\theta_{n-3} .. \cos \theta_1 \\
\label{eqAa-3}
f_p(p) &=& \frac{p^{n/2-1} e^{-p/2}}{2^{n/2} \Gamma(n/2)}
\end{eqnarray}
where $\Gamma(\cdot)$ is Gamma function. Using this, the probability of correct decision $P_{ci}$ can be expressed as
\begin{equation}
\label{eqAa-4}
P_{ci}(\gamma) = \int_{D\btheta} f_{\btheta}(\btheta) \int_{0}^{\gamma R_i^2(\btheta)} f_p(p) d{p} d\btheta
\end{equation}
where $D \btheta$ is the range of angles in \eqref{eqAa-1} and $R_i(\btheta)$ is the boundary of the normalized decision region (corresponding to $\gamma=1/\sigma_0^2=1$). One can now obtain the second derivative in $\gamma$:
\begin{equation}
\label{eqAa-5}
P_{ci}(\gamma)'' = \int_{D\btheta} f_{\btheta}(\btheta) f_p'(\gamma R_i^2(\btheta))  R_i^4(\btheta) d\btheta
\end{equation}
where
\begin{equation}
\label{eqAa-6}
f_p'(p) = \frac{(n-2-p) p^{n/2-2} e^{-p/2}}{2^{n/2+1} \Gamma(n/2)}
\end{equation}
so that $f_p'(\gamma R_i^2(\btheta)) \le 0$ if $\gamma R_i^2(\btheta) \ge n-2$. When the latter condition holds for any $\btheta$, i.e. when $\gamma \ge (n-2)/d_{\min,i}^2$, then the integrand in \eqref{eqAa-5} is non-positive, since $f_{\btheta}(\btheta)R_i^4(\btheta) \ge 0$, and \eqref{eq3a-1} follows. Reversing the inequalities, one obtains \eqref{eq3a-2}. To prove that the bound in \eqref{eq3a-1} cannot be further improved in general (i.e. without further assumptions on the constellation geometry), consider the case when all $\Omega_i$ are spheres of the same radius $r = d_{\min}=d_{\max} < \sqrt{(n-2)/\gamma}$, so that
$\gamma < (n-2)/d_{\min}^2=(n-2)/d_{\max}^2$ and therefore  $P_e(\gamma)'' < 0$, i.e. \eqref{eq3a-1} is necessary for the convexity of the SER in general. The bound in \eqref{eq3a-2} can be handled in the same way. The case of identical spherical decision regions follows in a straightforward way. As a side remark, we note that this proof is a significant simplification over those of Theorems 1 and 2 in \cite{Loyka-10}. \ \QED

\subsection{Proof of Theorem \ref{thm SER SNR general}}
\label{sec Ab}

In the case of generic noise density $f(p,\btheta)$, i.e. when \eqref{eqAa-1} does not hold, \eqref{eqAa-4} and \eqref{eqAa-5} are generalized to
\begin{equation}
\label{eqAb-1}
P_{ci}(\gamma) = \int_{D\btheta} \int_{0}^{\gamma R_i^2(\btheta)} f(p,\btheta) d{p} d\btheta
\end{equation}
\begin{equation}
\label{eqAb-2}
P_{ci}(\gamma)'' = \int_{D\btheta} f_p'(\gamma R_i^2(\btheta),\btheta)  R_i^4(\btheta) d\btheta
\end{equation}
Now observe that $P_{ci}(\gamma)'' \le 0$ if $f_p'(\gamma R_i^2(\btheta),\btheta) \le 0$ $\forall \btheta$, which holds if $f_p'(p,\btheta) \le 0$ for $\gamma d_{\min}^2 \le p \le \gamma d_{\max}^2,$ and all $\btheta$, so that \eqref{eq3a-5} follows. \eqref{eq3a-6} follows by observing that its condition insures that the condition in \eqref{eq3a-5} is satisfied for $\gamma_1 \le \gamma \le \gamma_2$. \ \QED

\subsection{Proof of Theorem \ref{thm SIRP SER convexity}}
\label{sec Ac}

Using Theorem \ref{thm SIRP representation}, the noise power density can be written as
\begin{equation}
\label{eqAc-1}
f(p) = \int_0^{\infty} f(p|\sigma) f(\sigma) d\sigma
\end{equation}
where $f(p|\sigma) $ is the conditional power density,
\begin{equation}
\label{eqAc-2}
f(p|\sigma) =  \frac{1}{2^{n/2} \Gamma(n/2)\sigma^2} \left(\frac{p}{\sigma^2}\right)^{n/2-1} \exp \left\{-\frac{p}{2\sigma^2}\right\}
\end{equation}
\eqref{eqAb-1} and \eqref{eqAb-2} can be written as
\begin{equation}
\label{eqAc-3}
P_{ci}(p_s) = \int_{D\btheta} \int_{\sigma_1}^{\sigma_2} f_{\btheta}(\btheta) f(\sigma) \int_{0}^{p_s R_i^2(\btheta)} f(p|\sigma)d{p} d\sigma d\btheta
\end{equation}
\begin{equation}
\label{eqAc-4}
P_{ci}(p_s)'' = \int_{D\btheta} \int_{\sigma_1}^{\sigma_2} f_{\btheta}(\btheta) f(\sigma) f_p'(p_s R_i^2(\btheta)|\sigma) R_i^4(\btheta) d\sigma d\btheta
\end{equation}
where now $R_i(\btheta)$ is the boundary of normalized decision region corresponding to $p_s=1$, and $f_{\btheta}(\btheta)$ is as in \eqref{eqAa-2}. Observe from \eqref{eqAc-2} that
\begin{eqnarray}
\label{eqAc-5}
f_p'(p|\sigma) &\ge& 0 \ \mbox{if} \ p \le (n-2)\sigma^2 \\
\label{eqAc-6}
f_p'(p|\sigma) &\le& 0 \ \mbox{if} \ p \ge (n-2)\sigma^2
\end{eqnarray}
i.e. $f(p|\sigma)$ is unimodal in $p$ with $p^* = (n-2)\sigma^2$.  Using these properties in \eqref{eqAc-4} and observing that $d_{\min,i} \le R_i(\btheta) \le d_{\max,i}$, $\sigma_1 \le \sigma \le \sigma_2$, the integrand in \eqref{eqAc-4} is non-negative/non-positive if
\begin{eqnarray}
\label{eqAc-7}
p_s d_{\max,i}^2 &\le (n-2)\sigma_1^2 \\
p_s d_{\min,i}^2 &\ge (n-2)\sigma_2^2
\end{eqnarray}
so that $P_{ci}(p_s)'' \ge 0$ or $P_{ci}(p_s)'' \le 0$ from which \eqref{eq3b-3} and \eqref{eq3b-4} follow.

\subsection{Proof of Theorem \ref{thm SER A new}}
\label{sec Ad}
Using \eqref{eqAa-2}-\eqref{eqAa-5}, $P_{ci}(A)$ can be written as
\begin{equation}
P_{ci}(A) = \int_{D\btheta} f_{\btheta}(\btheta)   \int_{0}^{A R_i(\btheta)} f(r) d{r}d\btheta
\end{equation}
where $f_r(r)$ is the normalized noise amplitude density,
\begin{eqnarray}
f(r) &=& \frac{r^{n-1} e^{-r^2/2}}{2^{n/2-1} \Gamma(n/2)}
\end{eqnarray}
and $R_i(\btheta)$ is the decision region boundary of the normalized ($A=1$) constellation. Therefore,
\begin{equation}
\label{eqAd-3}
P_{ci}(A)'' = \int_{D\btheta} f_{\btheta}(\btheta) f_r'(A R_i(\btheta))  R_i^2(\btheta) d\btheta
\end{equation}
where
\[
f_r'(r) = \frac{r^{n-2} e^{-r^2/2}(n-1-r^2)}{2^{n/2-1}\Gamma(n/2)}
\]
so that the integrand in \eqref{eqAd-3} is non-negative when $A R_i(\btheta) \le \sqrt{n-1} \ \forall \btheta$, which is case when $A d_{\max,i} \le \sqrt{n-1}$ so that $P_{ei}(A)'' \le 0$. The opposite case is similar. An odd number of inflection points follows from the continuity argument. The fact that the high/low SNR bounds cannot be tightened in general is clear from Corollary \ref{cor SER A new}.

\subsection{Proof of Theorem \ref{thm SER A general}}
\label{sec Ae}
In this case, \eqref{eqAb-1} and \eqref{eqAb-2} become
\begin{equation}
\label{eqAe-1}
P_{ci}(A) = \int_{D\btheta} \int_{0}^{A R_i(\btheta)} f(r,\btheta) d{r} d\btheta
\end{equation}
\begin{equation}
\label{eqAe-2}
P_{ci}(A)'' = \int_{D\btheta} f_r'(A R_i(\btheta),\btheta)  R_i^2(\btheta) d\btheta
\end{equation}
and the argument of section \ref{sec Ab} goes thorough with the substitution $\gamma \rightarrow A$, $p \rightarrow r$.

\subsection{Proof of Theorem \ref{thm SIRP conv A}}
\label{sec Af}
In this case, \eqref{eqAc-1}-\eqref{eqAc-4} are modified to
\begin{equation}
\label{eqAf-1}
f(r) = \int_0^{\infty} f(r|\sigma) f(\sigma) d\sigma
\end{equation}
\begin{equation}
\label{eqAf-2}
f(r|\sigma) =  \frac{1}{2^{n/2-1} \Gamma(n/2)\sigma} \left(\frac{r}{\sigma}\right)^{n-1} \exp \left\{-\frac{r^2}{2\sigma^2}\right\}
\end{equation}
\begin{equation}
\label{eqAf-3}
P_{ci}(A) = \int_{D\btheta} \int_{\sigma_1}^{\sigma_2} f_{\btheta}(\btheta) f(\sigma) \int_{0}^{A R_i(\btheta)} f(r|\sigma)d{r} d\sigma d\btheta
\end{equation}
\begin{equation}
\label{eqAf-4}
P_{ci}(A)'' = \int_{D\btheta} \int_{\sigma_1}^{\sigma_2} f_{\btheta}(\btheta) f(\sigma) f_r'(A R_i(\btheta)|\sigma) R_i^2(\btheta) d\sigma d\btheta
\end{equation}
and the rest of the proof in section \ref{sec Ac} goes through with appropriate modifications.

\subsection{Proof of Theorem \ref{thm BER SIRP}}
\label{sec Ag}
The SIRP noise Cartesian PDF can be expressed as:
\begin{equation}
f(\bx) = \frac{1}{(2\pi)^{n/2}} \int_{\sigma_1}^{\sigma_2}\sigma^{-n} g(\gamma,\sigma,\bx) f(\sigma) d\sigma
\end{equation}
where
\begin{equation}
g(\gamma,\sigma,\bx) =  \gamma^{n/2} \exp \left\{-\frac{\gamma |\bx|^2}{2 \sigma^2}\right\}
\end{equation}
and the SNR $\gamma = 1/\sigma_0^2$, so that
\begin{equation}
\Pr \{ {\bf s}_i \to {\bf s}_j \}''_{\gamma} =\int_{\Omega_j} f(\bx)''_{\gamma} \ d\bx
\end{equation}
where $\Omega_j$ is the decision region for ${\bf s}_j$ while the reference frame is centered on ${\bf s}_i$. Now observe that
\begin{equation}
g(\gamma,\sigma,\bx)''_{\gamma} = \frac{1}{4} \gamma^{n/2} (w-w_1)(w-w_2) \exp (-\gamma w/2)
\end{equation}
where $w=|\bx|^2 \sigma^{-2}$ and $w_{1(2)} = (n \pm \sqrt{2n})/\gamma$, so that
\begin{equation}
g(\gamma,\sigma,\bx)''_{\gamma} \ge  0 \ \mbox{if} \ w \ge w_1
\end{equation}
and hence
\begin{equation}
\label{eqAg-6}
f(\bx)''_{\gamma} = \frac{1}{(2\pi)^{n/2}} \int_{\sigma_1}^{\sigma_2}\sigma^{-n} g(\gamma,\sigma,\bx)''_{\gamma} f(\sigma) d\sigma \ge 0 \ \forall \bx \in \Omega_j \ \mbox{if} \ d_{\min,i}^2 \ge w_1 \sigma_2^2
\end{equation}
from which it follows that
\begin{equation}
\Pr \{ {\bf s}_i \to {\bf s}_j \}''_{\gamma} \ge 0 \ \forall i,j \ \mbox{if} \ d_{\min}^2 \ge w_1 \sigma_2^2
\end{equation}
and, from \eqref{eq2-9}, $\textsf{BER}''_{\gamma} \ge 0$ under the same condition, so that \eqref{eq4-6} follows, where $\gamma = 1/\sigma_0^2$.

\subsection{Proof of Theorem \ref{thm SER Pn new}}
\label{sec Ah}
The proof follows along the same lines as that of Theorem \ref{thm SER SNR new}, with the substitution $\gamma = 1/\sigma_0^2$. In particular, \eqref{eqAa-5} is modified to
\begin{equation}
\label{eqAh-1}
P_{ci}(\sigma_0^2)'' = \sigma_0^{-6} \int_{D\btheta} f_{\btheta}(\btheta) \{f_p'(\sigma_0^{-2} R_i^2(\btheta))  R_i^2(\btheta) + 2 \sigma_0^2 f_p(\sigma_0^{-2} R_i^2(\btheta))\} d\btheta
\end{equation}
where the derivative of the noise power density $f_p'(p)$ is as in \eqref{eqAa-6}. The integrand in \eqref{eqAh-1} is non-negative when $d_{\max,i}^2 \le (n+2)\sigma_0^2$ and non-positive when $d_{\min,i}^2 \ge (n+2)\sigma_0^2$ from which \eqref{eq5-4} and \eqref{eq5-5} follow. The inflection points follow from the continuity argument. The fact that the bounds cannot be improved is clear from the equal spherical decision regions of Corollary \ref{cor SER Pn new}.

\subsection{Proof of Theorem \ref{thm SER Pn SIRP}}
\label{sec Ai}
The proof is essentially a generalized version of the previous proof. Under the stated conditions, $P_{ci}(\sigma_0^2)$ can be written as
\begin{equation}
\label{eqAi-1}
P_{ci}(\sigma_0^2) = \int_{D\btheta} f_{\btheta}(\btheta) \int_{\sigma_1}^{\sigma_2} f(\sigma) \int_{0}^{\sigma_0^{-2} R_i^2(\btheta)} f(p|\sigma)d{p} d\sigma d\btheta
\end{equation}
where $R_i(\btheta)$ is the boundary of normalized decision region corresponding to $\sigma_0=1$ and $f(p|\sigma)$ is as in \eqref{eqAc-2}, so that its second derivative in $\sigma_0^2$ is
\begin{equation}
\label{eqAi-2}
P_{ci}(\sigma_0^2)'' = \sigma_0^{-6} \int_{D\btheta} f_{\btheta}(\btheta) \int_{\sigma_1}^{\sigma_2}  f(\sigma) \{f_p'(\sigma_0^{-2} R_i^2(\btheta)|\sigma)  R_i^2(\btheta) + 2 \sigma_0^2 f_p(\sigma_0^{-2} R_i^2(\btheta)|\sigma)\} d\sigma d\btheta
\end{equation}
and the integrand in \eqref{eqAi-2} is non-negative when $d_{\max,i}^2 \le (n+2)(\sigma_1 \sigma_0)^2$ and non-positive when $d_{\min,i}^2 \ge (n+2)(\sigma_2 \sigma_0)^2$ from which the Theorem follows.

\subsection{Proof of Theorem \ref{thm PEP/BER conv Pn}}
\label{sec Aj}

The PEP can be expressed as
\begin{equation}
\Pr \{ {\bf s}_i \to {\bf s}_j \} =\int_{\Omega_j} f(\bx)d\bx
\end{equation}
where  $\Omega_j$ is the decision region for ${\bf s}_j$ while the reference frame is centered on ${\bf s}_i$, and the AWGN density $f(\bx)$ is as in \eqref{eq2-2}. The second derivative in the noise power $\sigma_0^2$ is
\begin{equation}
\Pr \{ {\bf s}_i \to {\bf s}_j \}''_{\sigma_0^2} =\int_{\Omega_j} f(\bx)''_{\sigma_0^2} \ d\bx
\end{equation}
where $f(\bx)''_{\sigma_0^2}$ can be expressed as
\begin{equation}
f(\bx)''_{\sigma_0^2} = \frac{\sigma_0^8}{4} \left( \frac{\sigma_0^2}{2\pi}\right)^{n/2} \exp \left\{-\frac{|\bx|^2}{2 \sigma_0^2}\right\} g(|\bx|^2)
\end{equation}
and
\begin{equation}
g(x) = (x - b_n \sigma_0^2)(x - a_n \sigma_0^2)
\end{equation}
where $a_n, b_n$ are as in Theorem \ref{thm SER Pn old}. Clearly, $g(x) \ge 0$ if $x \ge b_n \sigma_0^2$ or $x \le a_n \sigma_0^2$, so that $f(\bx)''_{\sigma_0^2} \ge 0 \ \forall \bx \in \Omega _j$ if $d_{\min,i}^2 \ge b_n \sigma_0^2$ or $a_n \sigma _0^2 \ge (d_{ij} +d_{\max,j} )^2$ and the result follows. Figures 5 and 6 illustrate these two cases.

\begin{figure}[t]
\label{fig_5}
\centerline{\includegraphics[width=3.in]{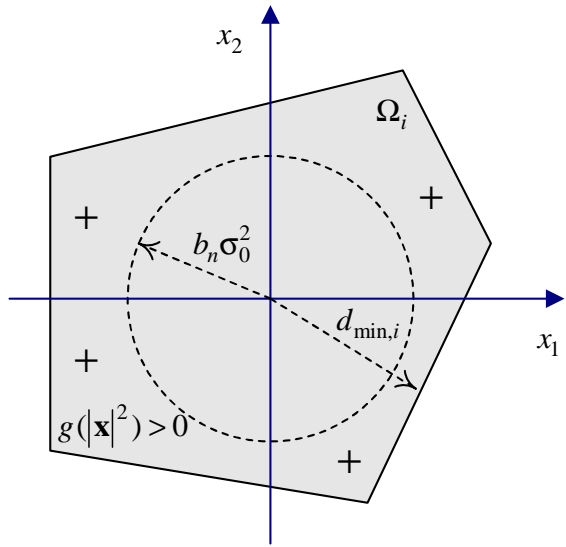}}
\caption{Two-dimensional illustration of the problem geometry for the case $d_{\min,i}^2 \ge b_n \sigma_0^2$. The decision region $\Omega _i$ is shaded. $g(|\bx|^2)$ has a sign as indicated by ``+'' and ``-``.}
\end{figure}

\begin{figure}[t]
\label{fig_6}
\centerline{\includegraphics[width=3.in]{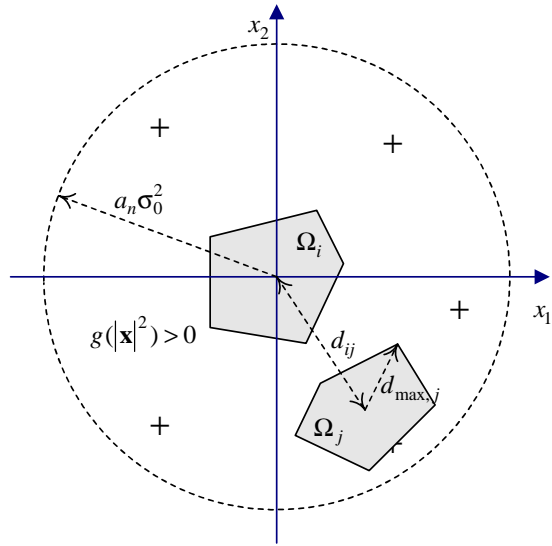}}
\caption{Two-dimensional illustration of the problem geometry for the case $a_n \sigma _0^2 \ge (d_{ij} +d_{\max,j} )^2$.}
\end{figure}

\end{document}